\documentclass[a4paper,UKenglish,cleveref, autoref, thm-restate]{lipics-v2021}

\author{Timo Gervens}{RWTH Aachen University, 
	Germany}{gervens@informatik.rwth-aachen.de}{https://orcid.org/0000-0002-1224-9853}{European Union (ERC, SymSim, 101054974)}

\author{Martin Grohe}{RWTH Aachen University, 
	Germany}{grohe@informatik.rwth-aachen.de}{https://orcid.org/0000-0002-0292-9142}{European Union (ERC, SymSim, 101054974)}

\author{Louis Härtel}{RWTH Aachen University, 
	Germany}{haertel@informatik.rwth-aachen.de}{https://orcid.org/0009-0004-3446-5874}{European Union (ERC, SymSim, 101054974)}
    
\author{Philipp da Silva Fonseca}{RWTH Aachen University, 
	Germany}{}{https://orcid.org/0009-0006-4883-7274}{}

\authorrunning{T. Gervens, M. Grohe, L. Härtel, and P. da Silva Fonseca} 

\Copyright{Timo Gervens, Martin Grohe, Louis Härtel, Philipp da Silva Fonseca} 

\title{The Complexity of Homomorphism Reconstruction Revisited}

\keywords{graph homomorphism, nexp-complete, counting complexity}

\ccsdesc[500]{Theory of computation~Graph algorithms analysis}
\ccsdesc[300]{Theory of computation~Parameterized complexity and exact 
algorithms}
\ccsdesc[500]{Mathematics of computing~Graph theory}

\relatedversion{}

\nolinenumbers

\hideLIPIcs

\usepackage{mathtools}
\usepackage{todonotes}    \usepackage{multicol}
\usepackage{csquotes}
\usepackage{amsmath}
\usepackage[ruled]{algorithm2e}
\usepackage{tikz}
\usetikzlibrary{calc,
	cd,
	decorations.pathreplacing,
	decorations.text,
	calligraphy,
	positioning}
\tikzstyle{vertex}=[circle, draw=gray, fill=gray,inner sep=0pt, minimum size=2mm]
\tikzstyle{smallvertex}=[circle, draw=gray, fill=gray,inner sep=0pt, minimum size=1.2mm]
\tikzstyle{kneser}=[circle, draw=gray, minimum size=0.8cm]
\tikzstyle{cycle}=[circle, draw=gray, minimum size=1.5cm]

\newsavebox{\fminibox}
\newlength{\fminilength}
\newenvironment{fminipage}[1][\linewidth]{
	\setlength{\fminilength}{#1-2\fboxsep-2\fboxrule}\begin{lrbox}{\fminibox}\begin{minipage}{\fminilength}}{
	\end{minipage}\end{lrbox}\noindent\fbox{\usebox{\fminibox}}
}
\usepackage{xspace}
\usepackage{stmaryrd}

\newcommand{\abs}[1]{\left\lvert #1 \right\rvert}

\newcommand{\val}[1]{\left\llbracket #1 \right\rrbracket}
\renewcommand\phi\varphi
\renewcommand\epsilon\varepsilon

\DeclareMathOperator{\sub}{sub}

\newcommand{\cons}[1]{\mathcal{#1}}
\newcommand{\bin}[1]{\overline{#1}}

\newcommand{\CC}{\mathcal{C}}

\newcommand{\CF}{\mathcal{F}}

\newcommand{\CI}{\mathcal{I}}
\newcommand{\CL}{\mathcal{L}}

\newcommand{\NN}{\mathbb{N}}

\newcommand{\coNP}{\mathsf{coNP}}

\newcommand{\NEXP}{\textsf{\upshape NEXP}}
\newcommand{\NP}{\textsf{NP}}

\newcommand{\SP}{\#\mathsf{P}}

\newcommand{\N}{\ensuremath{\mathbb{N}}}

\DeclareMathOperator{\h}{h}
\DeclareMathOperator{\s}{s}
\DeclareMathOperator{\dpf}{DP}

\newcommand{\Fnm}{\CF_\equiv}
\newcommand{\cupp}{\; \cup \;}

\newcommand{\Cir}{\mathsf{C}}
\newcommand{\Bit}{\mathsf{B}}
\newcommand{\Inp}{\mathsf{In}}

\newcommand{\PPP}{\mathsf{P}}
\newcommand{\QQQ}{\mathsf{Q}}
\newcommand{\XXX}{\mathsf{X}}
\newcommand{\SSS}{\mathsf{S}}
\newcommand{\TTT}{\mathsf{T}}
\newcommand{\AAA}{\mathsf{A}}
\newcommand{\BBB}{\mathsf{B}}

    \newcommand{\dproblem}[3]{
	\begin{center}
		\begin{fminipage}[.95\linewidth]
			\textup{\textsc{#1}\begin{description}
				\item[Input:] #2
				\item[Question:] #3
			\end{description}}
\end{fminipage}
\end{center}}
\newcommand{\pproblem}[4]{
	\begin{center}
		\begin{fminipage}[.95\linewidth]
			\textup{\textsc{#1}\begin{description}
				\item[Input:] #2
				\item[Parameter:] #3
				\item[Question:] #4
			\end{description}}
\end{fminipage}
\end{center}}

\newcommand{\HomR}{\textsc{HomRec}}
\newcommand{\UnHomR}{\textsc{UnHomRec}}

\newcommand{\DecProblemA}{\textup{\textsc{A}}\xspace}
\newcommand{\DecProblemB}{\textup{\textsc{B}}\xspace}

\newcommand{\nonthreecolouring}{\textup{\textsc{Non-$3$-Colouring}}\xspace}

\newcommand{\NonThreeColouringExtension}{\textup{\textsc{$2$-Round-$3$-Colouring}}\xspace}
\newcommand{\CircuitSAT}{\textup{\textsc{CircuitSat}}\xspace}
\newcommand{\SuccinctClique}{\textup{\textsc{SuccinctClique}}\xspace}

\renewcommand\theta\vartheta
\renewcommand\phi\varphi
\renewcommand\epsilon\varepsilon

\tikzset{
  smallvertex/.style={circle,draw,fill,inner sep=1.2pt},
  snode/.style={black,draw,circle,fill,inner sep = 0mm,minimum
    height=1mm}
  }

\newcommand{\smallKone}{\begin{tikzpicture}[baseline=-3pt]
		\node[smallvertex] (R) {};\end{tikzpicture}}
\newcommand{\smallKtwo}{{\begin{tikzpicture}[baseline=-3pt]
		\node[smallvertex] (R) {};\node[smallvertex, left = 0.2cm of R] (M) {};\draw (M) edge (R);
\end{tikzpicture}}}

\newcommand{\smallKthree}{
\begin{tikzpicture}[baseline=.5pt]
		\node[smallvertex] (R) {};\node[smallvertex, left = 0.1cm of R] (M) {};\node[smallvertex, fill = white, draw = white, above = 0.1cm of R] (Tghost) {};\node[smallvertex, fill = white, draw = white, above = 0.1cm of M] (Sghost) {};\node[smallvertex] (T) at ($(Tghost)!0.5!(Sghost)$){};
		
		\draw (M) edge (R);
		\draw (T) edge (R);
		\draw (M) edge (T);
\end{tikzpicture}}

\newcommand{\smallPtwo}{\begin{tikzpicture}[baseline=.5pt]
		\node[smallvertex] (R) {};\node[smallvertex, left = 0.1cm of R] (M) {};\node[smallvertex, fill = white, draw = white, above = 0.1cm of R] (Tghost) {};\node[smallvertex, fill = white, draw = white, above = 0.1cm of M] (Sghost) {};\node[smallvertex] (T) at ($(Tghost)!0.5!(Sghost)$){};
		
		\draw (M) edge (R);
		\draw (M) edge (T);
\end{tikzpicture}}

\newcommand{\smallAlpha}{\begin{tikzpicture}[baseline=.5pt]
		\node[smallvertex, label={above:\small$\bot$}] (A) {};\node[smallvertex, label={above:\small$\alpha$}, right = 0.2cm of A] (B) {};\node[smallvertex, label={above:\small$\alpha$}, right = 0.2cm of B] (C) {};\node[smallvertex, label={above:\small$\top$}, right = 0.2cm of C] (D) {};

		\draw (A) edge (B);
		\draw (B) edge (C);
        \draw (C) edge (D);
\end{tikzpicture}}
\newcommand{\smallColour}[1]{\begin{tikzpicture}[baseline=.5pt]
  \node[smallvertex, label={above:\small$\SSS$}] (A) {};\node[smallvertex, label={above:\small$\XXX$}, right = 0.2cm of A] (B) {};\node (dots) [right=0.2cm of B] {$\cdots$}; \node[smallvertex, label={above:\small$\XXX$}, right = 0.2cm of dots] (D) {};\node[smallvertex, label={above:\small$\TTT$}, right = 0.2cm of D] (E) {};

\draw (A) -- (B);
  \draw (B) -- (dots);
  \draw (dots) -- (D);
  \draw (D) -- (E);

\draw[decorate,decoration={brace,amplitude=4pt}]
    (B.north) -- (D.north) node[midway,above=6pt] {$#1$};
\end{tikzpicture}}
\newcommand{\smallColourCycle}[3]{\begin{tikzpicture}[baseline=.5pt]
  \node[smallvertex, label={above:\small$\SSS$}] (S) {};\node[smallvertex, label={above:\small$\XXX$}, right = 0.2cm of S] (B) {};\node (dots) [right=0.2cm of B] {$\cdots$}; \node[smallvertex, label={above:\small$\XXX$}, right = 0.2cm of dots] (D) {};\node[smallvertex, label={above:\small$\TTT$}, right = 0.2cm of D] (E) {};

  \node[smallvertex, label={right:\small$\AAA$}, below = 0.2cm of dots] (A) {};

\node[right = 0.1cm of B] (I) {};\node[left = 0.1cm of D] (J) {};

\draw (S) -- (B);
  \draw (B) -- (dots);
  \draw (dots) -- (D);
  \draw (D) -- (E);
  \draw (A) -- (I);
  \draw (A) -- (J);
    
\draw[decorate,decoration={brace,amplitude=4pt}]
    (B.north) -- (D.north) node[midway,above=6pt] {$#1$};
\end{tikzpicture}}
\newcommand{\smallAlphaCfour}[1]{\begin{tikzpicture}[baseline=.5pt]
\node[smallvertex, label={left:\small$\Cir_v$}] (A) {};
    \node[smallvertex, right=0.2cm of A, label={right:\small$\Bit_{#1}$}] (B) {};
    \node[smallvertex, above=0.2cm of B, label={right:\small$\alpha$}] (C) {};
    \node[smallvertex, above=0.2cm of A, label={left:\small$\alpha$}] (D) {};
    
\draw (A) -- (B);
    \draw (B) -- (C);
    \draw (C) -- (D);
    \draw (D) -- (A);
\end{tikzpicture}}
\newcommand{\smallAlphaAnd}[3]{\begin{tikzpicture}[baseline=.5pt]
		\node[smallvertex, label={above:\small$\bot$}] (A) {};\node[smallvertex, label={above:\small$\alpha$}, right = 0.2cm of A] (0) {};\node[smallvertex, label={above:\small$\alpha$}, right = 0.2cm of 0] (1) {};\node[smallvertex, label={above:\small$\top$}, right = 0.2cm of 1] (D) {};

		\draw (A) edge (0);
		\draw (0) edge (1);
        \draw (1) edge (D);

        \node[smallvertex, below left = 0.15cm of 0, label={below:\small$\Cir_u$}] (R) {};\node[smallvertex, right = 0.3cm of R, label={below:\small$\Cir_w$}] (M) {};\node[smallvertex, right = 0.3cm of M, label={below:\small$\Cir_v$}] (T) {};

\draw (R) -- (#1);
        \draw (M) -- (#2);
        \draw (T) -- (#3);
\end{tikzpicture}}
\newcommand{\smallAlphaOr}[3]{\begin{tikzpicture}[baseline=.5pt]
		\node[smallvertex, label={above:\small$\bot$}] (A) {};\node[smallvertex, label={above:\small$\alpha$}, right = 0.2cm of A] (0) {};\node[smallvertex, label={above:\small$\alpha$}, right = 0.2cm of 0] (1) {};\node[smallvertex, label={above:\small$\top$}, right = 0.2cm of 1] (D) {};

		\draw (A) edge (0);
		\draw (0) edge (1);
        \draw (1) edge (D);

        \node[smallvertex, below left = 0.15cm of 0, label={below:\small$\Cir_u$}] (R) {};\node[smallvertex, right = 0.3cm of R, label={below:\small$\Cir_w$}] (M) {};\node[smallvertex, right = 0.3cm of M, label={below:\small$\Cir_v$}] (T) {};

\draw (R) -- (#1);
        \draw (M) -- (#2);
        \draw (T) -- (#3);
\end{tikzpicture}}
\newcommand{\smallAlphaNeg}[2]{\begin{tikzpicture}[baseline=.5pt]
		\node[smallvertex, label={above:\small$\bot$}] (A) {};\node[smallvertex, label={above:\small$\alpha$}, right = 0.2cm of A] (0) {};\node[smallvertex, label={above:\small$\alpha$}, right = 0.2cm of 0] (1) {};\node[smallvertex, label={above:\small$\top$}, right = 0.2cm of 1] (D) {};

		\draw (A) edge (0);
		\draw (0) edge (1);
        \draw (1) edge (D);

        \node[smallvertex, below left = 0.15cm of 0, label={below:\small$\Cir_w$}] (R) {};\node[smallvertex, right = 0.6cm of R, label={below:\small$\Cir_v$}] (M) {};

\draw (R) -- (#1);
        \draw (M) -- (#2);
\end{tikzpicture}}

\newcommand{\smallAlphaAndX}[3]{\begin{tikzpicture}[baseline=.5pt]
		\node[smallvertex, label={above:\small$\bot$}] (A) {};\node[smallvertex, label={above:\small$\alpha$}, right = 0.2cm of A] (0) {};\node[smallvertex, label={above:\small$\alpha$}, right = 0.2cm of 0] (1) {};\node[smallvertex, label={above:\small$\top$}, right = 0.2cm of 1] (D) {};

		\draw (A) -- (0);
		\draw (0) -- (1);
        \draw (1) -- (D);

        \node[smallvertex, below left = 0.15cm of 0, label={below:\small$\Cir_u$}] (R) {};\node[smallvertex, right = 0.3cm of R, label={below:\small$\Cir_w$}] (M) {};\node[smallvertex, right = 0.3cm of M, label={below:\small$\Cir_v$}] (T) {};

\node[right = 0.2cm of R] (G) {};

\draw (R) -- (#1);
        \draw (M) -- (#2);
        \draw (T) -- (#3);

\node[smallvertex, above=0.6cm of G, label={above:\small$\PPP$}] (CG) {};\draw (CG) -- (R);
        \draw (CG) -- (M);
        \draw (CG) -- (T);
\end{tikzpicture}}

\newcommand{\smallAlphaOrX}[3]{\begin{tikzpicture}[baseline=.5pt]
		\node[smallvertex, label={above:\small$\bot$}] (A) {};\node[smallvertex, label={above:\small$\alpha$}, right = 0.2cm of A] (0) {};\node[smallvertex, label={above:\small$\alpha$}, right = 0.2cm of 0] (1) {};\node[smallvertex, label={above:\small$\top$}, right = 0.2cm of 1] (D) {};

		\draw (A) -- (0);
		\draw (0) -- (1);
        \draw (1) -- (D);

        \node[smallvertex, below left = 0.15cm of 0, label={below:\small$\Cir_u$}] (R) {};\node[smallvertex, right = 0.3cm of R, label={below:\small$\Cir_w$}] (M) {};\node[smallvertex, right = 0.3cm of M, label={below:\small$\Cir_v$}] (T) {};

\node[right = 0.2cm of R] (G) {};

\draw (R) -- (#1);
        \draw (M) -- (#2);
        \draw (T) -- (#3);

\node[smallvertex, above=0.6cm of G, label={above:\small$\PPP$}] (CG) {};\draw (CG) -- (R);
        \draw (CG) -- (M);
        \draw (CG) -- (T);
\end{tikzpicture}}

\newcommand{\smallAlphaNegX}[2]{\begin{tikzpicture}[baseline=.5pt]
		\node[smallvertex, label={above:\small$\bot$}] (A) {};\node[smallvertex, label={above:\small$\alpha$}, right = 0.2cm of A] (0) {};\node[smallvertex, label={above:\small$\alpha$}, right = 0.2cm of 0] (1) {};\node[smallvertex, label={above:\small$\top$}, right = 0.2cm of 1] (D) {};

		\draw (A) -- (0);
		\draw (0) -- (1);
        \draw (1) -- (D);

        \node[smallvertex, below left = 0.15cm of 0, label={below:\small$\Cir_w$}] (R) {};\node[smallvertex, right = 0.6cm of R, label={below:\small$\Cir_v$}] (M) {};

\node[right = 0.2cm of R] (G) {};

\draw (R) -- (#1);
        \draw (M) -- (#2);

\node[smallvertex, above=0.6cm of G, label={above:\small$\PPP$}] (CG) {};\draw (CG) -- (R);
        \draw (CG) -- (M);
\end{tikzpicture}}

\newcommand{\smallPtwoOut}{\begin{tikzpicture}[baseline=.5pt]
		\node[smallvertex, label={above:\small$\Cir_o$}] (R) {};\node[smallvertex, right = 0.2cm of R, label={above:\small$\alpha$}] (M) {};\node[smallvertex, right = 0.2cm of M, label={above:\small$\top$}] (T) {};

		\draw (M) edge (R);
		\draw (M) edge (T);
\end{tikzpicture}}

\newcommand{\smallPtwoLabelled}[3]{\begin{tikzpicture}[baseline=.5pt]
    \node[smallvertex, label={above:\small$#1$}] (R) {};\node[smallvertex, right = 0.2cm of R, label={above:\small$#2$}] (M) {};\node[smallvertex, right = 0.2cm of M, label={above:\small$#3$}] (T) {};

    \draw (M) edge (R);
    \draw (M) edge (T);
\end{tikzpicture}}
\newcommand{\diamondGadgetA}{\begingroup
  \begin{tikzpicture}[baseline=0pt,
    smallvertex/.style={circle,draw,fill,inner sep=1.2pt},
    every label/.style={font=\footnotesize}
  ]
\node[smallvertex,label=above:{\small $\PPP$}] (Q) at (0,1.2) {};
    \node[smallvertex,label=below:{\small $\QQQ$}] (P) at (0,-1.2) {};

\node[smallvertex,label=left:{\small$\Cir_{v_0}$}] (a1) at (-1.6,0.45) {};
    \node[smallvertex,label=left:{\small $\Bit_{0}$}] (b1) at (-1.6,-0.45) {};
    \draw (Q) -- (a1) -- (b1) -- (P);

\node[smallvertex,label=left:{\small$\Cir_{v_1}$}] (a2) at (0,0.45) {};
    \node[smallvertex,label=left:{\small$\Bit_{1}$}] (b2) at (0,-0.45) {};
    \draw (Q) -- (a2) -- (b2) -- (P);

\node at (0.9,0.45) {\small $\dots$};
    \node at (0.9,-0.45) {\small $\dots$};

\node[smallvertex,label=right:{\small$\Cir_{v_{n-1}}$}] (a3) at (1.8,0.45) {};
    \node[smallvertex,label=right:{\small$\Bit_{n-1}$}] (b3) at (1.8,-0.45) {};
    \draw (Q) -- (a3) -- (b3) -- (P);
  \end{tikzpicture}
  \endgroup
}
\newcommand{\diamondGadgetB}{\begingroup
  \begin{tikzpicture}[baseline=0pt,
    smallvertex/.style={circle,draw,fill,inner sep=1.2pt},
    every label/.style={font=\footnotesize}
  ]
\node[smallvertex,label=above:{\small $\PPP$}] (Q) at (0,1.2) {};
    \node[smallvertex,label=below:{\small $\QQQ$}] (P) at (0,-1.2) {};

\node[smallvertex,label=left:{\small$\Cir_{v_n}$}] (a1) at (-1.6,0.45) {};
    \node[smallvertex,label=left:{\small$\Bit_{0}$}] (b1) at (-1.6,-0.45) {};
    \draw (Q) -- (a1) -- (b1) -- (P);

\node[smallvertex,label=left:{\small$\Cir_{v_{n+1}}$}] (a2) at (0,0.45) {};
    \node[smallvertex,label=left:{\small$\Bit_{1}$}] (b2) at (0,-0.45) {};
    \draw (Q) -- (a2) -- (b2) -- (P);

\node at (0.9,0.45) {\small $\dots$};
    \node at (0.9,-0.45) {\small $\dots$};

\node[smallvertex,label=right:{\small$\Cir_{v_{2n-1}}$}] (a3) at (1.8,0.45) {};
    \node[smallvertex,label=right:{\small$\Bit_{n-1}$}] (b3) at (1.8,-0.45) {};
    \draw (Q) -- (a3) -- (b3) -- (P);
  \end{tikzpicture}
  \endgroup
}
\newcommand{\nexpCircuitGraph}{\begingroup
  \begin{tikzpicture}[x=0.75pt,y=0.75pt,yscale=-1,xscale=1]
\draw  [dash pattern={on 4.5pt off 4.5pt}] (130.0, 40.0) -- 
	(233.0, 150.0) -- (30.0, 150.0) -- cycle ;
\node[smallvertex] at (55.0,68.0) {};
\draw    (55.0, 68.0) -- (30.0, 150.0) ;
\node[smallvertex] at (130.0,40.0) {};
\draw    (55.0, 68.0) -- (132.0, 40.0) ;
\draw    (55.0, 68.0) -- (66.0, 144.0) ;
\draw    (55.0, 68.0) -- (103.0, 124.0) ;
\draw    (55.0, 68.0) -- (129.0, 83.0) ;
\draw    (55.0, 68.0) -- (133.0, 60.0) ;
\draw    (55.0, 68.0) -- (119.0, 104.0) ;
\draw    (55.0, 68.0) -- (85.0, 135.0) ;
\draw    (55.0, 68.0) -- (49.0, 148.0) ;
\draw  [dash pattern={on 4.5pt off 4.5pt}] (30.0, 150.0) -- (233.0, 150.0) -- 
	(233.0, 184.0) -- (30.0, 184.0) -- cycle ;
\draw  [dash pattern={on 4.5pt off 4.5pt}]  (131.0, 150.0) -- 
	(131.0, 160.0) -- (131.0, 184.0) ;
\node[smallvertex] at (80.0,265.0) {};
\node[smallvertex] at (30.0,211.0) {};
\node[smallvertex] at (128.0,211.0) {};
\node[smallvertex] at (30.0,184.0) {};
\node[smallvertex] at (128.0,184.0) {};
\node[smallvertex] at (134.0,184.0) {};
\draw    (30.0, 211.0) -- (30.0, 184.0) ;
\draw    (128.0, 184.0) -- (128.0, 211.0) ;
\draw    (30.0, 211.0) -- (80.0, 265.0) ;
\draw    (80.0, 265.0) -- (128.0, 211.0) ;
\draw  [dash pattern={on 4.5pt off 4.5pt}] (375.0, 40.0) -- 
	(476.0, 150.0) -- (273.0, 150.0) -- cycle ;
\node[smallvertex] at (298.0,68.0) {};
\draw    (298.0, 68.0) -- (273.0, 150.0) ;
\node[smallvertex] at (375.0,40.0) {};
\draw    (298.0, 68.0) -- (375.0, 40.0) ;
\draw    (298.0, 68.0) -- (309.0, 145.0) ;
\draw    (298.0, 68.0) -- (346.0, 124.0) ;
\draw    (298.0, 68.0) -- (372.0, 83.0) ;
\draw    (298.0, 68.0) -- (376.0, 60.0) ;
\draw    (298.0, 68.0) -- (362.0, 104.0) ;
\draw    (298.0, 68.0) -- (328.0, 136.0) ;
\draw    (298.0, 68.0) -- (292.0, 149.0) ;
\draw  [dash pattern={on 4.5pt off 4.5pt}] (273.0, 150.0) -- 
	(476.0, 150.0) -- (476.0, 184.0) -- (273.0, 184.0) -- cycle ;
\draw  [dash pattern={on 4.5pt off 4.5pt}]  (374.0, 150.0) -- 
	(374.0, 160.0) -- (374.0, 184.0) ;
\node[smallvertex] at (427.0,265.0) {};
\node[smallvertex] at (377.0,211.0) {};
\node[smallvertex] at (475.0,211.0) {};
\node[smallvertex] at (273.0,184.0) {};
\node[smallvertex] at (371.0,184.0) {};
\node[smallvertex] at (377.0,184.0) {};
\draw    (377.0, 211.0) -- (377.0, 184.0) ;
\draw    (475.0, 184.0) -- (475.0, 211.0) ;
\draw    (377.0, 211.0) -- (427.0, 265.0) ;
\draw    (427.0, 265.0) -- (475.0, 211.0) ;
\node[smallvertex] at (233.0,184.0) {};
\node[smallvertex] at (475.0,184.0) {};
\node[smallvertex] at (254.0,265.0) {};
\node[smallvertex] at (204.0,211.0) {};
\node[smallvertex] at (302.0,211.0) {};
\draw    (204.0, 211.0) -- (134.0, 184.0) ;
\draw    (233.0, 184.0) -- (302.0, 211.0) ;
\draw    (204.0, 211.0) -- (254.0, 265.0) ;
\draw    (254.0, 265.0) -- (302.0, 211.0) ;
\draw    (273.0, 184.0) -- (204.0, 211.0) ;
\draw    (371.0, 184.0) -- (302.0, 211.0) ;
\draw  [dash pattern={on 4.5pt off 4.5pt}] (175.0, 14.0) -- (265.0, 14.0) 
	-- (265.0, 54.0) -- (175.0, 54.0) -- cycle ;
\node[smallvertex] at (190.0,40.0) {};
\node[smallvertex] at (210.0,40.0) {};
\node[smallvertex] at (230.0,40.0) {};
\node[smallvertex] at (250.0,40.0) {};
\draw    (190.0, 40.0) -- (210.0, 40.0) ;
\draw    (210.0, 40.0) -- (230.0, 40.0) ;
\draw    (230.0, 40.0) -- (250.0, 40.0) ;
\draw    (210.0, 40.0) -- (176.0, 65.0) ;
\draw    (210.0, 40.0) -- (182.0, 71.0) ;
\draw    (210.0, 40.0) -- (190.0, 75.0) ;
\draw    (210.0, 40.0) -- (200.0, 77.0) ;
\draw    (210.0, 40.0) -- (210.0, 78.0) ;
\draw    (210.0, 40.0) -- (220.0, 77.0) ;
\draw    (210.0, 40.0) -- (230.0, 74.0) ;
\draw    (210.0, 40.0) -- (239.0, 70.0) ;
\draw    (210.0, 40.0) -- (244.0, 64.0) ;
\draw    (230.0, 40.0) -- (196.0, 65.0) ;
\draw    (230.0, 40.0) -- (202.0, 71.0) ;
\draw    (230.0, 40.0) -- (210.0, 75.0) ;
\draw    (230.0, 40.0) -- (220.0, 77.0) ;
\draw    (230.0, 40.0) -- (230.0, 78.0) ;
\draw    (230.0, 40.0) -- (240.0, 77.0) ;
\draw    (230.0, 40.0) -- (250.0, 74.0) ;
\draw    (230.0, 40.0) -- (259.0, 70.0) ;
\draw    (230.0, 40.0) -- (264.0, 64.0) ;
\draw    (80.0, 265.0) -- (48.0, 211.0) ;
\draw    (80.0, 265.0) -- (110.0, 212.0) ;
\draw    (48.0, 211.0) -- (49.0, 184.0) ;
\draw    (110.0, 212.0) -- (110.0, 185.0) ;
\draw    (427.0, 265.0) -- (395.0, 211.0) ;
\draw    (427.0, 265.0) -- (457.0, 212.0) ;
\draw    (395.0, 211.0) -- (396.0, 184.0) ;
\draw    (457.0, 212.0) -- (457.0, 185.0) ;
\draw    (254.0, 265.0) -- (222.0, 211.0) ;
\draw    (254.0, 265.0) -- (284.0, 212.0) ;
\draw    (222.0, 211.0) -- (153.0, 184.0) ;
\draw    (284.0, 212.0) -- (214.0, 185.0) ;
\draw    (291.0, 184.0) -- (222.0, 211.0) ;
\draw    (353.0, 185.0) -- (284.0, 212.0) ;
	
\draw (121.0, 108.0) node [anchor=north west][inner sep=0.75pt]  
	[font=\Large] [align=left] {$\displaystyle C_{G}$};
\draw (38.0, 53.0) node [anchor=north west][inner sep=0.75pt]  
	[font=\normalsize] [align=left] {$\PPP$};
\draw (121.0, 19.0) node [anchor=north west][inner sep=0.75pt]  
	[font=\Large] [align=left] {$\Cir_o$};
\draw (51.0, 254.0) node [anchor=north west][inner sep=0.75pt]  
	[font=\normalsize] [align=left] {$\QQQ$};
\draw (9.0, 214.0) node [anchor=north west][inner sep=0.75pt]   
	[align=left] {$\Bit\displaystyle _{0}$};
\draw (130.0, 214.0) node [anchor=north west][inner sep=0.75pt]   
	[align=left] {$\Bit\displaystyle _{n-1}$};
\draw (32.0, 162.0) node [anchor=north west][inner sep=0.75pt]   
	[align=left] {$\Cir\displaystyle _{v_0}$};
\draw (169.0, 156.0) node [anchor=north west][inner sep=0.75pt]  
	[font=\Large] [align=left] {$\displaystyle \overline{j}$};
\draw (135.0, 162.0) node [anchor=north west][inner sep=0.75pt]   
	[align=left] {$\Cir\displaystyle _{v_n}$};
\draw (192.0, 162.0) node [anchor=north west][inner sep=0.75pt]   
	[align=left] {$\Cir\displaystyle _{v_{2n-1}}$};
\draw (364.0, 109.0) node [anchor=north west][inner sep=0.75pt]  
	[font=\Large] [align=left] {$\displaystyle C_{G}$};
\draw (281.0, 53.0) node [anchor=north west][inner sep=0.75pt]  
	[font=\normalsize] [align=left] {$\PPP$};
\draw (366.0, 19.0) node [anchor=north west][inner sep=0.75pt]  
	[font=\Large] [align=left] {$\Cir_o$};
\draw (309.0, 155.0) node [anchor=north west][inner sep=0.75pt]  
	[font=\Large] [align=left] {$\displaystyle \overline{i}$};
\draw (398.0, 254.0) node [anchor=north west][inner sep=0.75pt]  
	[font=\normalsize] [align=left] {$\QQQ$};
\draw (356.0, 214.0) node [anchor=north west][inner sep=0.75pt]   
	[align=left] {$\Bit\displaystyle _{0}$};
\draw (477.0, 214.0) node [anchor=north west][inner sep=0.75pt]   
	[align=left] {$\Bit\displaystyle _{n-1}$};
\draw (275.0, 162.0) node [anchor=north west][inner sep=0.75pt]   
	[align=left] {$\Cir\displaystyle _{v_0}$};
\draw (332.0, 162.0) node [anchor=north west][inner sep=0.75pt]   
	[align=left] {$\Cir\displaystyle _{v_{n-1}}$};
\draw (412.0, 156.0) node [anchor=north west][inner sep=0.75pt]  
	[font=\Large] [align=left] {$\displaystyle \overline{j}$};
\draw (378.0, 162.0) node [anchor=north west][inner sep=0.75pt]   
	[align=left] {$\Cir\displaystyle _{v_{n}}$};
\draw (435.0, 162.0) node [anchor=north west][inner sep=0.75pt]   
	[align=left] {$\Cir\displaystyle _{v_{2n-1}}$};
\draw (225.0, 254.0) node [anchor=north west][inner sep=0.75pt]  
	[font=\normalsize] [align=left] {$\QQQ$};
\draw (183.0, 214.0) node [anchor=north west][inner sep=0.75pt]   
	[align=left] {$\Bit\displaystyle _{0}$};
\draw (304.0, 214.0) node [anchor=north west][inner sep=0.75pt]   
	[align=left] {$\Bit\displaystyle _{n-1}$};
\draw (519, 42.0) node [anchor=north west][inner sep=0.75pt]  
	[font=\huge] [align=left] 
	{\begin{minipage}[lt]{48.28pt}\setlength\topsep{0pt}
			\begin{center}
				$\displaystyle \binom{S}{2}$
				\ \\
				\ \\
				\ \\
				\ \\
				$\displaystyle S$
			\end{center}
			
	\end{minipage}};
\draw (184.0, 19.0) node [anchor=north west][inner sep=0.75pt]   
	[align=left] {$\displaystyle \bot $};
\draw (243.0, 19.0) node [anchor=north west][inner sep=0.75pt]   
	[align=left] {$\displaystyle \top $};
\draw (204.0, 20.0) node [anchor=north west][inner sep=0.75pt]   
	[align=left] {$\displaystyle \alpha $};
\draw (224.0, 20.0) node [anchor=north west][inner sep=0.75pt]   
	[align=left] {$\displaystyle \alpha $};
\draw (66.0, 155.0) node [anchor=north west][inner sep=0.75pt]  
	[font=\Large] [align=left] {$\displaystyle \overline{i}$};
\draw (89.0, 162.0) node [anchor=north west][inner sep=0.75pt]   
	[align=left] {$\Cir\displaystyle _{v_{n-1}}$};
\draw (229.0, 112.0) node [anchor=north west][inner sep=0.75pt]  
	[font=\Huge] [align=left] {$\displaystyle \dotsc $};
\draw (127.0,262.0) node [anchor=north west][inner sep=0.75pt]  
	[font=\Huge] [align=left] {$\displaystyle \dotsc $};
\draw (301.0,262.0) node [anchor=north west][inner sep=0.75pt]  
	[font=\Huge] [align=left] {$\displaystyle \dotsc $};
\draw (63.0,216.0) node [anchor=north west][inner sep=0.75pt]  
	[font=\LARGE] [align=left] {$\displaystyle \dotsc $};
\draw (238.0,216.0) node [anchor=north west][inner sep=0.75pt]  
	[font=\LARGE] [align=left] {$\displaystyle \dotsc $};
\draw (411.0,216.0) node [anchor=north west][inner sep=0.75pt]  
	[font=\LARGE] [align=left] {$\displaystyle \dotsc $};
	
    \end{tikzpicture}

  \endgroup
}

\begin{document}

\maketitle
 
\begin{abstract}
 We revisit the algorithmic problem of reconstructing a graph from
 homomorphism counts that has first been studied in (Böker et al.,
 STACS 2024): given graphs $F_1,\ldots,F_k$ and counts
 $m_1,\ldots,m_k$, decide if there is a graph $G$ such that the
 number of homomorphisms from $F_i$ to $G$ is $m_i$, for all $i$. We
 prove that the problem is NEXP-hard if the counts $m_i$
 are specified in binary and $\Sigma_2^p$-complete if they are in unary.

 Furthermore, as a positive result, we show that the unary version
 can be solved in polynomial time if the constraint graphs are stars
 of bounded size. 
\end{abstract}

\section{Introduction}

Homomorphism counts reveal valuable information about graphs. A
well-known theorem, due to Lovász \cite{Lovasz67}, states that a graph
$G$ can be characterised up to isomorphism by the homomorphism counts
$\hom(F,G)$ from all graphs $F$ into $G$.  In recent years, it
has become increasingly clear that many natural properties of graphs are characterised by homomorphism counts from
restricted graph classes \cite{DellGR18,Dvorak10,Grohe20b,GroheRS25,KarRS025,MancinskaR20,Seppelt24,seppelt_logical_2023}. For example, the homomorphism
counts from all cycles characterise the spectrum of a
graph. Homomorphism counts from trees characterise a graph up to
fractional isomorphism~\cite{Dvorak10,Tinhofer91}, and homomorphism counts from planar
graphs characterise a graph up to quantum
isomorphism~\cite{MancinskaR20}. It has been suggested in
\cite{Grohe20c} that homomorphism counts can be used to 
define vector embeddings of graphs in a systematic and principled way: for every class $\CF$ of graphs we
define a mapping
$G\mapsto\big(\hom(F,G)\big)_{F\in\CF}\in\mathbb R^{\CF}$ mapping each graph $G$ into the (potentially infinite dimensional) vector space $\mathbb R^{\CF}$, the \emph{latent space} of the embedding. We call such
embeddings \emph{homomorphism embeddings}. Vector embeddings of graphs are
mainly of interest in a machine-learning context, because typical ML
algorithms operate on vector representations of the data. Homomorphism
embeddings have been shown to be useful in practice \cite{JinBCL24,WolfOPG23a},
and they are related to other vector embeddings of graphs, such as
those obtained from graph kernels or computed by graph neural networks (see
\cite{Grohe20c}). 

Now suppose we have a vector embedding and carry out computations in
the latent space. For example, we may run an optimisation algorithm 
and find a point in the latent space that is
optimal in some sense. How do we get back a graph from this point?
For homomorphism embeddings, this is the question of how we reconstruct a graph from
homomorphism counts. This is the central problem we study in
this paper; let us state it formally.

\dproblem{\textsc{HomRec}}
{Pairs $(F_1, m_1), \dots, (F_k,m_k)$, where $F_1,\ldots,F_k$
  are graphs and $m_1,\ldots,m_k\in\mathbb N$ (in binary encoding).}
{Is there a graph $G$ such that
	$\hom(F_i, G) = m_i$ for every $i \in [k]$?}

Note that here we are looking at an embedding into a finite-dimensional latent space; the class $\CF$ from the discussion above is $\{F_1,\ldots,F_k\}$ here.We call the pairs $(F_i,m_i)$ \emph{constraints}, the graphs $F_i$
\emph{constraint graphs}, and the numbers $m_i$ the \emph{counts}.       
This problem was first studied systematically
by Böker, Härtel, Runde, Seppelt, and Standke~\cite{BokerHRSS24}. The main results of that paper were
various hardness and a few tractability results for restricted
versions of the problem. For example, the problem is \textsf{NP}-hard
even for constraint graphs of bounded tree width.
The general problem was shown to be in \textsf{NEXP} and hard for the
complexity class $\textsf{NP}^{\textsf{\#P}}$. The exact complexity of
the problem was left open.
Our first theorem settles this question.

\begin{theorem}\label{thm:nexp-complete}
  {\upshape \HomR}\ is \NEXP-complete.
\end{theorem}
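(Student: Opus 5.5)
Membership in \NEXP\ is already known from \cite{BokerHRSS24}. For completeness, the argument runs as follows. If some graph $G$ satisfies all constraints, then $G$ has no isolated vertices that matter beyond a count, and $\hom(K_1,G)=\abs{V(G)}$ is bounded by $\max_i m_i$. More precisely, if some $F_i$ has at least one vertex, then $\hom(F_i,G)\ge \abs{V(G)}$ whenever $F_i$ has an edgeless component, and otherwise we may bound the number of non-isolated vertices. Hence a witness graph of size at most exponential in the input length exists. We guess such a $G$ and compute each $\hom(F_i,G)$ by brute force in time $\abs{V(G)}^{\abs{V(F_i)}}$, which is exponential. The real content is therefore the hardness proof.

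For \NEXP-hardness I would reduce from \SuccinctClique. An instance consists of a Boolean circuit $C$ with $2n$ inputs, which encodes a graph $H_C$ on vertex set $\{0,1\}^n$ with $ij\in E(H_C)$ iff $C(\bar i,\bar j)=1$, together with a number $s$ given in binary. The question is whether $H_C$ has a clique of size $s$; this problem is \NEXP-complete.

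The plan is to design constraints that force any solution $G$ to look as follows. There are designated "label" vertices $\PPP,\QQQ,\bot,\alpha,\top$, circuit-gate vertices $\Cir_v$, and bit vertices $\Bit_0,\dots,\Bit_{n-1}$, all pinned down using pendant or rigid gadgets with prescribed homomorphism counts (the $\smallAlpha$ path and the diamond gadgets suggested by the macros). In addition there is an exponentially large set $S$ of "clique vertices". Each element of $S$ encodes an $n$-bit vertex name via its adjacency to the $\Bit_k$, and each element of $\binom S2$ carries a copy of an evaluation of $C$ on the concatenated names. Counts of small graphs such as stars and paths into the labelled core force, first, that every gate-assignment copy is consistent with the gate semantics (AND, OR and NOT gadgets $\smallAlphaAndX{}{}{}$, etc.), and second, that the output gate $\Cir_o$ is attached to $\top$ through $\smallPtwoOut$, i.e.\ it evaluates to $1$. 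The number $s$ enters as a binary count: for example, $\hom$ of a suitable star or path counts $\abs S$ and $\binom{\abs S}{2}$ exactly. Distinctness of names in $S$ would be enforced by a count that sums over pairs of collisions and must equal a fixed value.

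Key steps, in order:
\begin{enumerate}
\item Rigidify the core: choose gadgets whose counts force an isomorphic copy of the labelled skeleton.
\item Show that the homomorphism counts of constraint graphs rooted at the skeleton translate into local equalities, namely degree conditions and gate truth tables. For this I would use the fact that $\hom$ of a graph with a distinguished labelled vertex decomposes as a sum over neighbourhoods, and that nonnegativity turns "sum of defects $=0$" into "all defects $=0$".
\item Prove completeness, which is routine: from a clique, build $G$ directly.
\item Prove soundness: from any $G$ satisfying all constraints, extract $s$ distinct names that are pairwise accepted by $C$.
\end{enumerate}

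The main obstacle is soundness. An arbitrary $G$ may contain spurious structure, such as extra vertices adjacent to labels, homomorphisms folding gadgets onto one another, or non-injective images. All of these must be ruled out purely by counts. My first attempt would be to choose the constraint graphs so that each count is a sum of nonnegative terms that equals its minimum precisely in the intended configuration. For instance, I would count paths through $\PPP$ and $\QQQ$ and compare the total to the exact intended number, using a Cauchy–Schwarz-style argument (hom counts of $P_3$ versus $K_2$ from a fixed vertex) to force uniform degrees.
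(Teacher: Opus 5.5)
Your outline follows the paper at the top level: reduce from \SuccinctClique, encode gate values by adjacency to one of the two $\alpha$-vertices, replicate the circuit $N=\binom{s}{2}$ times, and force uniform degrees via the equality case of Cauchy--Schwarz. But it is a plan rather than a proof, and the step you call the main obstacle is a genuine gap. The theorem concerns \emph{uncoloured} graphs. Your skeleton consists of polynomially many ``label'' vertices $\PPP,\QQQ,\Cir_v,\Bit_k,\dots$ (one per gate) that are to be ``pinned down'' by counts. Asking that each count be ``minimised in the intended configuration'' does not show that homomorphisms cannot fold one label gadget onto another or map it into the bulk of $G$; you need an actual mechanism for this.

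The paper's mechanism has three parts:
\begin{itemize}
\item It first proves hardness for \emph{coloured} constraint graphs (Lemma~\ref{lem:nexphard}), where homomorphisms respect the skeleton for free.
\item Lemma~\ref{lemma:m-to-4} reduces to four colours. Every vertex of colour $\AAA_i$ is attached to the $i$-th inner vertex of a single path $\SSS,\XXX,\dots,\XXX,\TTT$. This path is pinned by $\cons{F}_\XXX$, and the constraints $(C_{m,i,j},0)$ forbid two $\XXX$-neighbours, so that $\hom(F,G)=\hom(F^*,G^*)$. This is close in spirit to your designated label vertices, but made rigorous.
\item Lemma~\ref{lemma:4-to-1} removes the last four colours using pairwise homomorphically incomparable Kneser graphs, with edges replaced by the bidirectional gadget of B\"oker et al.
\end{itemize}
Even in the coloured setting you must say how the $N$ copies are kept apart. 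In the paper, $\PPP$ and each $\Cir_v$ are classes of size $N$. The constraints $\Fnm(\PPP,\Cir_v,N,1)$ and $\Fnm(\Cir_v,\PPP,N,1)$ make the $\PPP$-stars partition the gate vertices into $N$ copies. The hub occurs in every truth-table constraint, so those constraints act inside one copy.

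Second, linking copies to pairs of clique vertices is not settled by counting $|S|$ and $\binom{|S|}{2}$, nor by making names distinct. Suppose you fix all class sizes, force each copy to read one whole name per half, and force all outputs to $1$. For the second of these, the paper uses $F_{\text{str}_1},F_{\text{str}_2}$ with count $N$, which is the maximum and is attained only if all bit paths from a hub end in one $\QQQ$-vertex. This still allows all $N$ copies to read the \emph{same} pair of adjacent names, certifying a single edge of $H_C$. What is needed is injectivity of copies to pairs, and a collision count of the kind you mention can give it. Let $\mu(a,b)$ be the number of copies reading $\QQQ$-vertices $(a,b)$.
\begin{itemize}
\item Take a constraint graph with two hubs whose first-half bit paths end in one common $\QQQ$-vertex and whose second-half bit paths end in a second one. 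Its hom count is $\sum_{a,b}\mu(a,b)^2$. Setting it to $N=\sum_{a,b}\mu(a,b)$ forces $\mu\le 1$.
\item Take the variant in which the second hub reads $(b,a)$, and set its count to $0$. This excludes reversed and diagonal pairs.
\end{itemize}
Hence the $N$ copies read all $\binom{s}{2}$ pairs, and distinctness of names follows from looplessness of $H_C$. The paper's own closing appeal to $C_G(\bin{i},\bin{i})=0$ does not exclude the degenerate configuration either, since no constraint bounds the number of $\Cir_{v_i}$-neighbours of a $\Bit_i$-vertex.

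Finally, your membership sketch reaches the right conclusion by a wrong route. $K_1$ need not be a constraint graph, and $(K_3,0)$ has solutions with arbitrarily many non-isolated vertices. The correct argument is that deleting all vertices outside the images of homomorphisms $F_i\to G$ preserves every count and leaves at most $\sum_i m_i|V(F_i)|$ vertices.
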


The proof of this result is a reduction from the \textsc{SuccinctClique}
problem, known to be \textsf{\upshape NEXP}-complete from
\cite{PapadimitriouY86}.

So far, following~\cite{BokerHRSS24}, we have always assumed the counts $m_i$ in the constraints to
be encoded in binary. However, it may be more natural to encode the
numbers in unary. Let us call the resulting version of the problem
\textsc{UnHomRec}. The reason that \textsc{UnHomRec} may be the more
natural problem is the observation \cite[Lemma~6]{BokerHRSS24} that if an instance $(F_1, m_1), \dots, (F_k,m_k)$ has a solution
satisfying all constraints, then it has a solution of order at most
$\sum_{i=1}^k m_i|F_i|$, simply because we can delete all vertices
not in the image of the homomorphisms required to satisfy the
constraints. If the $m_i$ are encoded in unary, this is polynomial in
the input size. This puts \textsc{UnHomRec} directly into the
complexity class $\textsf{NP}^{\textsf{\#P}}$: we guess a solution and then verify
that all constraints are satisfied using a
$\textsf{\#P}$-oracle. The exact complexity of
\textsc{UnHomRec} is lower, though.

\begin{theorem}\label{thm:sigma2-complete}
  {\upshape \UnHomR}\ is $\Sigma_2^p$-complete.
\end{theorem}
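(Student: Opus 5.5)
The proof has two parts: membership in $\Sigma_2^p$, which follows quickly from the small-solution lemma, and $\Sigma_2^p$-hardness, which is where the work lies.

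\emph{Membership.} Let $N=\sum_{i=1}^k m_i|F_i|$. By \cite[Lemma~6]{BokerHRSS24}, if the instance has a solution, it has one of order at most $N$. Since the $m_i$ are in unary, $N$ is polynomial in the input size. The point is that the $\SP$-oracle is unnecessary, because each count is small. We have $\hom(F_i,G)=m_i$ if and only if two things hold:
\begin{itemize}
\item there are $m_i$ pairwise distinct maps $h_{i,1},\dots,h_{i,m_i}\colon V(F_i)\to V(G)$ that are homomorphisms;
\item every homomorphism $h\colon F_i\to G$ equals one of them.
\end{itemize}
So the instance is a yes-instance if and only if
\[
\exists G\;\exists (h_{i,j})_{i\in[k],\,j\le m_i}\;\forall (h_i)_{i\in[k]}\colon\ \Phi ,
\]
where $\Phi$ states that $|V(G)|\le N$, that the $h_{i,j}$ are distinct homomorphisms, and that each $h_i$ is either not a homomorphism or lies in $\{h_{i,1},\dots,h_{i,m_i}\}$. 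All quantified objects have polynomial size and $\Phi$ is checkable in polynomial time, so this is a $\Sigma_2^p$ formula.

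\emph{Hardness.} I would reduce from the complement of \NonThreeColouringExtension. In \NonThreeColouringExtension\ we are given a graph $H$ with a set $L$ of degree-one vertices, and we ask whether every $3$-colouring of $L$ extends to a proper $3$-colouring of $H$. This problem is $\Pi_2^p$-complete by Ajtai, Fagin and Stockmeyer, so its complement is $\Sigma_2^p$-complete. The complement asks: is there a colouring $c\colon L\to[3]$ that does \emph{not} extend?

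The target graph $G$ should encode the existential choice $c$. Concretely, the constraints should force $G$ to consist of
\begin{itemize}
\item a triangle with vertices $\XXX_1,\XXX_2,\XXX_3$, and
\item for every $\ell\in L$, a vertex $\SSS_\ell$ that is adjacent to exactly two of the $\XXX_j$, so the missing one is the forbidden or chosen colour.
\end{itemize}
We then attach a constraint graph $F$ that is a copy of $H$ in which each leaf is anchored to its $\SSS_\ell$ by a labelling gadget. The intended effect is that $\hom(F,G)$ equals a fixed, precomputable number plus the number of extensions of $c$. Setting the count for $F$ to that fixed number then expresses ``$c$ has no extension''. This uses only polynomially small counts, so the counts can be written in unary.

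The key steps, in order, are:
\begin{enumerate}
\item Design rigidifying constraints with small counts that pin down $G$ up to isomorphism, apart from the $|L|$ binary or ternary choices. For example:
\begin{itemize}
\item constraints $(K_1,n)$ and $(K_2,2m)$ fix the numbers of vertices and edges;
\item a constraint with $\hom(K_3,G)=6$ forces exactly one triangle;
\item paths or stars of the kind suggested by the gadgets $\smallColour{k}$ and $\smallColourCycle{k}{}{}$, with distinct lengths, give each special vertex a distinguishable ``label''.
\end{itemize}
\item Show that every graph satisfying these constraints has the intended shape, so that it corresponds to some colouring $c$.
\item Compute $\hom(F,G)$ for graphs of this shape. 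Homomorphisms of $F$ that do not respect the labels must contribute a quantity independent of $c$, and the label-respecting ones must correspond bijectively to proper extensions of $c$.
\end{enumerate}

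The main obstacle is the rigidity in step~1 combined with the independence from $c$ in step~3. Homomorphisms may fold the gadgets in unintended ways, and this produces spurious counts that could depend on $c$. My first attempt would be to give the label vertices pairwise different degrees, or attach to them odd cycles of different lengths, and to make $F$ contain a triangle together with these label structures, so that any homomorphism from $F$ is forced onto the intended labels.
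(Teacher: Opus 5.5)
Your membership argument is correct and is the paper's: guess a polynomial-size $G$ together with the $m_i$ homomorphisms, then verify with a $\coNP$ check that there are no others.

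The hardness part has a genuine gap. It stops exactly where the reduction has to happen: you never construct gadgets that force homomorphisms from $F$ to send leaf $\ell$ to $S_\ell$. The tools you suggest cannot do this for homomorphisms:
\begin{itemize}
\item paths and stars are bipartite, so they fold onto any edge;
\item homomorphisms need not preserve degrees;
\item odd cycles map into the triangle $X_1X_2X_3$, and longer odd cycles map onto shorter ones.
\end{itemize}
So none of these gadgets pins anything down.

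The fallback, that non-anchored homomorphisms ``contribute a quantity independent of $c$'', is also unavailable. In your intended graph, $\deg(X_j)=2+|L|-|c^{-1}(j)|$ (plus a constant from gadgets). Hence a leaf that is not pinned, and whose neighbour is mapped to $X_j$, already has a number of images that depends on $c$. Such homomorphisms must be excluded outright, not merely counted.

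Step~1 is also inconsistent as stated. Each $S_\ell$ adjacent to two of $X_1,X_2,X_3$ closes a triangle, so your intended graph has $\hom(K_3,G)=6(|L|+1)$, not $6$.

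The paper avoids all of this by first proving hardness for \emph{labelled} constraint graphs, where the reduction is almost immediate. The constraints are $(K_1,3)$, $(K_2,6)$, one constraint per leaf label forcing that label to occur exactly once, and $(F',0)$, where $F'$ is $H$ with its leaves labelled. Then:
\begin{itemize}
\item the target must be $K_3$;
\item the positions of the labels on the triangle are the colouring $c$, so no auxiliary vertices $S_\ell$ are needed;
\item $\hom(F',G)$ is exactly the number of extensions of $c$, because labelled homomorphisms respect labels by definition.
\end{itemize}
The passage to unlabelled graphs is then delegated to the label-elimination technique of B\"oker et al.~\cite{BokerHRSS24}. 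That technique realises labels by attaching pairwise homomorphically incomparable cores (Kneser graphs), which cannot fold into one another or into the rest of the target. A rigid, mutually incomparable gadget family of this kind is the missing ingredient in your step~1.
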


So interestingly, by switching to a succinct encoding, we move the problem from $\Sigma_2^p$ (rather than $\NP$) to $\NEXP$, which is a bit unusual: typically, succinct encodings of problems cause an exponential jump in complexity, from \NP\ to \NEXP\ \cite{PapadimitriouY86}, but here we only jump from $\Sigma_2^p$ to \NEXP.

Having now settled the exact complexity of the homomorphism
reconstruction problem, both in its unary and binary versions, we set
out to look for tractable special cases.  For the
binary version, B\"oker et al.~\cite{BokerHRSS24} proved that reconstruction can be
\textsf{NP}-hard even for a fixed set of constraint graphs that are all labelled trees.  For the unary version, it is not clear if this can happen as
well. But even for very simple special cases, it is not at all obvious
how we can solve the reconstruction problem efficiently. Consider the case with
two constraints $(\smallKone,n)$,
$(\smallPtwo,m)$, where the constraint graphs are a single vertex
graph and the path of length $2$.  Intuitively, this is the problem of
deciding if there is a graph $G$ with $n$ vertices and $m$ homomorphic
images of a path of length $2$ in time polynomial in $n$ and
$m$. Arguably, the subgraph version of this problem is more natural: given $n$ and $m$,
decide if there is a graph with $n$ vertices and $m$ paths of length $2$
as subgraphs. It is not obvious how to construct such a
graph in time polynomial in $n$ and $m$. (This is the simplest case
that was left open in \cite{BokerHRSS24}.) The following theorem
implies that there is a polynomial-time algorithm.

Recall that a \emph{star} is a tree of
height $1$, that is, a connected graph that has at most one vertex of
degree greater than $1$. Both the $1$-vertex graph and the path of
length $2$ are examples of stars.

\begin{theorem}\label{theo:star}
  The following problem is solvable in
  time $m^{O(\ell^2)}$, where $\ell\coloneqq\max\{|F_i|\mid i\in[k]\}$ and $m=\max\{m_i\mid i\in[k]\}$.
  \dproblem{\textsc{StarHomRec}}{$(F_1, m_1), \dots, (F_k,m_k)$, where $F_1,\ldots,F_k$
    are stars and $m_1,\ldots,m_k\in\mathbb N$ (in unary
    encoding)}{Is there a graph $G$ such that $\hom(F_i,G)=m_i$ for
    all $i\in[k]$?}

\end{theorem}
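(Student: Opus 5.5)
Since every constraint graph is a star $K_{1,s}$ (with $K_{1,0}$ the one-vertex graph), I would first rewrite the constraints as conditions on the degree sequence of $G$. For every graph $G$ we have $\hom(K_{1,s},G)=\sum_{v\in V(G)}\deg(v)^s$. So an instance asks for a graphic multiset of degrees $d_1\ge\dots\ge d_n$ with $\sum_i d_i^{s_j}=m_j$ for all $j$.

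\textbf{Step 1: bounding the search space.} Each $d_i^{s_j}\le m_j$, so all degrees are at most $m$. If some constraint has $s\ge 1$, the number of non-isolated vertices is at most $m$. Isolated vertices contribute only to the $s=0$ constraint, which fixes $n\le m$ if present and can otherwise be padded freely. At most $\ell$ distinct exponents occur. The problem therefore becomes: decide whether there are multiplicities $c_1,\dots,c_m\in\NN$ such that the vector $\bigl(\sum_d c_d d^{s}\bigr)_{s}$ equals the prescribed counts and the sequence with $c_d$ vertices of degree $d$ is graphic.

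\textbf{Step 2: dynamic programming over degrees.} The moment conditions alone are handled by a knapsack-style dynamic program. It processes the degrees $d=m,m-1,\dots,1$ in turn, and its state consists of the partial power sums, each at most $m$, together with the number of vertices used so far. This gives $m^{O(\ell)}$ states.

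\textbf{Step 3: graphicality via Erdős–Gallai.} I would add graphicality through the Erdős–Gallai inequalities
\[\sum_{i\le r}d_i\le r(r-1)+\sum_{i>r}\min(d_i,r),\]
together with evenness of the total degree sum. It suffices to check them for $r$ up to the Durfee number $h=\max\{r: d_r\ge r\}$ and at the ends of blocks of equal degree. Let $d^*$ denote the conjugate sequence, $d^*_r=\#\{i: d_i\ge r\}$. For $r\le h$ the right-hand side equals $r(r-1)+r\,(d^*_r-r)+\sum_{i:\,d_i<r}d_i$. So the inequalities only involve prefix data of $d$ and of $d^*$, i.e.\ of the Ferrers diagram read along its diagonal.

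The plan is to guess the global quantities in advance and verify them at the end. These are the total number $N$ of non-isolated vertices, the total degree sum $E$, and the Durfee number $h$, giving $m^{O(1)}$ guesses. The DP then runs in decreasing degree order. When the current degree $d$ satisfies $d\le r$, the suffix term is simply $E$ minus the prefix degree sum. When $d>r$, the count $d^*_r$ is needed, and it is obtained from the guessed block boundaries. To keep the state polynomial, I would store for the at most $O(\ell)$ "open" thresholds the counts needed for future checks. This should give $m^{O(\ell^2)}$ states: $O(\ell)$ extra counters per power sum.

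\textbf{Main obstacle.} I expect the interplay between the suffix term $\sum_{i>r}\min(d_i,r)$, which depends on vertices not yet processed, and the power-sum constraints to be the hard step. If the diagonal/guessing approach does not keep the state small, I would instead prove a structural lemma in the spirit of Carathéodory's theorem: if any graphic solution exists, then one exists whose degree multiset has only $O(\ell)$ distinct values, or only $O(\ell)$ "irregular" blocks beyond a regular core. The proof would use exchange moves that preserve the $\le\ell$ power sums (Prouhet–Tarry–Escott-type exchanges) and do not break Erdős–Gallai. Such a lemma would reduce the problem to guessing $O(\ell)$ degree values and their multiplicities, i.e.\ $m^{O(\ell)}$ candidates each, and nesting these guesses over the $O(\ell)$ constraints again yields the $m^{O(\ell^2)}$ bound. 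Each candidate is then checked for graphicality in polynomial time, and a witness graph is constructed by Havel–Hakimi.
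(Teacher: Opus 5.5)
\textbf{Step~3 does not work as described, and Step~3 is where the whole difficulty lies.} Your Steps~1 and~2 are fine: star counts are power sums of the degree sequence, degrees and the number of non-isolated vertices are at most $m$, and a knapsack DP handles the moment conditions.

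The problem is the single decreasing pass over degree values $t=m,m-1,\dots,1$. The Erd\H{o}s--Gallai check at a block end $r\le h$ needs two kinds of data that become available at different times:
\begin{itemize}
\item $\sum_{i\le r}d_i$, which is fixed when the block of value $d_r$ closes;
\item $d^*_r$ and $\sum_{i:\,d_i<r}d_i$, which are known only once the pass reaches value $r\le d_r$.
\end{itemize}
Every check that has been opened but not yet resolved must carry its own stored prefix sum. There can be as many of these as there are distinct values among $d_1,\dots,d_h$, which can be of order $\sqrt m$ already for $\ell=2$. Nothing ties this number to $\ell$.

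Consequences for your plan:
\begin{itemize}
\item The claim of ``at most $O(\ell)$ open thresholds'' is false.
\item Obtaining $d^*_r$ ``from the guessed block boundaries'' amounts to guessing essentially the whole sequence.
\item The state space is no longer polynomial.
\end{itemize}
The fallback Carath\'eodory-type lemma is only a hope. You give no exchange argument, and moves preserving several power sums need not preserve the Erd\H{o}s--Gallai inequalities. Generic integer Carath\'eodory bounds give supports that grow with $\log m$, not $O(\ell)$.

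\textbf{What the paper does instead.} The missing idea is a reduction step on degree sequences that is compatible with star counts. A non-increasing positive sequence $(d_1,\dots,d_n)$ with $d_1\le n-1$ is graphic if and only if $(d_2-1,\dots,d_n-1,1,\dots,1)$, with $n-1-d_1$ appended ones, is graphic. Because \emph{every} remaining degree drops by exactly one, Pascal's identity $\binom{d}{j}=\binom{d-1}{j}+\binom{d-1}{j-1}$ expresses the new star counts through the old counts, $d_1$, and the number $x$ of appended ones. Carrying $x$ along and bounding the maximum degree gives a DP on states $(s_0,\dots,s_\ell,m,x)$. Graphicality is then built into the recursion and never has to be checked separately.

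\textbf{How your own idea could be repaired.} Your ``diagonal'' intuition can work with a synchronized pass instead of a decreasing one. First guess $h$ and the number $\lambda_0$ of vertices below the Durfee square. Then for $r\le h$ the inequality is equivalent to $\sum_{i\le r}(d_i-h+1)\le\sum_{j\le r}\lambda_j$, where $\lambda_j=\#\{i>h: d_i\ge j\}$. The inequalities for $r>h$ follow from the one at $r=h$, since $d_{r+1}\le r$ there. So at step $r$ you choose $d_r$ and $\lambda_r$ together. You add $d_r^s$ and $(\lambda_{r-1}-\lambda_r)(r-1)^s$ to the partial power sums, and $\lambda_h h^s$ at the end. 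You then only need to remember $d_r$, $\lambda_r$, the running difference of the two sides, and the partial power sums, including the degree sum for the parity condition.
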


To prove Theorem~\ref{theo:star}, we observe that the homomorphism counts
$\hom(S,G)$ from stars $S$ into a graph $G$ only depend on the degree
sequence of $G$. Based on this observation, we devise a dynamic-programming algorithm  that computes a feasible degree sequence and then reconstructs a graph from this degree sequence using a known
algorithm due to Havel~\cite{Havel1955} and Hakimi \cite{Hakimi1962}.

Noting that the
homomorphism counts from all stars of order at most $\ell$ determine the
subgraph counts for all stars of order at most $\ell$ (see
\cite{CurticapeanDM17,Lovasz12}), as a corollary, we obtain the
corresponding result for subgraph counts. In fact, it will be easier to prove the
version for subgraph counts first and then derive the result for homomorphism counts.

\begin{corollary}\label{cor:star}
  The following problem is solvable in
  time $m^{O(\ell^2)}$, where $\ell\coloneqq\max_{i\in[k]}|F_i|$ and $m=\max_{i\in[k]}m_i$.

  \dproblem{\textsc{StarSubRec}}{$(F_1, m_1), \dots, (F_k,m_k)$, where $F_1,\ldots,F_k$
    are stars and $m_1,\ldots,m_k\in\mathbb N$ (in unary
    encoding)}{Is there a graph $G$ that
    has $m_i$ subgraphs isomorphic to $F_i$, for every $i\in[k]$?}

\end{corollary}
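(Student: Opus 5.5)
Our plan is to reduce the problem to a question about degree sequences and then solve that question by dynamic programming. The key observation is that stars are determined by their number of leaves. Write $S_j$ for the star with $j$ leaves: $S_0$ is the single vertex and $S_1$ is an edge, counted as a star with one leaf. For every graph $G$ with degree sequence $d_1,\dots,d_n$, the number of subgraphs isomorphic to $S_j$ is
\[
\sub(S_j,G)=\sum_{v=1}^n\binom{d_v}{j}\quad (j\ge 2),\qquad \sub(S_1,G)=\tfrac12\sum_v d_v,\qquad \sub(S_0,G)=n.
\]
So every constraint $(F_i,m_i)$ becomes a linear condition on the multiset of degrees. If two constraints use the same star with different counts, we reject immediately. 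The question becomes: is there a \emph{graphic} sequence $d_1\ge\dots\ge d_n$ such that $\sum_v\binom{d_v}{j}=m_i$ for every constrained $j=|F_i|-1$? We must also handle isolated vertices, which matter only through the $S_0$ constraint. If $S_0$ is unconstrained, isolated vertices may be added freely.

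Next we bound the search space. Every vertex of degree $d\ge 1$ contributes at least $1$ to $\sub(S_1,G)$ and $\binom{d}{j}$ to $\sub(S_j,G)$. Let $j^*\ge 1$ be the smallest constrained number of leaves. Then every $d_v$ is at most about $j^*+m$, and the number of vertices of degree at least $j^*$ is at most $m$. The vertices of degree below $j^*\le\ell$ contribute nothing to $\sub(S_j,G)$ for $j\ge j^*$. For them we only need to track how many vertices there are of each degree $0,\dots,j^*-1$, and these numbers can be taken polynomially bounded in $m$, modulo the graphicality issue below. Under this reduction, a degree sequence is described by at most $m$ "large" degrees, each at most $m+\ell$, together with at most $\ell$ multiplicities for the small degrees.

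The dynamic program processes candidate degrees $d=m+\ell,m+\ell-1,\dots,0$ in decreasing order. Its state is the vector of partial sums $\bigl(\sum\binom{d_v}{j}\bigr)_j$ for the constrained $j$, with each coordinate capped at $m$, together with the data needed to test graphicality. For graphicality we use the Erdős–Gallai inequalities: when degrees are added in decreasing order, the inequality at each prefix $r$ depends only on $r$, on the prefix sum, and on $\sum_{v>r}\min(d_v,r)$. We intend to follow the paper's hint and use the Havel–Hakimi characterisation instead. Since degrees are bounded by $m+\ell$ and the number of vertices is polynomial, the state space is at most $m^{O(\ell)}$ per coordinate group, and with up to $\ell$ constrained star sizes this gives the claimed $m^{O(\ell^2)}$ bound. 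Once a feasible sequence is found, we run Havel–Hakimi on it to construct $G$.

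The main obstacle is checking graphicality inside the dynamic program while keeping the state polynomial. The Erdős–Gallai condition couples prefixes with suffixes, so we first guess (by enumeration) the number $r\le m+\ell$ of vertices of large degree, which is the Durfee-type threshold. We then verify that only prefixes of length at most $\max d_v$ need checking, so that the states stay within the target bound. A second subtlety is when $S_1$ is unconstrained: edges are then not bounded directly. We handle it by observing that degrees below $j^*$ can be replaced by a bounded gadget, namely a disjoint union of small complete graphs, once parity and the Erdős–Gallai slack are fixed. Finally, the homomorphism version, Theorem~\ref{theo:star}, follows from the Stirling-number relation $\hom(S_j,G)=\sum_v d_v^{\,j}$. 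Since this is again a fixed linear combination of the sums $\sum_v\binom{d_v}{i}$, the same dynamic program applies.
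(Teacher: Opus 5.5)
\paragraph{The gap: testing graphicality with polynomial state.}
The gap is the step you yourself flag as the main obstacle, and your remedy does not close it. Your DP builds the degree multiset in decreasing order. Its state is the vector of partial sums $\sum_v\binom{d_v}{j}$, plus vertex count and degree sum. The Erd\H{o}s--Gallai inequality at prefix $k$, however, needs $\sum_{v>k}\min(d_v,k)$, which depends on degrees you have not chosen yet. Guessing one threshold $r$ and noting that only $k\le\max_v d_v$ must be checked reduces the \emph{number} of inequalities to $O(m+\ell)$. It does not reduce the information each inequality needs. In your scheme you would have to carry every prefix sum $\sum_{i\le k}d_i$, $k\le r$, until the suffix is known, or guess every suffix term in advance. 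That is a vector of $\Theta(m)$ numbers, so the state becomes exponential. Saying you will use Havel--Hakimi instead does not help as stated: a standard Havel--Hakimi step decrements the $d_1$ next-largest degrees, and which degrees those are is not a function of your aggregated state. So the proposal contains no polynomial-size DP guaranteed to produce only graphic sequences.

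\paragraph{The missing idea and a smaller point.}
What you are missing is the paper's modified Havel--Hakimi step, Lemma~\ref{lem:degree_sequence_step}. It removes $d_1$, decrements \emph{every} other degree by one, and pads with $n-1-d_1$ entries equal to $1$. For positive sequences with $d_1\le n-1$ this preserves graphicality in both directions. Because each remaining degree drops by exactly one, Pascal's identity gives $\sub(S_j,G)=\binom{d_1}{j}+\sub(S_j,G')+\sub(S_{j-1},G')$ (Lemma~\ref{lem:recursive_counts_for_entire_graph_with_slack}), with small corrections for $j\le 2$ that are linear in the padding. Hence the new target counts depend only on the old counts, on $d_1$, and on the accumulated padding $x$. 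Graphicality is then never tested explicitly. It is built into the recursion of Lemma~\ref{lem:problem_recursion}, whose state is just $(s_0,\dots,s_\ell,m,x)$: $m$ is a degree cap enforcing monotonicity, and the number $n_1$ of non-isolated vertices is guessed to discard zeros.

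Your Erd\H{o}s--Gallai route could be repaired differently, but nothing of this kind is in your proposal. One way uses the Durfee-square form $\sum_{i\le k}d_i\le\sum_{i\le k}(c_i-1)$ for $k\le s$, where $c_t=|\{v:d_v\ge t\}|$ and $s$ is the Durfee number. You would choose $d_i$ and $c_i$ together for $i=1,\dots,s$ and store only the running difference; the star counts of the rows below the square are linear in the $c_t$.

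A smaller point is also unproved: that the multiplicities of degrees below $j^*$ can be taken polynomial in $m$. The ``disjoint union of small complete graphs'' remark is not an argument. The claim is easy when $S_0$ is unconstrained: delete every vertex of degree $<j^*$ with no neighbour of degree $\ge j^*$. This changes no constrained count and leaves at most $m+m(m+\ell)$ vertices.
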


We note that for both Theorem~\ref{theo:star} and
Corollary~\ref{cor:star}, we can not only solve the decision problem, but actually construct a graph satisfying the constraints if there
exists one.

\subsection*{Related Work}
Homomorphisms and also homomorphism counts play an important role in many areas of mathematics and computer science and have also been studied in complexity theory. Leading to this paper is the recent work on homomorphism indistinguishability already discussed at the beginning of the introduction. Let us also mention work on the complexity of homomorphism reconstruction, from which we borrow some ideas \cite{boker_complexity_2019,CernyS26,Seppelt24}.

The question of whether we can reconstruct a structure from substructures or substructure counts has a long tradition. Famously, the Ulam-Kelly Reconstruction Conjecture says that a graph of order $n$ is uniquely determined by the multiset of its subgraphs of order $n-1$ \cite{Kelly57,Ulam60} (also see \cite{Babai95,oneil_ulam_1970}).
While the reconstruction conjecture is not algorithmic, there has been some work on the complexity of reconstructing graphs from small patterns \cite{ko_three_1991}. 

Closely related to homomorphism reconstruction is the question of whether a graph can realise certain homomorphism densities, which is studied in extremal graph theory (e.g. \cite{razborov_minimal_2008}). An interesting undecidability result for reconstruction from homomorphism densities has been proved in \cite{hatami_undecidability_2011}. 

In database theory, homomorphisms have played an important role since Chandra and Merlin's \cite{ChandraM77} characterisation of the containment problem for conjunctive queries in terms of homomorphisms. In a similar way, homomorphism counts are related to the containment problem for conjunctive queries under bag semantics \cite{chaudhuri_optimization_1993}. Indeed, this problem is equivalent to the problem of deciding if, for two relational structures $A,B$, there is a structure $X$ such that $\hom(A,X)>\hom(B,X)$. It is a long-standing open problem whether this is decidable (see \cite{kopparty_homomorphism_2010,MarcinkowskiO24,MarcinkowskiO25} for partial results).

\subsection*{Organisation of the Paper}
We prove \cref{thm:nexp-complete} in Section~\ref{sec:binary}, \cref{thm:sigma2-complete} in Section~\ref{sec:unary}, and \cref{theo:star} in Section~\ref{sec:algorithms}. 

\section{Preliminaries}\label{sec:prel}

We write $\mathbb{N} = \{0,1,2,\dots\}$ for the set of natural numbers, and we let $[k,\ell] \coloneqq \{k,\dots,\ell\}\subseteq \NN$ and $[k]\coloneqq[1,k]$.
$\DecProblemA \le_p \DecProblemB$ denotes that
a decision problem $\DecProblemA$ is polynomial-time
many-one reducible to the decision problem $\DecProblemB$.
We denote the vertex set of a \emph{graph} or directed acyclic graph (\emph{dag}) $G$ by $V(G)$ and the edge set by $E(G)$.
For ease of notation, we denote an edge by $uv$ or $vu$.
A \emph{homomorphism} from a graph $F$ to a graph $G$ is a mapping
$h \colon V(F) \to V(G)$ such that $h(uv) \in E(G)$ for every $uv \in E(F)$.
A \emph{($\CC$-vertex-)coloured graph} is a triple $G = (V, E, c)$ where $(V, E)$ is a graph,
the \emph{underlying graph},
and $c \colon V(G) \to \CC$ a function assigning a \emph{colour} from
a set $\CC$ to every vertex of $G$. For $\mathsf{C}\in \CC$, we let $\mathsf{C}^G \coloneqq c^{-1}(\mathsf{C})$ be the \emph{colour class} of $\mathsf{C}$ in $G$.
An \emph{($\CL$-)labelled} graph is defined analogously
with a function $\ell \colon \CL \to V(G)$ assigning a vertex of~$G$
to every \emph{label} from a set of labels $\CL$ instead.
Homomorphisms between coloured graphs and between labelled graphs are then
defined as homomorphisms of the underlying graphs that
respect colours and labels, respectively.

A graph $G'$ is a \emph{subgraph} of a graph $G$, written $G' \subseteq G$,
if $V(G') \subseteq V(G)$ and $E(G') \subseteq E(G)$.
The \emph{subgraph induced by a set} $U \subseteq V(G)$,
written $G[U]$, is the
subgraph of $G$ with vertices $U$ and edges $E(G) \cap U^{2}$.
We write $\hom(F, G)$ for the number of homomorphisms from $F$ to $G$ and
$\sub(F, G)$ for the number of subgraphs $G' \subseteq G$ such that $G' \cong F$.
This notation generalises to coloured and labelled graphs
in the straightforward way.  For a vertex $v$ in an undirected graph $G$, we let $N(v)\coloneqq\{w\in V(G) \mid vw\in E(G)\}$ be the \emph{neighbourhood} of $v$, and we let $\deg(v)\coloneqq\abs{N(v)}$ be the \emph{degree} of $v$. For a vertex $v$ in a dag $G$, we let $N^+(v)\coloneqq\{w\in V(G) \mid vw\in E(G)\}$ be the \emph{out-neighbourhood} of $v$ and $N^-(v)\coloneqq\{w\in V(G) \mid wv\in E(G)\}$ the \emph{in-neighbourhood} of $v$; we let $\deg^+(v)\coloneqq\abs{N^+(v)}$ be the \emph{out-degree} and $\deg^-(v)\coloneqq\abs{N^-(v)}$ the \emph{in-degree}. We call nodes of in-degree $0$ \emph{sources} and nodes of out-degree $0$ \emph{sinks}.

The \emph{degree sequence} of a graph is the sequence of its vertex degrees, ordered decreasingly. A sequence $\vec d=(d_1,\ldots,d_n)$ of nonnegative integers I \emph{graphic} if there is a graph $G$ with degree sequence $\vec d$. For example, the sequence $(3,3,2,2,0)$ is graphic and the sequence $(3,3,1)$ is not. Note that if there is a graph $G$ with degree sequence $(d_1,\ldots,d_n)$ then $|V(G)|=n$ and $|E(G)|=\frac{1}{2}\sum_{i=1}^nd_i$.

A \emph{boolean circuit} $C$ is a dag where nodes are coloured with $\{\Inp, \lor,\land ,\neg\}$. Nodes coloured $\lor$ or $\land$ have in-degree $2$, the $\neg$ nodes in-degree 1, and we call them \emph{gates}. $\Inp$ nodes $x_1,\dots,x_n$ are sources that we call \emph{inputs}, and sinks $y_1,\dots,y_m$ \emph{outputs}. A circuit $C$ computes a function $C : \{0,1\}^n \to \{0,1\}^m$. The size of $C$, denoted by $\lvert C \rvert$, is the number of nodes in it.

\section{\NEXP-Completeness of Binary Homomorphism Reconstruction}
\label{sec:binary}

In this section we prove \cref{thm:nexp-complete}. Whenever we use coloured graphs, denoted $F$ or $G$, we will not explicitly state their colouring function $c:V(F)\to \CC$. The set of colours $\CC$ contains symbols such as $\PPP$, $\alpha$, or $\bot$, and abbreviations that begin with a capital letter such as $\Cir_i$ or $\Bit_i$, whenever we need to parametrise colour classes with an index $i$.

Recall that the \emph{constraints} of a \textsc{HomRec} instance are pairs $(F,n)$ where $F$ is a graph and $n$ a nonnegative integer in binary representation. We denote the concatenation of lists of constraints as union, and we make no distinction between individual constraints and lists of length one. Note that the order in which the constraints appear in a \HomR-instance does not matter.

The missing proofs of \cref{claim:3.1-correct}, \cref{claim:3.2-correct}, and \cref{claim:m:4-correct} can be found in the appendix.

\subsection{Transforming Boolean Circuits into Homomorphism Constraints}\label{sec:npc}

As a warm-up result for proving that \HomR\ is \NEXP-complete, we will show that \textsc{ColHomRec}, where the constraint graphs $F_i$ (and the target graph $G$) may be coloured graphs, is \NP-hard. While this is known \cite{BokerHRSS24}, our new proof provides us with a good opportunity to introduce the main technique of \cref{sec:nexp} -- representing boolean circuits in the graph we are aiming to reconstruct.

\dproblem{\CircuitSAT}
{Circuit $C:\{0,1\}^n\to\{0,1\}$ where $n\in\NN$.}
{Is there an input $x \in \{0,1\}^n$ such that $C(x)=1$?}
    
The coloured graph $G=(V(G),E(G),c:V(G)\to\CC)$ we will reconstruct, if an input $x$ with $C(x)=1$ exists, should contain an explicit representation of the computation of $C$ on $x$. Gates will correspond to nodes, truth values will be represented as unique vertices of colours $\bot$ and $\top$, and edges from nodes encode a truth value depending on whether they connect them with $\bot$ or $\top$. 

\begin{theorem}\label{thm:circuitsat}
    ~\hfill$\CircuitSAT \leq_p \textsc{\upshape ColHomRec}$.\hfill~
\end{theorem}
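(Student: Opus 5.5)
Given a circuit $C$ with gates $v_1,\dots,v_s$, output $o$ and inputs $x_1,\dots,x_n$, I build a list of coloured constraints whose intended solution $G$ encodes the computation of $C$ on some input $x$. The colour palette contains $\bot,\top,\alpha$ and a colour $\Cir_v$ for every node $v$ of $C$. The constraints $(\smallKone_{\bot},1)$, $(\smallKone_{\top},1)$ and $(\smallKone_{\Cir_v},1)$ for every $v$ make each of these colour classes a single vertex. The constraint $(\smallAlpha,2)$, a $\bot$--$\alpha$--$\alpha$--$\top$ path, together with $(\smallKone_\alpha,2)$ and suitable edge-count constraints between the pairs $\bot,\alpha$ and $\alpha,\top$ and $\alpha,\alpha$, forces the gadget to be exactly a path $\bot - a_0 - a_1 - \top$. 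Here $a_0$ means "false" and $a_1$ means "true". I will pin down this path by counting coloured edges, e.g.\ $(\bot\!-\!\alpha,1)$, $(\alpha\!-\!\top,1)$ and $(\alpha\!-\!\alpha,2)$, the last counted as homomorphisms, so both orientations appear.

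Truth values are encoded by adjacency. For each node $v$ I add the constraint $(\Cir_v\!-\!\alpha,1)$, so $\Cir_v$ is adjacent to exactly one of $a_0,a_1$. That choice is the value of $v$, and I read it off via the two-edge paths $\Cir_v - \alpha - \top$ or $\Cir_v - \alpha - \bot$. For inputs no further constraint is needed, so all $2^n$ choices are allowed; this is the "guess".

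For each gate I add one constraint counting homomorphisms from the gadget graph in which $\Cir_u,\Cir_w,\Cir_v$ are attached to the $\alpha$-path, in the style of \smallAlphaAnd{0}{0}{0}. Take an $\land$-gate $v=u\land w$. Consider the coloured graph consisting of the path $\bot-\alpha-\alpha-\top$ with $\Cir_u,\Cir_w,\Cir_v$ hanging off prescribed $\alpha$-vertices. Homomorphism counts into $G$ then count how many of the four admissible value patterns $(b_u,b_w,b_v)$ are realised. Since exactly one pattern is realised, the cleaner implementation is as follows. For each of the inconsistent triples $(b_u,b_w,b_v)$, with $b_v\ne b_u\land b_w$, I add the constraint that the graph with $\Cir_u,\Cir_w,\Cir_v$ attached to the appropriate endpoints has $0$ homomorphisms. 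Here "value $1$" is expressed as adjacency to the $\alpha$-neighbour of $\top$, so I use the $\bot,\top$ anchors and a path $\bot-\alpha-\alpha-\top$, attaching $\Cir$-vertices at the $\alpha$ next to $\bot$ (false) or next to $\top$ (true). The count $0$ is a legal constraint. $\lor$ and $\neg$ gates are handled the same way with their own forbidden patterns. Finally, $(\smallPtwoOut,1)$ forces $\Cir_o$ to be adjacent to the $\alpha$ next to $\top$, i.e.\ the output is true. The constraints needed to fix the path $\bot-a_0-a_1-\top$ exactly also guarantee that hom maps of the gadget path are unique. The reduction is clearly polynomial: $O(|C|)$ constraints of constant size.

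Correctness has two directions. Completeness: given $x$ with $C(x)=1$, let $G$ be the path plus one vertex per node, each joined to $a_{\text{val}(v)}$. I check every count directly; the only subtlety is that colour-respecting homomorphisms must land in the unique vertices of each colour. Soundness is the main obstacle. I must show that any $G$ satisfying the constraints has exactly this shape up to irrelevant extra structure, so that the gate constraints really express the semantics. The key step is proving that $\alpha^G=\{a_0,a_1\}$ with $\bot a_0,a_0a_1,a_1\top$ the only edges among $\bot,\alpha,\top$. I would first use the vertex counts, then the counts of $\bot$--$\alpha$ and $\alpha$--$\top$ edges, and then rule out $\bot$ and $\top$ sharing an $\alpha$-neighbour using the path count (also fixable by adding a $0$-constraint for $\bot-\alpha-\top$). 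After this, each $\Cir_v$ has a well-defined value, the forbidden-pattern constraints give $b_v=b_u\circ b_w$ inductively along the dag, and the output constraint yields $C(x)=1$ for $x$ read off the input vertices.
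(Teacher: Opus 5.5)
You build essentially the paper's reduction: one value gadget, the path $\bot$--$\alpha$--$\alpha$--$\top$; a single $\Cir_v$-vertex per node $v$ of $C$ with exactly one $\alpha$-neighbour encoding its value; count-$0$ constraints at each gate forbidding the inconsistent value patterns, with the $\Cir$-vertices hung off a copy of the gadget path; and count $1$ for the path $\Cir_o$--$\alpha$--$\top$. Your soundness outline also matches the paper's.

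\textbf{The error.} One constant is wrong, and as written it breaks the reduction. The gadget path $\bot$--$\alpha$--$\alpha$--$\top$ must get count $1$, not $2$.
\begin{itemize}
\item Its endpoints have the distinct colours $\bot$ and $\top$, so it has no nontrivial colour-preserving automorphism. Your intended graph therefore admits exactly one homomorphism from it.
\item No graph satisfying your other constraints can admit two. The $\bot$-end must map to the unique $\bot$-vertex. The next vertex must map to that vertex's unique $\alpha$-neighbour $a$, by the $\bot$--$\alpha$ edge count $1$. The third must map to the unique $\alpha$-neighbour $a'$ of the $\top$-vertex, by the $\alpha$--$\top$ edge count $1$. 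So the count is $1$ if $aa'\in E(G)$ and $0$ otherwise.
\item Hence with count $2$ every instance your reduction outputs is unsatisfiable, and completeness fails.
\end{itemize}
Your own later remark that homomorphisms of the gadget path are unique is exactly the count-$1$ statement.

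\textbf{After the fix.} With count $1$ the proof goes through as you describe, and it gets simpler. The path constraint alone forces $aa'\in E(G)$, hence $a\neq a'$ since there are no loops, so $\alpha^G=\{a,a'\}$. Your extra $\alpha$--$\alpha$ count and the $0$-constraint for $\bot$--$\alpha$--$\top$ then become redundant; they are harmless. The paper uses only the vertex counts, the two coloured edge counts, and the path with count $1$.

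\textbf{A minor overstatement.} Nothing excludes an edge between the $\bot$- and $\top$-vertices, so the three gadget edges need not be ``the only edges among $\bot,\alpha,\top$''. Since no constraint graph contains such an edge, this does not affect the argument.
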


\emph{The $n:m$ constraint.} In the construction of homomorphism constraints for representing the evaluation of boolean circuits, we will make use of constraints that enforce regularity between certain colour classes. By calculating appropriate multiplicities of their occurrences, we can let homomorphism counts from stars -- rather, sums of powers of degrees -- do the rest. For any two colours $\AAA,\BBB\in \CC$ and positive integers $n,m\in\NN$, we define the following list of constraints:
\[ \Fnm(\AAA,\BBB,n,m) \coloneqq \big( (\smallKone \AAA, n), (\AAA \smallKtwo \BBB, nm),(\smallPtwoLabelled{\BBB}{\AAA}{\BBB}, nm^2)\big ). 
\] 
\begin{lemma}\label{lemma:stars}
    Any graph $G$ that satisfies $\Fnm(\AAA,\BBB,n,m)$ contains a colour class $\AAA^G$ of size $n$. Each vertex coloured $\AAA$ has exactly $m$ neighbours coloured $\BBB$.
\end{lemma}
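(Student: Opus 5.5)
Interpret each homomorphism count of the three coloured constraint graphs in terms of the quantities $d_v \coloneqq |N(v)\cap \BBB^G|$ for $v\in\AAA^G$, and then use the equality case of the Cauchy--Schwarz inequality. We read the coloured graphs in $\Fnm$ as follows. $\smallKone\AAA$ is a single vertex coloured $\AAA$. $\AAA\smallKtwo\BBB$ is an edge with one endpoint coloured $\AAA$ and the other coloured $\BBB$. The third graph is a path of length $2$ whose centre is coloured $\AAA$ and whose two ends are coloured $\BBB$.

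First, a homomorphism from the single $\AAA$-vertex is just a choice of a vertex of $G$ of colour $\AAA$. Hence $\hom(\smallKone\AAA,G)=|\AAA^G|$, and the first constraint gives $|\AAA^G|=n$.

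Second, a homomorphism from the coloured edge maps the $\AAA$-endpoint to some $v\in\AAA^G$ and the $\BBB$-endpoint to a neighbour of $v$ in $\BBB^G$. So $\hom=\sum_{v\in\AAA^G} d_v = nm$. The homomorphisms from the path of length $2$ with $\AAA$-centre map the centre to some $v\in\AAA^G$ and independently map each end to a $\BBB$-coloured neighbour of $v$; the two ends may coincide, since we count homomorphisms. This gives $\sum_{v\in\AAA^G} d_v^2 = nm^2$. (If $\AAA=\BBB$, note that the colour constraints still force these exact counts, so the argument does not change.)

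Finally, Cauchy--Schwarz gives $\bigl(\sum_v d_v\bigr)^2 \le |\AAA^G|\sum_v d_v^2$, which here reads $(nm)^2 \le n\cdot nm^2$. Equality holds, so all $d_v$ are equal to the common value $nm/n=m$. Equivalently, $\sum_v (d_v-m)^2 = nm^2 - 2m\cdot nm + nm^2 = 0$. Here we use $n\ge 1$, which holds because $n$ is a positive integer. There is no real obstacle in this argument. The only point requiring care is the correct counting of homomorphisms of coloured graphs: the path must be counted with the ends allowed to coincide, which is exactly what produces $d_v^2$ rather than $\binom{d_v}{2}$.
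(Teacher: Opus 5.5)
Your proposal is correct and follows the paper's proof: you read off $|\AAA^G|=n$, $\sum_{v\in\AAA^G}\deg_\BBB(v)=nm$ and $\sum_{v\in\AAA^G}\deg_\BBB(v)^2=nm^2$ from the three constraints, then use the equality case of Cauchy--Schwarz to conclude that every $\AAA$-vertex has exactly $m$ neighbours coloured $\BBB$. Your alternative check via $\sum_v(\deg_\BBB(v)-m)^2=0$ is just an explicit form of that equality case.
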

\begin{proof}
    Let $G$ be some graph satisfying the constraints. Then its colour class $\AAA^G$ has size $\abs{\AAA^G}=n$. Denote $\deg_\BBB(v)\coloneqq \abs{N(v)\cap \BBB^G}$. Observe that for the number of homomorphisms to $G$ holds $\hom(\AAA \smallKtwo \BBB, G) = \sum_{v\in \AAA^G} \deg_\BBB(v) = nm$. Furthermore, we have $\hom(\smallPtwoLabelled{\BBB}{\AAA}{\BBB}, G) = \sum_{v\in \AAA^G} \deg_\BBB(v)^2 = n m^2$. By applying Cauchy–Schwarz it follows:
    \[
    n m^2 = \sum_{v\in \AAA^G} \deg_\BBB(v)^2 \;\ge\; \frac{1}{n}\Big(\sum_{v\in \AAA^G} \deg_\BBB(v)\Big)^2
    = \frac{1}{n}(n m)^2 = n m^2.
    \]
    Equality in the Cauchy--Schwarz inequality holds if and only if for $v\in \AAA^G$ all $\deg_\BBB(v)$ are equal, therefore every vertex of colour $\AAA$ has exactly $m$ neighbours in $\BBB$.
\end{proof}

In order to create graphs that mirror the combinatorial structure of boolean circuits, we have to come up with a technique to represent different types of gates and logical truth values. We can greatly simplify this problem by imposing an order on the nodes in $V(C)$ and colouring them individually. This way, we can introduce homomorphism constraints that ensure that, in all occurrences of a reconstructed copy of the circuit $C$, locally the computation at each gate is correct. Truth values are represented as nodes being adjacent to either one of two vertices in the \emph{value gadget}. For this gadget, which is unique and globally controls the evaluation at each gate, we reserve the colours $\{\alpha, \bot, \top\}$.

Let the \emph{value gadget} $G_\alpha$ be the coloured graph $\smallAlpha$, and let
\[
    \cons{F_\alpha} \coloneqq \big ( (\smallKone \alpha,2), (\smallKone \bot,1), (\smallKone \top,1), (\smallAlpha,1), (\alpha\smallKtwo\bot,1), (\alpha\smallKtwo\top,1)\big ).
\]

Furthermore, for every colour $\AAA$, let 
\[ 
    \cons{F_{\left\llbracket \cdot \right\rrbracket}}(\AAA,n) \coloneqq \Fnm(\AAA, \alpha, n, 1).
\]
If $G$ satisfies the constraints $\cons{F_\alpha} \cupp \cons{F_{\val \cdot}}(\AAA,k)$ for some $k\in\NN$ and $\AAA \in \CC$, we can define the value function $\val{v}_\AAA:\AAA^G\to\{0,1\}$
by letting $\val{v}_\AAA=1$, if $v$ is adjacent to the $\alpha$ vertex next to $\top$, and $\val{v}_\AAA=0$, otherwise.
We will simply write $\val{v}$, if $G$ is clear from context and the colour of $v$ irrelevant.

\begin{proof}[Proof of \cref{thm:circuitsat}.]
    Let $C = (V(C),E(C),\Inp^C,\lor^C,\land^C,\neg^C)$ be a circuit with $n$ input gates $v_1,\dots,v_n$ and $1$ output gate $o$.

    \medskip
    \emph{The Graph $G_{C(x)}$.}
For input $x\in\{0,1\}^n$, we define an undirected coloured graph $G_{C(x)}$ as follows: We start with a value gadget. Then for every vertex $v \in V(C)$, we introduce a vertex with colour $\Cir_v$. Depending on the value of $v$ in $C(x)$, the vertex $v$ of $G_{C(x)}$ is connected to the $\alpha$-vertex next to $\bot$ if the value is $0$ or to the $\alpha$-vertex next to $\top$ if the value is $1$.

    \emph{The Homomorphism Constraints.}
For gate $v\in \lor^C$ we write $v=\lor(u,w)$, if $v$ has incoming edges $uv, wv \in E(C)$. Similarly, we write $v=\land(u,w)$ if $v \in \land^C$, or $\neg(w)$ if $v \in \neg^C$, for the other two types of gates. We define the constraints $\cons{F}_C(v)$ as follows:
    
    \[
        \cons{F}_C(v) \coloneqq
        \begin{cases}
          \big((\smallAlphaAnd{0}{0}{1},0), (\smallAlphaAnd{0}{1}{1},0), (\smallAlphaAnd{1}{0}{1},0), (\smallAlphaAnd{1}{1}{0},0)\big), & \text{$v = \land(u,w)$}, \\[6pt]
          \big((\smallAlphaOr{0}{0}{1},0), (\smallAlphaOr{0}{1}{0},0), (\smallAlphaOr{1}{0}{0},0), (\smallAlphaOr{1}{1}{0},0)\big), & \text{$v = \lor(u,w)$}, \\[6pt]
          \big((\smallAlphaNeg{0}{0},0), (\smallAlphaNeg{1}{1},0)\big), & \text{$v = \neg(w)$}.
        \end{cases}
    \]
    
    We construct the following list of constraints $\cons{F}$ such that $\exists x \in \{0,1\}^n:C(x)=1\Leftrightarrow \cons {F} \in \textsc{HomRec}$:
    
    \[ \cons{F} \coloneqq \cons{F_\alpha} \cupp \big ((\smallPtwoOut, 1)\big ) \cupp \bigcup_{v \in V(C)} \cons{F_{\val \cdot}}(\Cir_v, 1) \cupp \bigcup_{v \in V(C)\setminus \Inp^C} \cons{F}_C(v) . \]

   If some $x\in\{0,1\}^n$ with $C(x)=1$ exists, we observe that $G_{C(x)}$ as defined before satisfies all constraints in $\cons{F}$.
    \begin{claim}\label{claim:3.1-correct}
        If there is a graph $G$ satisfying all constraints in $\cons {F}$, there exists an $x\in\{0,1\}^n$ such that $C(x)=1$. Furthermore, $\cons {F}$ can be computed in polynomial time.
    \end{claim}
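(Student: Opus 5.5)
Let $G$ satisfy all constraints in $\cons F$. I will read off an assignment from $G$ and show by induction over a topological order of $C$ that it is the computation of $C$ on the induced input, ending with output $1$.

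\emph{The value gadget.} By $\cons{F_\alpha}$, $G$ has exactly two $\alpha$-vertices, one $\bot$-vertex $b$ and one $\top$-vertex $t$. Since $\hom(\alpha\smallKtwo\bot,G)=1$ and $\hom(\alpha\smallKtwo\top,G)=1$, there is exactly one $\alpha$-vertex $a_0$ adjacent to $b$ and exactly one $\alpha$-vertex $a_1$ adjacent to $t$. From $\hom(\smallAlpha,G)=1$ we get an edge between two $\alpha$-vertices, one adjacent to $b$ and the other to $t$. Hence $a_0\neq a_1$, so the $\alpha$-class is exactly $\{a_0,a_1\}$ with $a_0a_1\in E(G)$, and $G$ restricted to these colours is a copy of $G_\alpha$. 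The map $\val{\cdot}$ is then well defined: for each $v\in V(C)$, $\cons{F_{\val\cdot}}(\Cir_v,1)=\Fnm(\Cir_v,\alpha,1,1)$ and Lemma~\ref{lemma:stars} give a unique vertex $g_v$ of colour $\Cir_v$ with exactly one $\alpha$-neighbour. Set $\val{v}\coloneqq 1$ iff that neighbour is $a_1$. Now define $x_i\coloneqq\val{v_i}$ for the inputs.

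\emph{Local correctness.} Fix a gate, say $v=\land(u,w)$. Each $0$-constraint in $\cons F_C(v)$ forbids one wrong combination of values. Any homomorphism of such a pattern must send its path $\bot\alpha\alpha\top$ onto $b,a_0,a_1,t$, because colours are respected and the gadget is unique. The vertices $\Cir_u,\Cir_w,\Cir_v$ must go to $g_u,g_w,g_v$. The pattern graph attaches $\Cir_u$ to the $\alpha$ at position $i$, $\Cir_w$ to position $j$, and $\Cir_v$ to position $k$. So a homomorphism exists exactly when $(\val u,\val w,\val v)=(i,j,k)$. The four forbidden triples are precisely those with $k\neq i\wedge j$, so $\val v=\val u\wedge\val w$. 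The $\lor$ and $\neg$ cases are identical. By induction along a topological order, $\val v$ equals the value of gate $v$ in the computation of $C$ on $x$.

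\emph{Output and size.} The constraint $(\smallPtwoOut,1)$ demands a walk $\Cir_o$--$\alpha$--$\top$. The only $\alpha$-vertex adjacent to $t$ is $a_1$, so $g_o$ is adjacent to $a_1$ and $\val o=1$, that is, $C(x)=1$. For the running time, $\cons F$ consists of $6+1$ constraints plus $3$ per node plus at most $4$ per gate. Each constraint graph has constant size and each count is at most $1$. Hence $\cons F$ is computable in time polynomial in $|C|$.

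The main obstacle is the first step: pinning down that the $\alpha$-vertices are distinct and that each is attached to the correct end. Only then are the $0$-constraints unambiguous statements about values, and the homomorphism argument for the gates relies on this uniqueness.
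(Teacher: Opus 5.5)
Your proposal is correct and takes essentially the same route as the paper. You pin down the unique value gadget from $\cons{F_\alpha}$, read off $\val{\cdot}$ from the constraints $\cons{F_{\val\cdot}}(\Cir_v,1)$, check each gate locally because any wrong value triple would yield a homomorphic image of a count-$0$ pattern, and obtain $C(x)=1$ from the output constraint. Your explicit derivation that the two $\alpha$-vertices are distinct, adjacent, and attached to the correct ends is more careful than the paper's one-line assertion.

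There is one harmless slip. Not every count is at most $1$: the constraint fixing the number of $\alpha$-vertices has count $2$. This changes nothing in the polynomial-time bound.
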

    This completes the proof of \cref{thm:circuitsat}.
\end{proof}

\subsection{\NEXP-Completeness of Coloured Homomorphism Reconstruction}\label{sec:nexp}

\NP-complete problems with exponentially succinct representations are often complete for nondeterministic exponential time \NEXP\ \cite{PapadimitriouY86}. The notion of succinct input representations was first investigated by Galperin and Wigderson in \cite{GalperinW83}. For example, a graph can be succinctly encoded by a boolean circuit as follows.
We denote the binary encoding of a nonnegative integer $i\in\NN$ by $\bin i$.

\begin{definition}[\cite{GalperinW83}]
    Let $G$ be a graph with $m\leq2^n$ vertices $v_0,\dots,v_{m-1}$. A \emph{succinct representation (SCR)} of a graph $G$ is a boolean circuit  $C_G$ with $2n$ inputs and $1$ output such that $C_G(\bin{i},\bin{j})=1$ if $i,j<m$ and $(v_i,v_j)\in E(G)$ and $C_G(\bin{i},\bin{j})=0$ otherwise.
\end{definition}

We reduce from the following $\SuccinctClique$-problem, which is \NEXP-complete \cite{GalperinW83}.

\dproblem{\SuccinctClique}
{SCR $C_G$, positive integer $k\in\NN$ (in binary).}
{Does $G$ contain a $k$-clique?}
As the first and most difficult step in the proof of \cref{thm:nexp-complete}, the following lemma shows that \textsc{ColHomRec} is \NEXP-hard.

\begin{lemma}\label{lem:nexphard}
    ~\hfill$\SuccinctClique \leq_p \textsc{\upshape ColHomRec}$.\hfill~
\end{lemma}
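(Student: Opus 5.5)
Given an instance $(C_G,k)$ of \SuccinctClique\ with $C_G$ having $2n$ inputs, we build in polynomial time a list of coloured constraints whose solutions are exactly graphs encoding a $k$-clique. The intended graph has a colour class $S$ of size $k$ (the clique vertices). Each $s\in S$ carries an $n$-bit address, and for every unordered pair $\{s,t\}\in\binom{S}{2}$ there is a copy of the circuit evaluation of $C_G$ on $(\bin i,\bin j)$ whose output is forced to be $1$. The figure \nexpCircuitGraph\ suggests the layout. There is one global value gadget $G_\alpha$, enforced by $\cons F_\alpha$, and a colour class $\QQQ$ of size $k$, one vertex per clique member. Each $\QQQ$-vertex has a private ``address'' consisting of $n$ vertices $\Bit_0,\dots,\Bit_{n-1}$, each joined by $\cons F_{\val\cdot}(\Bit_r,k)$ to the value gadget. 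A colour class $\PPP$ of size $\binom{k}{2}$ represents the pairs. Each $\PPP$-vertex is the hub of one copy of $C_G$: there are $\binom k2$ vertices of each colour $\Cir_v$, each adjacent to exactly one $\PPP$-vertex and to one $\alpha$-vertex. All sizes and regularities are imposed with the $n:m$ constraints $\Fnm$ of \cref{lemma:stars}. The counts are exponential in the input length only through $k$ (given in binary), so they are polynomial-size in binary.

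The steps are as follows. (1) Use $\Fnm(\Cir_v,\PPP,\binom k2,1)$ and $\Fnm(\PPP,\Cir_v,\binom k2,1)$ so that each circuit node-colour is matched bijectively with $\PPP$, making the circuit copies well defined. (2) Reuse the gate constraints $\cons F_C(v)$ of \cref{thm:circuitsat}, but add the common $\PPP$-neighbour to the forbidden patterns (the $\smallAlphaAndX{0}{0}{1}$-type graphs). Forbidden patterns with count $0$ then only constrain gates within the same copy. Add $(\Cir_o\text{-}\PPP\text{-pattern with }\Cir_o\text{ adjacent to }\alpha\text{ next to }\bot,0)$ so that every output is $1$. (3) Wire inputs to addresses with the diamond gadgets. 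Input $\Cir_{v_r}$ ($r<n$) is joined to $\Bit_r$ of one $\QQQ$-vertex, and $\Cir_{v_{n+r}}$ to $\Bit_r$ of another. We forbid, via count-$0$ four-cycles $\smallAlphaCfour{r}$ with mismatched values, an input and its bit taking different truth values. Diamond constraints counting $\PPP$–$\Cir_{v_r}$–$\Bit_r$–$\QQQ$ paths force all $n$ left inputs of a copy to lead to the same $\QQQ$-vertex $q$, and all right inputs to a single $q'$. The right total count, combined with Cauchy–Schwarz as in \cref{lemma:stars}, gives this consistency. (4) Force the map $\PPP\to\binom{\QQQ}{2}$, $p\mapsto\{q,q'\}$, to be a bijection onto unordered pairs of distinct vertices. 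To do this, count homomorphisms of the two-diamond graph $\QQQ\cdots\PPP\cdots\QQQ$ and of the graph where both diamonds end in the same $\QQQ$ (count $0$ to exclude $q=q'$). A second-moment (Cauchy–Schwarz) argument on the multiplicities $\mu(q,q')=|\{p\}|$ then shows that each pair is hit exactly once. (5) Conclude: the addresses of the $k$ $\QQQ$-vertices are indices $i_q$. They are pairwise distinct, since otherwise $C_G(\bin i,\bin i)$ would have to be $1$, which fails for a simple graph without loops. Every pair is an edge because the circuit copy outputs $1$, so the addresses form a $k$-clique. Conversely, a clique yields the intended graph, and we verify all counts directly.

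The main obstacle is step (4), together with the consistency in step (3): we must express with homomorphism counts alone that the pair-hubs realise each unordered pair of $\QQQ$-vertices exactly once. Homomorphisms may collapse vertices, so the diamond patterns admit degenerate images. My first attempt is to compute the exact homomorphism count of the composite gadget in the intended graph, as a polynomial in $k$ and $n$. I would then show that any solution attains it only when the relevant degree/multiplicity vector is constant. This reuses the Cauchy–Schwarz equality trick of \cref{lemma:stars}, with colours distinguishing left ($r<n$) from right inputs to break symmetry.
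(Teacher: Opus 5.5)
Your steps (1)--(3) and (5) are essentially the paper's construction. Both use one value gadget and $N=\binom k2$ hubs $\PPP$, each carrying a copy of $C_G$ via $\Fnm(\PPP,\Cir_v,N,1)$ and $\Fnm(\Cir_v,\PPP,N,1)$. Both use gate patterns with a common $\PPP$-neighbour, $k$ address stars $\QQQ$ with leaves $\Bit_r$, value-matching four-cycles, and the diamond constraints $(F_{\text{str}_1},N)$, $(F_{\text{str}_2},N)$.

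The real difference is your step (4), and there you are more careful than the paper. The paper imposes nothing that ties hubs to pairs of $\QQQ$-stars. In the proof of Claim~\ref{claim:3.2-correct} it argues that if some $\QQQ$-star is not fed into $k-1$ copies, then some copy receives the same index twice. That inference fails: $\Fnm(\Cir_v,\Bit_r,N,1)$ fixes degrees only on the $\Cir$-side. So letting all $N$ hubs read from the same two $\QQQ$-stars, with addresses adjacent in $G$, satisfies every constraint in the paper's list. For $k\ge 3$ the produced instance is therefore satisfiable as soon as $G$ has a single edge, and your step (4) is exactly the missing ingredient. Load balancing alone would not suffice either, i.e.\ requiring every $\QQQ$-star to be used $k-1$ times, as the paper says it would like. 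A $(k-1)$-regular multigraph on $k$ vertices need not be $K_k$; for $k=4$, take the edges $12,12,34,34,13,24$.

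As written, however, step (4) does not yet work. Every $\PPP$-, $\Cir_v$- and $\Bit_r$-vertex has exactly one neighbour of the next colour along a diamond. Hence the image of a diamond is determined by the image of its hub, and degenerate homomorphisms are harmless. For the same reason step (3) needs no Cauchy--Schwarz: $\hom(F_{\text{str}_1},G)$ is simply the number of consistent hubs.

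The same determinism kills your gadget. Let $\mu(q,q')$ be the number of hubs reading $q$ on the left and $q'$ on the right. Your two-diamond graph $\QQQ\cdots\PPP\cdots\QQQ$ has only one hub, so its count is $\sum_{(q,q')}\mu(q,q')=N$ in every solution. That is a first moment and gives no information.

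A second moment needs two hubs glued at their $\QQQ$-ends. Let $H_\parallel$ identify the two left ends and the two right ends, and let $H_\times$ identify the left end of each hub with the right end of the other. Then $\hom(H_\parallel,G)=\sum_{(q,q')}\mu(q,q')^2$ and $\hom(H_\times,G)=\sum_{(q,q')}\mu(q,q')\mu(q',q)$. Orient the intended graph so that left inputs come from the smaller-indexed clique vertex. Then add $(H_\parallel,N)$, $(H_\times,0)$, and your same-end graph with count $0$, which makes the diagonal terms $\mu(q,q)$ vanish.

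Put $\nu(\{q,q'\})\coloneqq\mu(q,q')+\mu(q',q)$ over the $\binom k2=N$ unordered pairs of distinct $\QQQ$-vertices. Then $\sum\nu=\sum\nu^2=N$, so each $\nu\in\{0,1\}$ by integrality, and hence $\nu\equiv 1$. With these $O(n)$-vertex gadgets and counts in $\{0,N\}$, your step (5) goes through and the reduction is sound.
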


\emph{A Technical Change to the Encoding of Boolean Circuits from \cref{sec:npc}.}  We consider an instance $(C_G,k)$ of $\SuccinctClique$, where $C_G$ is the SCR of a graph $G$. We let $o$ be the output of $C_G$. Note that the circuit $C_G$ has $2n$ inputs $v_0,\dots,v_{2n-1}$ (rather than $n$ as the circuits we considered before); we have to adapt our construction accordingly. Moreover, we will be creating $N\coloneqq k(k-1)/2$ copies of the representation of the circuit $C_G$. For consistency, we keep only one value gadget. This means that our constraints need to change slightly.

We add constraints $\cons{F}_\PPP$ that will produce $N$ vertices coloured $\PPP$, one for each copy, which are connected to every vertex in their respective copies. The idea is that the $\PPP$ nodes together with their representation of $C_G$ will form disjoint stars that can be evaluated independently from each other. We would prefer to individualize each $C_G$ with its completely own set of colours $\Cir_v^{(1)},\dots,\Cir_v^{(k)}$ instead, but have to avoid introducing $\Omega(k)$ constraints. Let
\[
    \cons{F}_\PPP \coloneqq \bigcup_{v\in V(C_G)} \Fnm(\PPP,\Cir_v,N,1) \cupp \bigcup_{v\in V(C_G)} \Fnm(\Cir_v,\mathsf \PPP,N,1).
\]
$\cons{F}_\PPP$ forces the colour classes $\PPP$ and $\Cir_v$ to have $N$ vertices. Moreover, it forces each vertex in $\PPP$ to have $1$ neighbour in $\Cir_v$ and each vertex in $\Cir_v$ to have $1$ neighbour in $\PPP$. The constraints $\cons{F}_{C_G}(v)$ enforcing correct behaviour of Boolean connectives are made less restrictive, by limiting them to $\Cir_u,\Cir_w$, and $\Cir_v$ nodes adjacent to the same $\PPP$ vertex:
 
\[
    \cons{F}_{C_G}(v) \coloneqq
    \begin{cases}
      \big((\smallAlphaAndX{0}{0}{1},0), (\smallAlphaAndX{0}{1}{1},0), (\smallAlphaAndX{1}{0}{1},0), (\smallAlphaAndX{1}{1}{0},0)\big), & \text{$v = \land(u,w)$}, \\[6pt]
      \big((\smallAlphaOrX{0}{0}{1},0), (\smallAlphaOrX{0}{1}{0},0), (\smallAlphaOrX{1}{0}{0},0), (\smallAlphaOrX{1}{1}{0},0)\big), & \text{$v = \lor(u,w)$}, \\[6pt]
      \big((\smallAlphaNegX{0}{0},0), (\smallAlphaNegX{1}{1},0)\big), & \text{$v = \neg(w)$}.
    \end{cases}
\]

So far, we only guarantee the existence of $\binom{k}{2}$ copies of the SCR $C_G$ in the graph we are trying to reconstruct. We need additional constraints $\cons{F}_\QQQ, \cons{F}_{\text{in}}$, and $\cons{F}_{\text{str}}$ to ensure that not only are there $\binom{k}{2}$ edges in $G$, but that $k$ mutually adjacent vertices $S\subseteq V(G)$ forming a $k$-clique exist. The vertices in $S$ will be represented as $k$ node gadgets encoding their respective indices in $G$. The index of each node gadget (in binary representation) is fed into $k-1$ copies of $C_G$, which will have to be correctly evaluated by any graph satisfying our list of constraints. For node gadgets, we introduce the colours $\Bit_i$, for $i\in[0,n-1]$, and $\QQQ$. We denote the first and second set of $n$ inputs to $C_G$ as $\Inp_1\coloneqq\{v_0,\dots,v_{n-1}\}\subseteq V(C_G)$ and $\Inp_2\coloneqq\{v_n,\dots,v_{2n-1}\}\subseteq V(C_G)$, respectively, and as  index $\text{idx}(v_i)\coloneqq i$. Let
\begin{align*}
    \cons{F}_{\QQQ} \coloneqq &\bigcup_{i\in[0,n-1]} \cons{F}_{\val \cdot}(\Bit_i, k) \cupp \bigcup_{i\in[0,n-1]}\Fnm(\QQQ,\Bit_i,k,1) \cupp \bigcup_{i\in[0,n-1]}\Fnm(\Bit_i,\QQQ,k,1), \\
    \cons{F}_{\text{in}} \coloneqq &\bigcup_{v \in \Inp_1} \Fnm(\Cir_v,\Bit_{\text{idx}(v)},N,1) \cupp \bigcup_{v \in \Inp_2}\Fnm(\Cir_v,\Bit_{\text{idx}(v)-n},N,1) \\
    \cupp &\bigcup_{v \in \Inp_1} (\smallAlphaCfour{\text{idx}(v)}, 0) \cupp \bigcup_{v \in \Inp_2} (\smallAlphaCfour{\text{idx}(v)-n}, 0).
\end{align*}
\emph{Combining $n:m$ Constraints Guarantees Consistent Circuit Inputs.} The constraints in $\cons{F}_{\QQQ}$ enforce colours classes $\QQQ$ and $\Bit_i$ of size $k$. Again, vertices coloured $\QQQ$ are required to have $1$ neighbour coloured $\Bit_i$, and vice versa. Furthermore, $\Bit_i$ has to define the value function $\val{\cdot}$.

For inputs $v\in \Inp^{C_G}$, $\cons{F}_{\text{in}}$ enforces that each vertex coloured $\Cir_v$ has exactly one neighbour coloured $\Bit_{\text{idx(v)}}$ or $\Bit_{\text{idx(v)}-n}$, 
depending on whether $v\in \Inp_1$ or $v\in \Inp_2$ holds. Here, we do not require the converse, since we would like every vertex coloured $\Bit_{i}$ to have $k-1$ neighbours total in $\Cir_{v_i}$ and $\Cir_{v_{i+n}}$. The third row of constraints enforces the values $\val{\cdot}$ of vertices in $\Cir_v$ and their neighbours in $\Bit_{\text{idx(v)}}$ or $\Bit_{\text{idx(v)}-n}$ to be equal.

To make the indices of node gadgets behave as input strings of length $n$, we introduce the two constraints shown in \cref{fig:diamond-gadgets}, and let
\[
    \cons{F}_{\text{str}} \coloneqq \big ( (F_{\text{str}_1}, N), (F_{\text{str}_2}, N) \big ).
\]

\begin{figure}[ht]
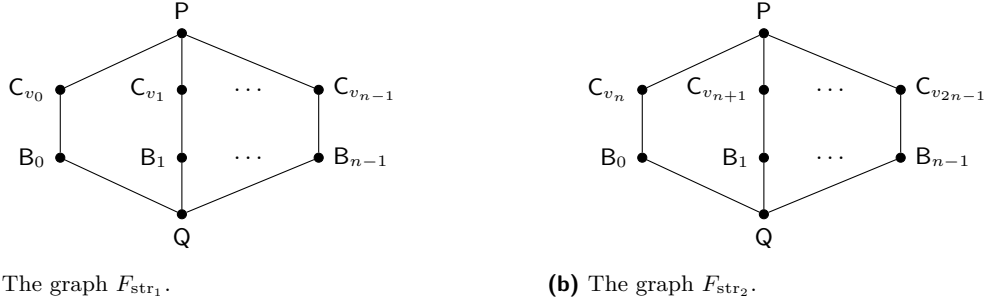

  \centering
  \begin{minipage}{0.45\textwidth}
    \centering
    $\diamondGadgetA$ \subcaption{The graph $F_{\text{str}_1}$.}
    \label{fig:diamond0}
  \end{minipage}\hfill
  \begin{minipage}{0.45\textwidth}
    \centering
    $\diamondGadgetB$ \subcaption{The graph $F_{\text{str}_2}$.}
    \label{fig:diamondn}
  \end{minipage}
  \caption{These constraint graphs are the only configurations between inputs and vertices encoding indices we want to allow. Recall that the set of $2n$ inputs to $C_G(\bin{i},\bin{j})$ is $v_0,\dots,v_{n-1}, v_n,\dots,v_{2n-1}$.}
  \label{fig:diamond-gadgets}
\end{figure}

\begin{proof}[Proof of \cref{lem:nexphard}.]
    We combine the previous constraints into $\cons{F^+} \coloneqq \cons{F}_\PPP \cupp \cons{F}_{\QQQ} \cupp \cons{F}_{\text{in}} \cupp \cons{F}_{\text{str}}.$ We will show that for any SCR $C_G$ and $k\in\NN$ holds that $G$ contains mutually adjacent $S=\{s_1,\dots,s_k\} \Leftrightarrow \cons{F}\in \textsc{HomRec}$, for the following list of constraints where $N \coloneqq k(k-1)/2$:
    \[ \cons{F} \coloneqq \cons{F}_\alpha \cupp \big ((\smallPtwoOut,N)\big ) \cupp \bigcup_{v \in V(C_G)} \cons{F}_{\val \cdot}(\Cir_v, N) \cupp \bigcup_{v \in V(C_G)\setminus \Inp} \cons{F}_{C_G}(v) \cupp \cons{F}^+. \]

\emph{The graph $G_S$.} How does a graph satisfying the constraints $\cons{F}$ look like? We have a good idea about the first part; it needs to contain $k(k-1)/2$ copies of the SCR $C_G$, which all evaluate to $1$. But the second half of our constraints, $\CF^+$, interacts with the inputs to our circuits $C_G$.

\begin{figure}[ht]
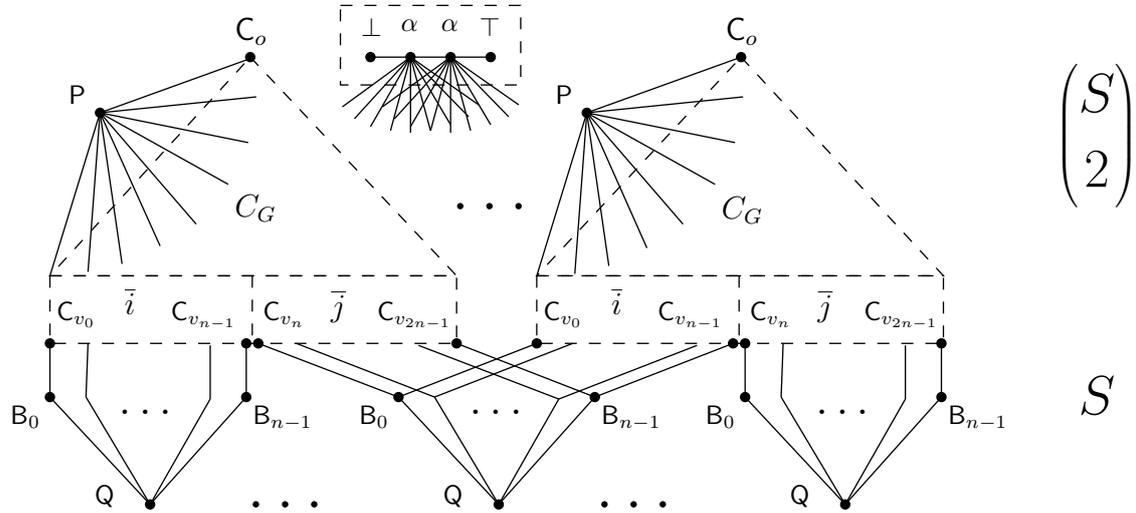

  \centering
  \begin{minipage}{\textwidth}
  \tikzset{every picture/.style={line width=0.5pt}}
    \nexpCircuitGraph
  \end{minipage}
  \caption{The graph $G_S$. If the gadgets for $S$ encode a $k$-clique of $G$, all vertices coloured $\Cir_o$ are adjacent to the ``true'' value gadget vertex next to $\top$.}
  \label{fig:g-s}
\end{figure}

For input SCR $C_G$, we define an undirected coloured graph $G_S$ (see \cref{fig:g-s}) for any size $k$ subset of vertices $S\subseteq V(G)$ as follows:
\begin{itemize}
    \item Let $S=\{s_1,\dots,s_k\}$, and denote their binary encodings with $\bin{s_i}\in\{0,1\}^n$.
    \item Recall that $G_\alpha$ denotes the value gadget \smallAlpha. Denote its vertices, left to right, as $\{v_\bot,v_{\alpha\bot},v_{\alpha\top},v_\top\}$.
    \item Let $V_{\text{bit}} \coloneqq \{b_0,\dots,b_{n-1}\}$.
    \item We will be using $\binom{S}{2} = \{s_1s_2,\dots,s_{k-1}s_k\}$ to index the set $[1,N]$ = $[1,k(k-1)/2]$.
    \item The colours of $G_S$ are $\CC \coloneqq \{\alpha, \top, \bot, \PPP, \QQQ\} \cupp \{C_v\mid v\in V(C_G)\} \cupp \{\Bit_i\mid i\in [0,n-1]\}$.
    \item Let $V(G_S) \coloneqq V(G_\alpha) \cupp \big(\{v_\PPP\}\times \binom{S}{2} \big) \cupp \big(V(C_G)\times \binom{S}{2}\big) \cupp \big(\{v_\QQQ\}\times S\big) \cupp \big(V_{\text{bit}}\times S\big)$.
    \item We let its colour classes (omitting $G_S$) $\alpha\coloneqq \{v_{\alpha\bot},v_{\alpha\top}\}$, $\bot\coloneqq \{v_{\bot}\}$, and $\top\coloneqq \{v_{\top}\}$;
    \item $P \coloneqq \{v_\PPP\}\times\binom{S}{2}$, $Q \coloneqq \{v_\QQQ\}\times S$;
    \item for $v\in V(C_G)$, $\Cir_v\coloneqq \{v\}\times\binom{S}{2}$; and, for $i\in [0,n-1]$, $\Bit_i\coloneqq \{v_i\}\times S$.
    \item We define its edges $E(G_S)$ as follows: any edges from $E(G_\alpha)$.
    \item For each $s_is_j\in \binom{S}{2}$, the vertex $(v_\PPP,s_is_j)$ coloured $\PPP$ has one neighbour $(v,s_is_j)$ with colour $\Cir_v$ for each $v\in V(C_G)$.
    \item For each $s\in {S}$, the vertex $(v_\QQQ,s)$ coloured $\QQQ$ has one neighbour $(b_i,s)$ with colour $\Bit_i$ for each $i\in [0,n-1]$.
    \item For each $i\in [0,n-1]$ and $s\in {S}$, the vertex $(b_i,s)$ is adjacent to $(v,s_is_j)$ if $v$ is the $i$-th input of $C_G$ and $s = s_i$ OR if $v$ is the $n+i$-th input of $C_G$ and $s = s_j$, and
    \item the vertex $(b_i,s)$ is adjacent to $v_{\alpha\bot}$, if $\bin{s}_i=0$, and adjacent to $v_{\alpha\top}$, if $\bin{s}_i=1$.
    \item Observing that this defines $\val{\cdot}$ on the inputs of all copies of $C_G$, we can define the remaining values of $C_G$ copies inductively, by adding edges from gates to either $v_{\alpha\bot}$ or $v_{\alpha\top}$, as in the proof of \cref{thm:circuitsat}.
\end{itemize}  
If the graph $G$ represented by SCR $C_G$ contains some $k$-clique $S=\{s_1,\dots,s_k\}$, then we argue that the graph $G_S$ satisfies our constraints.

\begin{claim}\label{claim:3.2-correct}
    If there is a graph $G$ satisfying all constraints in $\cons {F}$, there exist $k$ strings $\bin{i_1},\dots,\bin{i_k}\in\{0,1\}^n$ such that pairwise $C_G(\bin{i},\bin{j})=1$, and thus, there are $k$ mutually adjacent vertices in the graph encoded by SCR $C_G$. Furthermore, $\cons {F}$ can be computed in polynomial time in $\abs{C_G}+\log k$.
\end{claim}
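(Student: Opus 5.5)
Let $H$ be any coloured graph satisfying $\cons F$. The plan is to read off from $H$ a family of $k$ node gadgets and $N$ circuit copies, and then show that every copy correctly evaluates $C_G$ on the indices of two distinct node gadgets, with output $1$.

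\textbf{Step 1 (value gadget).} First I fix the global structure. By $\cons F_\alpha$ there are exactly two $\alpha$-vertices, one $\bot$-vertex and one $\top$-vertex. The constraints $(\alpha\smallKtwo\bot,1)$, $(\alpha\smallKtwo\top,1)$ and $(\smallAlpha,1)$ force the path $\bot\,\alpha\,\alpha\,\top$ with distinct $\alpha$-ends. So $\val\cdot$ is well defined on every colour class constrained by some $\cons F_{\val\cdot}(\AAA,\cdot)$; by \cref{lemma:stars}, every such vertex has exactly one $\alpha$-neighbour.

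\textbf{Step 2 (partition into copies).} By \cref{lemma:stars} applied to $\cons F_\PPP$, the edges between $\PPP^H$ and $\Cir_v^H$ form a perfect matching for each $v$. Hence the $\PPP$-vertices $p_1,\dots,p_N$ each pick out exactly one vertex $c_{v}^{(t)}$ of each colour $\Cir_v$, and every $\Cir_v$-vertex belongs to exactly one such copy. Similarly, $\cons F_\QQQ$ yields $k$ node gadgets $q_1,\dots,q_k$, each with a unique $\Bit_i$-vertex $b_i^{(r)}$. Define $\bin{i_r}\in\{0,1\}^n$ by $(\bin{i_r})_i\coloneqq\val{b_i^{(r)}}$.

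The zero-count gate constraints $\cons F_{C_G}(v)$ contain a $\PPP$-vertex adjacent to $\Cir_u,\Cir_w,\Cir_v$. Any violation of a gate inside copy $t$ would therefore give a homomorphism mapping that $\PPP$-vertex to $p_t$. So within each copy, $\val{\cdot}$ is computed correctly gate by gate, by induction along a topological order. The constraint $(\smallPtwoOut,N)$, together with $|\Cir_o^H|=N$ and each vertex having exactly one $\alpha$-neighbour, forces every output vertex to be adjacent to the $\alpha$ next to $\top$. Thus each copy computes $C_G(x^{(t)})=1$ on its input values $x^{(t)}$.

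\textbf{Step 3 (inputs come from node gadgets; main obstacle).} By $\cons F_{\text{in}}$, each input vertex $c_v^{(t)}$ with $v\in\Inp_1$ has exactly one $\Bit_{\text{idx}(v)}$-neighbour, and the 4-cycle constraints force equal values; likewise for $\Inp_2$. The hard part is to show that all first-half inputs of copy $t$ attach to bit vertices of a single gadget $q_r$, and all second-half inputs to a single gadget $q_{r'}$. This is exactly what $\cons F_{\text{str}}$ must enforce.

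Count homomorphisms of $F_{\text{str}_1}$. The vertex $\PPP$ must go to some $p_t$, each $\Cir_{v_i}$ then to $c_{v_i}^{(t)}$ by the matching, and each $\Bit_i$ to its unique neighbour. The constraint thus contributes $1$ per copy $t$ precisely when all these bit neighbours share a common $\QQQ$-neighbour, and $0$ otherwise. Since the $\QQQ$-vertex image is forced by the $\Bit$–$\QQQ$ matching, count $N$ means that this happens for every $t$. The same argument with $F_{\text{str}_2}$ handles $\Inp_2$. Hence each copy $t$ reads $(\bin{i_{r}},\bin{i_{r'}})$ for gadgets $r,r'$.

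\textbf{Step 4 (conclusion).} It remains to obtain a clique. I would use a degree count: each $\Bit_i$-vertex has $N$ incident input edges in total, which does not by itself give $k-1$ per vertex. Should this be insufficient, I would argue directly. The set of ordered pairs $(r,r')$ realised by the copies gives, via $C_G=1$, edges between the indices $\bin{i_r}$ and $\bin{i_{r'}}$ (with $i_r,i_{r'}<m$ automatic from the SCR definition). If the pairs are not all distinct and unordered-covering, I would restrict attention to the gadgets realised and note that the intended $G_S$ shows completeness; for soundness, I would try to exploit the balance of input degrees to force a pairing covering $\binom{[k]}{2}$, also ruling out $r=r'$ (which would need self-loops, and $C_G(\bar i,\bar i)=0$). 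I expect this step to be the main obstacle. Polynomial-time computability follows since there are $O(|C_G|+n)$ constraints of constant or $O(n)$ size, and the counts $N,k$ have $O(\log k)$ bits.
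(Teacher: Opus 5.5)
\textbf{Where your proposal stands.} Your Steps 1--3 follow the paper's proof: the value gadget, the $\PPP$-stars evaluated gate by gate, the $\QQQ$-stars read as strings, and $\cons{F}_{\text{in}}$ copying values. Your count for $F_{\text{str}_1}$ and $F_{\text{str}_2}$ is actually cleaner than the paper's version: each copy $t$ contributes exactly one homomorphism if its first (resp.\ second) inputs all hang off a single $\QQQ$-star, and zero otherwise.

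\textbf{The gap: Step 4.} You never show that the pairs $(r,r')$ of node gadgets read by the $N$ copies cover every unordered pair of distinct gadgets. Without this you only get $C_G(\bin{i_r},\bin{i_{r'}})=1$ for the realised pairs, not a $k$-clique. The route you sketch has nothing to work with. $\Fnm(\Cir_v,\Bit_i,N,1)$ only fixes the number of $\Bit_i$-neighbours of each $\Cir_v$-vertex. The $2N=k(k-1)$ input edges at the class $\Bit_i$ (not $N$ per vertex, as you write) may therefore be spread arbitrarily over its $k$ vertices, and no other constraint in $\cons F$ controls them. Appealing to $G_S$ only addresses the converse direction.

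\textbf{The paper's argument does not close it either.} The paper finishes by asserting the following: if some $\QQQ$-star is not fed into $k-1$ distinct copies, then some copy receives the same string twice, outputs $0$ because $C_G(\bin i,\bin i)=0$, and contradicts the output constraint with count $N$. That inference fails for the constraints as listed. Let all $N$ copies read the same ordered pair $(q_1,q_2)$, where $q_1,q_2$ encode adjacent vertices, and leave the other $\QQQ$-stars unattached. Then:
\begin{itemize}
\item every $\Fnm$-constraint holds, since they only fix degrees on the $\Cir_v$, $\PPP$ and $\QQQ$ sides;
\item $F_{\text{str}_1}$ and $F_{\text{str}_2}$ still have exactly $N$ homomorphisms;
\item the 4-cycle and gate constraints have count $0$;
\item all $N$ outputs are $1$.
\end{itemize}
For $G=K_2$ and $k=3$ this gives a satisfiable $\cons F$ for a no-instance. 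So Step 4 cannot be derived from $\cons F$ at all; the constraint set has to be strengthened.

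\textbf{One possible repair.} Add a graph $D$ with vertices $\PPP_1,\PPP_2,\QQQ_1,\QQQ_2$ in which, for $s=1,2$, there is a path from $\PPP_s$ through $\Cir_{v_0}$ and $\Bit_0$ to $\QQQ_1$, and a path from $\PPP_s$ through $\Cir_{v_n}$ and $\Bit_0$ to $\QQQ_2$; give it count $N$. Add its variant $D'$, in which the two paths from $\PPP_2$ go to $\QQQ_2$ and $\QQQ_1$ respectively instead; give it count $0$. Write $c_{r,r'}$ for the number of copies reading $(q_r,q_{r'})$. Then $\hom(D)=\sum_{r,r'}c_{r,r'}^2$ and $\hom(D')=\sum_{r,r'}c_{r,r'}c_{r',r}$. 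Together these force the copies to read $N=\binom k2$ distinct unordered pairs of distinct gadgets, i.e.\ all of them. The fact $C_G(\bin i,\bin i)=0$ then makes the $k$ strings distinct, and $G_S$ satisfies both new constraints.
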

\cref{claim:3.2-correct} completes the proof of \cref{lem:nexphard}.
\end{proof}

\subsection{Hardness for Uncoloured Graphs}

To prove \cref{thm:nexp-complete}, we will show that for any instance of \textsc{ColHomRec} we can compute an equivalent instance of \textsc{HomRec} where all constraint graphs are simple uncoloured graphs in polynomial time. 

\begin{lemma}\label{lemma:m-to-4}
    There exists a polynomial time reduction from {\upshape \textsc{ColHomRec}}\ (with an unbounded number of colour classes) to {\upshape \textsc{ColHomRec}}\ with at most $4$ colour classes.
\end{lemma}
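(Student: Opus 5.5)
We replace the $m$ colours $\mathsf{C}_1,\dots,\mathsf{C}_m$ of a \textsc{ColHomRec} instance $(F_1,n_1),\dots,(F_k,n_k)$ by a fixed palette of four colours, say $\{\mathsf{V},\mathsf{X},\mathsf{S},\mathsf{T}\}$. Each original colour $\mathsf{C}_j$ is encoded by a rigid \emph{colour gadget}: a path $\smallColour{j}$ with $j$ internal $\mathsf{X}$-vertices between an $\mathsf{S}$-end and a $\mathsf{T}$-end. The encoding proceeds in three steps.

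\textbf{Construction.} In every constraint graph $F_i$, recolour each vertex $v$ of colour $\mathsf{C}_j$ with the neutral vertex colour $\mathsf{V}$. Attach a private copy of the length-$j$ gadget by joining $v$ to its $\mathsf{S}$-end; if needed, also join $v$ to the $\mathsf{T}$-end, or use the cycle-closed variant $\smallColourCycle{j}{}{}$ with an anchor $\AAA$, so that the gadget is rigid. Call the result $F_i'$. We then add global constraints saying that the target graph has exactly the intended gadget structure.
\begin{itemize}
\item Using $\Fnm(\mathsf{V},\mathsf{S},n,1)$ and $\Fnm(\mathsf{S},\mathsf{V},n,1)$, every $\mathsf{V}$-vertex has exactly one $\mathsf{S}$-neighbour and vice versa.
\item Using $n{:}m$ constraints between $\mathsf{S},\mathsf{X},\mathsf{T}$, the $\mathsf{X}$-vertices induce disjoint paths, each with exactly one $\mathsf{S}$-end and one $\mathsf{T}$-end: every $\mathsf{X}$ has $\mathsf{X}$-degree at most $2$, and the endpoints are counted.
\item Zero-count constraints forbid everything else, e.g.\ $\mathsf{V}$–$\mathsf{X}$, $\mathsf{S}$–$\mathsf{S}$, $\mathsf{S}$–$\mathsf{T}$ and $\mathsf{V}$–$\mathsf{T}$ edges other than those intended.
\item Constraints $(\text{gadget of length }j, c_j)$ count how many $\mathsf{V}$-vertices carry each colour. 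Here $c_j$ is read off from the original constraint $(\smallKone\mathsf{C}_j,\cdot)$; if there is none, add it harmlessly by bounding it through the other constraints, or introduce a guessed count. Instead, a cleaner route is to leave class sizes free and only forbid ill-formed gadgets.
\end{itemize}

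\textbf{Correctness.} The forward direction is routine. Given a coloured $G$ satisfying the original constraints, form $G'$ by recolouring to $\mathsf{V}$ and attaching one gadget of length $j$ to each vertex of colour $\mathsf{C}_j$. A homomorphism $F_i'\to G'$ must send $\mathsf{V}$ to $\mathsf{V}$ and each attached gadget onto the unique gadget of its image vertex. That gadget exists only if the lengths match, since a path $\mathsf{S}\,\mathsf{X}^j\,\mathsf{T}$ maps into $\mathsf{S}\,\mathsf{X}^{j'}\,\mathsf{T}$ with the correct colours exactly when $j=j'$ and the gadget paths are induced. Hence $\hom(F_i',G')=\hom(F_i,G)$.

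For the backward direction, let $G'$ satisfy the new constraints. The structural constraints force each $\mathsf{V}$-vertex to own exactly one $\mathsf{S}$–$\mathsf{X}^j$–$\mathsf{T}$ path. Colour it $\mathsf{C}_j$ and delete the gadgets to obtain $G$. By the same bijection $\hom(F_i,G)=\hom(F_i',G')$. One point needs care here: a homomorphism may map a gadget path non-injectively, folding back along $\mathsf{X}$-vertices. Requiring the $\mathsf{T}$-end to be reached, and the $\mathsf{T}$-vertex to have a unique $\mathsf{X}$-neighbour, prevents folding, because a colour-respecting walk $\mathsf{S}\,\mathsf{X}^j\,\mathsf{T}$ in a disjoint union of such paths must run straight along a path of length exactly $j$. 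Gadgets have length at most $m$, so the reduction is polynomial.

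\textbf{Main obstacle.} The hardest step is forcing the gadget structure with homomorphism counts alone, because homomorphisms are not injective and the $\mathsf{X}$-paths might, for instance, close into cycles or branch. I would first use the $n{:}m$ constraints of \cref{lemma:stars}, combined with zero-count constraints on small patterns such as $\mathsf{X}$-vertices with three $\mathsf{X}$-neighbours. An $\mathsf{X}$-cycle not attached to any $\mathsf{S}$ is harmless, since it is never hit by a colour-respecting homomorphism from any $F_i'$. If the $n{:}m$ sizes are unknown, I would avoid them and rely only on zero-constraints that forbid branching and attachment errors. I expect this to work because every remaining configuration is a disjoint union of well-formed gadgets plus irrelevant components, and those contribute nothing to the counts.
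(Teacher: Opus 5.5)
\textbf{The gap: path length is not a homomorphism invariant.} Encoding $\mathsf{C}_j$ by a private pendant path $\mathsf{S}\,\mathsf{X}^j\,\mathsf{T}$ does not survive homomorphisms, and your claim that folding is prevented is false. An image of $\mathsf{S}\,\mathsf{X}^j\,\mathsf{T}$ in a gadget $s,x_1,\dots,x_{j'},t$ only needs a walk of $j-1$ steps from $x_1$ to $x_{j'}$. Such a walk exists whenever $j\ge j'\ge 2$ and $j\equiv j'\pmod 2$. For instance, $s,x_1,x_2,x_1,x_2,t$ is an image of $\mathsf{S}\,\mathsf{X}^4\,\mathsf{T}$ in $\mathsf{S}\,\mathsf{X}^2\,\mathsf{T}$, even though $t$ has a unique $\mathsf{X}$-neighbour and the $\mathsf{T}$-end is reached. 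Joining $v$ to the $\mathsf{T}$-end as well does not help, because the folding happens strictly inside the $\mathsf{X}$-segment.

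So your forward direction already fails ($\hom(F_i',G')\neq\hom(F_i,G)$), and no extra global constraints can repair this. Write $F'_j$ for a $\mathsf{V}$-vertex with its attached length-$j$ gadget, and take $j>j'\ge 2$ of equal parity. The yes-instance $(\smallKone\,\mathsf{C}_{j'},1),(\smallKone\,\mathsf{C}_{j},0)$ is mapped to a no-instance: every homomorphism $F'_{j'}\to G'$ yields one $F'_{j}\to G'$ by bouncing $(j-j')/2$ times between the images of the last two $\mathsf{X}$-vertices.

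\textbf{A second problem with your structural constraints.} Zero-count constraints cannot impose degree upper bounds. The star with an $\mathsf{X}$-centre and three $\mathsf{X}$-leaves maps into every graph with a single $\mathsf{X}$--$\mathsf{X}$ edge. The same holds for ``at most one $\mathsf{S}$-neighbour''. Moreover, the $\Fnm$ constraints you want to use need class sizes that the input does not supply.

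\textbf{The missing idea.} The paper uses a single \emph{shared} gadget $\SSS\,\XXX^m\,\TTT$ of fixed length $m$. It recolours every vertex to $\AAA$ and encodes colour $\AAA_i$ by an edge to the $i$-th $\XXX$-vertex of this gadget. Because there is only one gadget with known parameters, counts pin it down exactly: $(\smallKone\XXX,m)$, $(\smallKone\SSS,1)$, $(\smallKone\TTT,1)$, $(\XXX\smallKtwo\SSS,1)$, $(\XXX\smallKtwo\XXX,2m-2)$, $(\XXX\smallKtwo\TTT,1)$, $(G_m,1)$, and $(G_k,0)$ for $k<m$.

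Every $F_i^*$ contains the same length-$m$ gadget. A walk $\SSS\,\XXX^m\,\TTT$ in the target must cover a distance of $m-1$ in $m-1$ steps, so every homomorphism maps the gadget identically and folding is impossible. Adjacency to the $i$-th $\XXX$-vertex is therefore preserved, which gives $\hom(F_i^*,G^*)=\hom(F_i,G)$.

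Since the gadget vertices are now individually identified, ``each $\AAA$-vertex has at most one $\XXX$-neighbour'' becomes expressible by the zero constraints $(C_{m,i,j},0)$. In the backward direction, $\AAA$-vertices without an $\XXX$-neighbour are simply discarded, since no vertex of any $F_i^*$ can map to them.
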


Let the \emph{$m$-colour gadget} $G_m$ be the $3$-colour graph $\smallColour{m}$, and let
\[
    \cons{F}_X \coloneqq \big ( (\smallKone \XXX,m), (\smallKone \SSS,1), (\smallKone \TTT,1), (\XXX\smallKtwo \SSS,1), (\XXX\smallKtwo \XXX,2m-2), (\XXX\smallKtwo \TTT,1), (G_m,1), \bigcup_{k<m} (G_k,0)\big ).
\]
Observe that the $m$-colour gadget $G_m$ satisfies $\cons{F}_\XXX$. Any other graph $G$ that satisfies $\cons{F}_\XXX$ contains a single copy of $G_m$ as subgraph. Furthermore, $G$ cannot contain other vertices coloured $\SSS$, $\TTT$, or $\XXX$.

\begin{proof}[Proof of \cref{lemma:m-to-4}.]
    From an $m$-colour constraint graph $F=(V(F),E(F),\AAA_1^F,\dots,\AAA_m^F)$ we compute a $4$-colour graph $F^*=(V(F^*),E({F^*}),\SSS^{F^*},\TTT^{F^*},\XXX^{F^*},\AAA^{F^*})$ as follows:
    \begin{itemize}
        \item $V(F^*) \coloneqq V(F) \cupp \{s_S,s_1,\dots,s_m,s_T\},$
        \item $E(F^*) \coloneqq E(F) \cupp \bigcup_{i\in [m]} \{vs_i \mid v\in \AAA_i^F\}\cupp \{s_is_{i+1} \mid i\in[m-1]\}\cupp\{s_Ss_1,s_ms_T\}$,
        \item $\SSS^{F^*} \coloneqq \{s_S\},\; \TTT^{F^*} \coloneqq \{s_T\}, \; \XXX^{F^*} \coloneqq \{s_i\mid i\in[m-1]\}$, and $\AAA^{F^*} \coloneqq V(F)$.
    \end{itemize}

    The graph $F^*$ is the union of $F$ and $G_m$, replacing any colour from $F$ with colour $\AAA$ and an edge to $G_m$. Observe that the example shown in \cref{fig:four-graphs} preserves the number of homomorphisms for each pair of graphs. In the following proof of \cref{claim:4-colour-encoding}, we will make this fact explicit.

    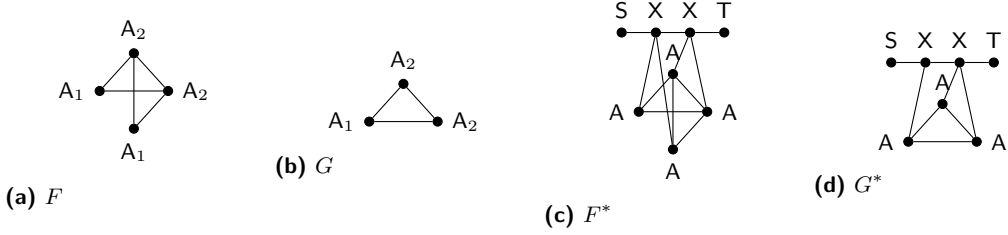
\begin{figure*}[ht]
      \centering
\begin{minipage}{0.24\textwidth}
        \centering
\begin{tikzpicture}[scale=0.5, every node/.style={font=\footnotesize}]
\node[smallvertex, label=above:{\(\AAA_2\)}] (v2) at (0,1) {};
          \node[smallvertex, label=left:{\(\,\AAA_1\)}]  (v1) at (-0.9,0) {};
          \node[smallvertex, label=right:{\(\AAA_2\)\,}] (v3) at (0.9,0) {};
          \node[smallvertex, label=below:{\(\AAA_1\)}] (v4) at (0,-1) {};
\foreach \a/\b in {v1/v2,v1/v3,v2/v3,v2/v4,v3/v4}
            \draw (\a) -- (\b);
        \end{tikzpicture}
        \subcaption{ \(F\) }
      \end{minipage}\hfill
      \begin{minipage}{0.24\textwidth}
        \centering
\begin{tikzpicture}[scale=0.5, every node/.style={font=\footnotesize}]
          \node[smallvertex, label=above:{\(\AAA_2\)}] (g2) at (0,1) {};
          \node[smallvertex, label=left:{\(\AAA_1\)}]  (g1) at (-0.9,0) {};
          \node[smallvertex, label=right:{\(\AAA_2\)}] (g3) at (0.9,0) {};
          \draw (g1) -- (g2) -- (g3) -- (g1);
        \end{tikzpicture}
        \subcaption{ \(G\) }
      \end{minipage}\hfill
      \begin{minipage}{0.24\textwidth}
        \centering
\begin{tikzpicture}[scale=0.5, every node/.style={font=\footnotesize}]
\node[smallvertex, label=above:{\(\AAA\)}] (V2) at (0,1) {};
          \node[smallvertex, label=left:{\(\AAA\)}]  (V1) at (-0.9,0) {};
          \node[smallvertex, label=right:{\(\AAA\)}] (V3) at (0.9,0) {};
          \node[smallvertex, label=below:{\(\AAA\)}] (V4) at (0,-1) {};
\foreach \a/\b in {V1/V2,V1/V3,V2/V3,V2/V4,V3/V4}
            \draw (\a) -- (\b);
    
\node[smallvertex, label=above:{\(\SSS\)}] (s) at (-1.35,2.1) {};
          \node[smallvertex, label=above:{\(\XXX\)}] (p1) at (-0.45,2.1) {};
          \node[smallvertex, label=above:{\(\XXX\)}] (p2) at (0.45,2.1) {};
          \node[smallvertex, label=above:{\(\TTT\)}] (t) at (1.35,2.1) {};
          
          \draw (s) -- (p1) -- (p2) -- (t);
    
          \draw (V1) -- (p1);
          \draw (V4) -- (p1);
          \draw (V2) -- (p2);
          \draw (V3) -- (p2);
        \end{tikzpicture}
        \subcaption{ \(F^{*}\) }
      \end{minipage}\hfill
      \begin{minipage}{0.24\textwidth}
        \centering
\begin{tikzpicture}[scale=0.5, every node/.style={font=\footnotesize}]
\node[smallvertex, label=above:{\(\AAA\)}] (G2) at (0,1) {};
          \node[smallvertex, label=left:{\(\AAA\)}]  (G1) at (-0.9,0) {};
          \node[smallvertex, label=right:{\(\AAA\)}] (G3) at (0.9,0) {};
          \draw (G1) -- (G2) -- (G3) -- (G1);
    
\node[smallvertex, label=above:{\(\SSS\)}] (s) at (-1.35,2.1) {};
          \node[smallvertex, label=above:{\(\XXX\)}] (q1) at (-0.45,2.1) {};
          \node[smallvertex, label=above:{\(\XXX\)}] (q2) at (0.45,2.1) {};
          \node[smallvertex, label=above:{\(\TTT\)}] (t) at (1.35,2.1) {};
    
          \draw (s) -- (q1) -- (q2) -- (t);
          
          \draw (G1) -- (q1);
          \draw (G2) -- (q2);
          \draw (G3) -- (q2);
        \end{tikzpicture}
        \subcaption{ \(G^{*}\) }
      \end{minipage}
    
      \caption{
        \(F\) is a \(K_4\) minus one edge; two vertices share colour \(\AAA_1\) so the missing edge is between them (this enables a homomorphism to \(G\)).
        \(G\) is the triangle \(\AAA_1,\AAA_2,\AAA_2\).
        \(F^{*}\) (resp.\ \(G^{*}\)) augments the base graph by a $2$-\emph{colour gadget} \(\SSS\!-\!\XXX\!-\!\XXX\!-\!\TTT\) and, in correspondence to their previous colour, connects the base vertices to the gadget.}
      \label{fig:four-graphs}
    \end{figure*}
    
    \begin{claim}\label{claim:4-colour-encoding}
        For all $m$-colour graphs $F,G$ and $n\in \NN$ holds that $G$ satisfies $(F,n)$ if and only if $G^*$ satisfies $(F^*,n)$.
    \end{claim}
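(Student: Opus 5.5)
The plan is to prove the stronger statement $\hom(F^*,G^*)=\hom(F,G)$ for all $m$-colour graphs $F,G$; the claim then follows immediately for every $n$. Here $G^*$ is built from $G$ exactly as $F^*$ is built from $F$: the vertices of $G$ get colour $\AAA$, a copy of the $m$-colour gadget $s_S,s_1,\dots,s_m,s_T$ is added, and every vertex of colour $\AAA_i$ in $G$ is joined to $s_i$. I read the colour classes as $\XXX=\{s_1,\dots,s_m\}$, consistent with $G_m$ having $m$ vertices coloured $\XXX$ and with \cref{fig:four-graphs}.

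The argument is a bijection between homomorphisms. First, given a colour-preserving homomorphism $h\colon F\to G$, extend it to $h^*\colon F^*\to G^*$ by mapping each gadget vertex of $F^*$ to the identically named gadget vertex of $G^*$. Edges of $F$ are preserved because $h$ is a homomorphism, and gadget path edges are preserved trivially. For an edge $vs_i$ with $v\in\AAA_i^F$, we have $h(v)\in\AAA_i^G$, so $h(v)s_i\in E(G^*)$. All colours ($\AAA,\SSS,\TTT,\XXX$) are respected. The map $h\mapsto h^*$ is injective, since $h$ is recovered as the restriction of $h^*$ to $V(F)$.

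Second, I show surjectivity, which is the main step. Let $g\colon F^*\to G^*$ be a colour-preserving homomorphism. Since $\SSS$ and $\TTT$ are singleton classes in $G^*$, $g(s_S)=s_S$ and $g(s_T)=s_T$. The path $s_S s_1\cdots s_m s_T$ must be mapped onto a walk of length $m+1$ from $s_S$ to $s_T$ whose inner vertices lie in $\XXX^{G^*}$. Inside $\XXX$, the only edges are those of the path $s_1\cdots s_m$. Also, $s_S$ is adjacent only to $s_1$, and $s_T$ only to $s_m$, among $\XXX$-vertices. A walk from $s_1$ to $s_m$ of length $m-1$ in a path with $m$ vertices must be the path itself, since the distance is $m-1$. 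Hence $g(s_i)=s_i$ for all $i$. Now take $v\in\AAA_i^F$. Its image $g(v)$ is an $\AAA$-vertex, so it lies in $V(G)$, and it is adjacent to $s_i$. By construction, the $\AAA$-neighbours of $s_i$ are exactly $\AAA_i^G$. So $g|_{V(F)}$ preserves the original colours, and it preserves edges of $F$ because $G^*[V(G)]=G$. Therefore $g=(g|_{V(F)})^*$.

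The only delicate points are the rigidity of the gadget, namely the walk-length argument together with the fact that no extra edges exist between gadget vertices, and the boundary cases. If $m=1$, the path has a single $\XXX$-vertex and the argument is immediate. If some colour $\AAA_i$ is absent from $F$ or from $G$, nothing changes, because the gadget is always present. Combining the two steps gives $\hom(F^*,G^*)=\hom(F,G)$, and hence the claim for every $n$.
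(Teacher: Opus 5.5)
Your proof is correct and takes essentially the paper's route: the colour gadget is forced to map identically, and then restricting to $V(F)$, or extending by the identity on the gadget, gives a bijection between homomorphisms $F^*\to G^*$ and $F\to G$. You also spell out the rigidity step that the paper only asserts (singleton $\SSS$/$\TTT$ classes plus the walk-length argument on the induced $\XXX$-path), and you rightly read $\XXX^{F^*}$ as $\{s_1,\dots,s_m\}$ despite the paper's index range $[m-1]$.
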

    
    \begin{proof}[Proof of \cref{claim:4-colour-encoding}.]
        In any homomorphism $h: F^*\to G^*$, the $m$-colour gadget of $F^*$ has to be mapped identically to the $m$-colour gadget of $G^*$. If $F^*$ contains any $\AAA$-coloured vertices, because they have exactly one $\XXX$-neighbour, $vs_i \in E(F^*)$ holds if and only if $h(v)s_i \in E(G^*)$ does. By construction this is equivalent to $v \in \AAA_i^F \Leftrightarrow h(v) \in \AAA_i^G$, and thus, it holds that
        $
            \hom(F,G) = \abs{\{h|_{V_F} \mid h\in\text{Hom}(F^*,G^*)\}}=\hom(F^*,G^*)$.
        \end{proof} 
    
    For all $i<j \in [m]$ we define the modified $m$-colour gadget $C_{m,i,j} \coloneqq \smallColourCycle{m}{i}{j}$, where the $\AAA$ vertex is connected with the $i$th and $j$th $\XXX$ vertex.
    
    Let $\cons{F}\coloneqq (F_i,m_i)_{i\in [k]}$, where $F_i$ are coloured with $\AAA_1,\dots,\AAA_m$. The reduction produces the following constraints $\cons{F^*}$ with $4$ colour classes:
    \[
         \cons{F^*}\coloneqq \big ( (F_i^*,m_i)_{i\in [k]} \cup \cons{F}_X\cup (C_{m,i,j},0)_{i<j\in [m]}\big ).
    \]
    To show that $\cons{F}\in \textsc{ColHomRec} \Leftrightarrow \cons{F^*}\in \textsc{ColHomRec}$, first, assume that the $m$-colour graph $G=(V(G),E(G),\AAA_1^G,\dots,\AAA_m^G)$ satisfies $\cons{F}$.
    
    Then $G^*=(V(G^*),E({G^*}),\SSS^{G^*},\TTT^{G^*},\XXX^{G^*},\AAA^{G^*})$, the graph we obtain from $G$ with the same construction that produced the list of constraints, satisfies $(F_i^*,m_i)_{i\in [k]}$ because of \cref{claim:4-colour-encoding}. Since $G^*$ contains $G_m$ as subgraph, and all other vertices are coloured $\AAA$, $G^*$ satisfies $\cons{F}_\XXX$. Furthermore, each vertex $v\in V(G^*)$ has at most one $\XXX$-coloured neighbour, and thus, $G^*$ satisfies $(C_{m,i,j},0)$ for each $i<j\in[m]$.
    
    \begin{claim}\label{claim:m:4-correct}
        If a $4$-colour graph $G$ satisfies $\cons {F^*}$, then there exists an $m$-colour graph $G'$ that satisfies $\cons {F}$. Furthermore, $\cons {F^*}$ can be computed in polynomial time from $\cons {F}$.
    \end{claim}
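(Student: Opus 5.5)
The plan is to build $G'$ out of $G$ by reading off, for each $\AAA$-coloured vertex, which gadget vertex it hangs off. First I pin down the structure of the colour gadget inside $G$. Then I delete the irrelevant vertices and recolour the rest. Finally I transfer the homomorphism counts using \cref{claim:4-colour-encoding}.

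\emph{Step 1: the gadget in $G$.} The constraints $(\smallKone\SSS,1)$ and $(\smallKone\TTT,1)$ give unique vertices $s_S$ and $s_T$, and $(\smallKone\XXX,m)$ gives exactly $m$ vertices coloured $\XXX$.
\begin{itemize}
\item Since $\hom(G_m,G)=1$, there is a unique homomorphic $\SSS$--$\TTT$ walk $s_S,x_1,\dots,x_m,s_T$ whose inner vertices are all coloured $\XXX$.
\item Suppose some $x_a=x_b$ with $a<b$. Cutting out the closed subwalk yields an $\SSS$--$\TTT$ walk with $m-(b-a)$ inner $\XXX$-vertices, i.e.\ a homomorphism from some $G_k$ with $k<m$. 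This contradicts $(G_k,0)$, so the $x_i$ are pairwise distinct.
\item Hence $x_1,\dots,x_m$ are exactly the $\XXX$-vertices, and they carry the $m-1$ path edges $x_ix_{i+1}$. The constraint $(\XXX\smallKtwo\XXX,2m-2)$ counts ordered pairs, so these are the only $\XXX$--$\XXX$ edges.
\item Likewise $(\XXX\smallKtwo\SSS,1)$ and $(\XXX\smallKtwo\TTT,1)$ show that $s_Sx_1$ and $x_ms_T$ are the only $\XXX$--$\SSS$ and $\XXX$--$\TTT$ edges.
\end{itemize}
So $G$ contains exactly the gadget $G_m$, with a canonical, orientation-fixed indexing $x_1,\dots,x_m$, and every remaining vertex is coloured $\AAA$. (I will use the reading of $G_m$ as having $m$ inner $\XXX$-vertices, matching the count $(\smallKone\XXX,m)$.)

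\emph{Step 2: defining $G'$.} By the constraints $(C_{m,i,j},0)$, no $\AAA$-vertex is adjacent to two distinct $x_i,x_j$. So each $\AAA$-vertex has at most one $\XXX$-neighbour.
\begin{itemize}
\item Let $U$ be the set of $\AAA$-vertices having exactly one $\XXX$-neighbour.
\item Set $G'\coloneqq G[U]$, and colour $v\in U$ with $\AAA_i$ when its $\XXX$-neighbour is $x_i$.
\end{itemize}
By Step 1, $G'^*$ is exactly the subgraph of $G$ induced by $U$ together with the gadget.

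\emph{Step 3: transferring the counts.} $G$ differs from $G'^*$ only by $\AAA$-vertices with no $\XXX$-neighbour, together with their incident edges.
\begin{itemize}
\item Consider any homomorphism $F_i^*\to G$. It fixes the gadget identically, since that is its only image.
\item Every $\AAA$-vertex of $F_i^*$ has an $\XXX$-neighbour, so it maps into $U$.
\item Therefore $\hom(F_i^*,G)=\hom(F_i^*,G'^*)$, and by \cref{claim:4-colour-encoding} this equals $\hom(F_i,G')$. Hence $G'$ satisfies $(F_i,m_i)$ for all $i$.
\end{itemize}

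\emph{Step 4: running time.} $F_i^*$ has $|F_i|+m+2$ vertices. $\cons{F}_\XXX$ consists of $O(m)$ constraints of size $O(m)$, and there are $O(m^2)$ constraints $C_{m,i,j}$. All counts are copied or are at most $2m$. Since $m$ is at most the number of colours occurring in the input, everything is polynomial.

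I expect Step 1, excluding degenerate (non-injective) images of the gadget, to be the main obstacle. The constraints $(G_k,0)$ for $k<m$ handle it via the walk-shortening argument, and the uniqueness of $\SSS$ and $\TTT$ fixes the orientation of the indexing.
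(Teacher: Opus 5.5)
Your proposal is correct and follows the same route as the paper. You pin down the unique gadget, keep only the $\AAA$-vertices with an $\XXX$-neighbour (exactly one, by the $(C_{m,i,j},0)$ constraints), recolour them by that neighbour, and transfer the counts via \cref{claim:4-colour-encoding}. Your walk-shortening argument with the $(G_k,0)$ constraints also supplies the gadget-uniqueness step that the paper only asserts.

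One small imprecision, which the paper shares: the subgraph of $G$ induced by $U$ and the gadget may contain extra $\SSS$--$\AAA$, $\TTT$--$\AAA$ or $\SSS$--$\TTT$ edges, so it need not literally equal $(G')^*$. This is harmless, because $F_i^*$ has no edges of these colour types, so colour-preserving homomorphisms never use them and $\hom(F_i^*,G)=\hom(F_i^*,(G')^*)$ still holds.
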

    \cref{claim:m:4-correct} completes the proof of \cref{lemma:m-to-4}.
\end{proof}

\begin{lemma}\label{lemma:4-to-1}
    There exists a polynomial time reduction from {\upshape \textsc{ColHomRec}}\ with at most $4$ colour classes to \HomR\ (with uncoloured simple graphs).
\end{lemma}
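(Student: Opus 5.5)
We encode the four colours $\SSS,\TTT,\XXX,\AAA$ by attaching rigid uncoloured gadgets. Fix four pairwise non-isomorphic cliques $K_{q_1},\dots,K_{q_4}$ with $q_1<\dots<q_4$ all larger than the maximum clique size occurring in any constraint graph. In a graph with $c$ colours we attach to every vertex of colour $c$ a private copy of $K_{q_c}$ by identifying the vertex with one of the clique's vertices. (Alternatively one could attach distinct pendant structures; cliques are convenient because large cliques in the target can only come from gadgets.) For a $4$-coloured constraint graph $F$ let $\widehat F$ be the uncoloured result. We then replace each constraint $(F,n)$ by $(\widehat F, n\cdot\prod_{v\in V(F)} \hom_\ast)$, where $\hom_\ast$ is the number of homomorphisms of the pendant clique rooted at $v$ into the gadget clique at its image, namely $(q_c-1)!$. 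This guarantees that a target of the form $\widehat G$ satisfies the new constraint exactly when $G$ satisfies the old one.

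\textbf{Step 1 (completeness).} If $G$ satisfies the coloured instance, then $\widehat G$ satisfies the new instance. We first check that a homomorphism $\widehat F\to\widehat G$ sends each $K_{q_c}$-copy into a $K_{q_c}$-copy. Since $\widehat G$ has no cliques of size $q_c$ except inside gadgets, and gadgets of different sizes meet the core only in one vertex, a clique of order $q_c$ must land injectively inside a gadget of size at least $q_c$. We must rule out landing in the larger gadgets $K_{q_{c'}}$ with $c'>c$. To do this we add, for each colour $c$, a further constraint counting the rooted structures. The cleaner fix is to make gadgets incomparable, replacing $K_{q_c}$ by graphs $H_c$ that are pairwise hom-incomparable cores, for instance Kneser graphs or large odd-girth graphs with distinct chromatic numbers. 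Homomorphisms of $H_c$ into $\widehat G$ then can only map onto copies of $H_c$, and the count factorises as the original count times $\prod_v |\mathrm{Aut}$-rooted$(H_c)|$.

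\textbf{Step 2 (soundness).} Here we show that any uncoloured $G'$ satisfying the new constraints can be decoded. Add constraints $(H_c,\, a_c\cdot|\mathrm{Aut}(H_c)|)$ fixing how many gadget copies exist, along with constraints forbidding gadgets from overlapping badly, namely count-$0$ constraints on two glued gadgets. We define the colour of a vertex by which gadget it is rooted in. The main obstacle is that $G'$ may contain gadget copies or core vertices not in the intended shape: vertices belonging to no gadget, extra edges inside gadgets, and images of core vertices landing inside gadget interiors. I would handle this by restricting $G'$ to the subgraph induced by gadget roots and observing that the decoded coloured graph $G$ satisfies $\hom(\widehat F,G')=\hom(F,G)\cdot\prod(\ldots)$. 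This requires showing that core vertices of $\widehat F$ (which carry their own gadget) map to roots. That holds because the image of a rooted gadget must be a gadget copy and its root must go to a root, which I would enforce with rigidity: each $H_c$ is a core with trivial automorphism stabiliser on the root, for instance rooted asymmetric gadgets. Since every constraint graph vertex has a gadget, every homomorphism is determined by its core part plus a fixed number of gadget maps, so the counts match. Polynomial size follows since the gadgets have constant size and the count multipliers have polynomially many bits in binary.
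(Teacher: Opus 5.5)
Your proposal has a genuine gap, and it already appears in the completeness direction. You keep the edges of $F$, and hence of $G$, unchanged. So the part of $\widehat G$ outside the gadgets is the arbitrary coloured graph $G$ itself. Your claim that homomorphisms of $H_c$ into $\widehat G$ can only land on copies of $H_c$ ignores maps into this core, and maps into mixtures of core and gadgets.

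Adjacency inside a colour class is allowed, so $G$ can contain arbitrarily large cliques. Concretely, the one-colour instance $(\smallKone\AAA,N),(\AAA\smallKtwo\AAA,N(N-1))$ forces $G[\AAA^G]\cong K_N$. For large $N$ (note $\chi(H_\AAA)\le|V(H_\AAA)|$ is polynomial in the input, whereas $N$ is given in binary), $H_\AAA$ maps into this clique, so $\hom(\widehat F,\widehat G)\neq\hom(F,G)\cdot\prod(\ldots)$.

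With your rooted asymmetric gadgets (multiplier $1$) the reduced instance is even unsatisfiable. Write $\rho(w)$ for the number of homomorphisms $H_\AAA\to G'$ sending the root to $w$. The two new constraints say $\sum_w\rho(w)=N$ and $\sum_{ww'\in E(G')}\rho(w)\rho(w')=N(N-1)$, summing over ordered pairs. Since the left side of the second is at most $N^2-\sum_w\rho(w)^2\le N^2-N$, this forces $\rho\in\{0,1\}$ and the $N$ root images to form a $K_N$ in $G'$. But then homomorphisms $H_\AAA\to K_N$ alone give $\rho(w)\ge N-1\ge 2$ at every clique vertex, a contradiction.

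The discrepancy depends on $G$ beyond $\hom(F,G)$, so no choice of multipliers repairs it. This is precisely why the paper replaces every edge of the constraint graphs by the bidirectional gadget of \cite{BokerHRSS24}. The coloured vertices are then never directly adjacent, and the Kneser indicator gadgets cannot be absorbed by the core of the target.

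Your soundness step also fails as stated. Homomorphism counts cannot forbid configurations that are homomorphic images of the intended structure. Two copies of $H_c$ glued along vertices fold onto a single copy, so a count-$0$ constraint on such a glued graph is violated by $\widehat G$ itself whenever colour $c$ occurs. What is needed, and what the paper does, is to prescribe the correct counts for all non-injective images of pairs of gadgets; this is what forces the gadgets to occur as disjoint subgraphs.

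Similarly, the constraint $(H_c,a_c|\mathrm{Aut}(H_c)|)$ presupposes the size $a_c$ of colour class $c$, which a general \textsc{ColHomRec} instance does not determine. Moreover, $\hom(H_c,G')$ also counts images outside gadget copies.

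Finally, you derive ``core vertices map to roots'' from rigidity of $H_c$. Rigidity only helps once you know that the image of every gadget is an entire gadget copy in $G'$, and that is exactly the structural statement that still has to be established from the constraints.
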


This part of the reduction requires a finite number of homomorphically incomparable graphs, called Kneser graphs, that behave similarly to colours when they are attached to vertices. We modify the constraints $(F,m)$ by attaching indicator gadgets to all vertices of $F$. Because we do not disallow adjacency between vertices of the same colour, we cannot avoid replacing edges with the bidirectional gadget from \cite{BokerHRSS24}. For further details, we refer to the appendix.

\section{\texorpdfstring{$\Sigma_2^p$}{Sigma_2^p}-Completeness of Unary Homomorphism Reconstruction}\label{sec:unary}

In contrast to \HomR, the number of homomorphisms in constraints of \UnHomR\ is polynomially bound by the input size. As we will see, \UnHomR\ is contained in the second level of the polynomial hierarchy.
In this section we will be talking about labelled graphs $F$ or $G$, with a set of labels $\CL\coloneqq\{\ell_1,\dots,\ell_m\}$ for some $m\in\NN$, and some labelling function $\ell:\CL\to V(F)$ that we do not explicitly refer to.

\begin{lemma}
    $\textsc{\upshape UnHomRec} \in \Sigma_2^p$.
\end{lemma}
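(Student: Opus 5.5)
We would show the membership by exhibiting an $\exists\forall$ procedure with polynomially bounded quantifiers and a polynomial-time matrix. The existential quantifier guesses a candidate graph $G$. By the observation \cite[Lemma~6]{BokerHRSS24} recalled in the introduction, if the instance $(F_1,m_1),\dots,(F_k,m_k)$ has any solution, it has one of order at most $\sum_{i=1}^k m_i|F_i|$. Since the $m_i$ are in unary, this bound is polynomial in the input size, so $G$ can be written down in polynomially many bits.

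It then remains to verify $\hom(F_i,G)=m_i$ for all $i$ in $\coNP$. Computing $\hom(F_i,G)$ exactly would in general require $\#\PPP$. Here, however, the target value $m_i$ is itself at most the input length. So we do not count; we exhibit or rule out lists of homomorphisms.

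For each $i$, the equality $\hom(F_i,G)=m_i$ splits into two inequalities.
\begin{itemize}
\item[(a)] $\hom(F_i,G)\ge m_i$. This is an $\NP$ statement: one can guess $m_i$ pairwise distinct maps $V(F_i)\to V(G)$ and check in polynomial time that each is a homomorphism. The certificate has size $m_i\cdot|F_i|\cdot\log|G|$, which is polynomial because $m_i$ is unary.
\item[(b)] $\hom(F_i,G)\le m_i$. This is a $\coNP$ statement: for all lists of $m_i+1$ pairwise distinct maps $V(F_i)\to V(G)$, at least one map fails to be a homomorphism.
\end{itemize}

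To put everything into the form $\exists\forall$, we move the witnesses for (a) into the outer existential quantifier. The machine existentially guesses $G$ together with, for every $i$, a list $h^i_1,\dots,h^i_{m_i}$ of distinct homomorphisms $F_i\to G$. It then universally quantifies over, for every $i$, a list $g^i_1,\dots,g^i_{m_i+1}$ of distinct maps $V(F_i)\to V(G)$. The deterministic polynomial-time matrix checks two things: that each $h^i_j$ is a homomorphism and they are pairwise distinct, and that for each $i$ some $g^i_j$ is not a homomorphism or the $g^i_j$ are not pairwise distinct. The latter implication is trivially true when the list is not distinct.

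The formula holds exactly when some $G$ of bounded order has $\hom(F_i,G)\ge m_i$ and $\hom(F_i,G)\le m_i$ for all $i$. By the size bound, this is equivalent to the instance being a yes-instance.

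The only real subtlety is the polynomial bound on all quantified objects, and both facts it needs are already in place. The size bound on $G$ is the cited lemma. The bound on the witnesses uses unary encoding of the $m_i$, and $m_i+1$ maps of size $|F_i|\log|V(G)|$ each is again polynomial. One edge case is $m_i=0$: there (a) is vacuous and (b) reduces to "no map is a homomorphism", which fits the same scheme with lists of length $1$. With this, $\UnHomR\in\Sigma_2^p$.
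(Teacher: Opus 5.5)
Your proposal is correct and follows the paper's route. You bound the order of $G$ polynomially via the observation of B\"oker et al.\ (using that the $m_i$ are unary), existentially guess $G$ together with $m_i$ distinct homomorphisms for each $F_i$, and verify in $\coNP$ that there are no further homomorphisms; your $\exists\forall$ formula is the quantifier form of the paper's $\NP^{\NP}$ machine, and the only cosmetic difference is that the paper's oracle asks for a single homomorphism outside the guessed list, whereas you universally rule out $m_i+1$ distinct homomorphisms.
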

\begin{proof}
    We can assume that the graph $G$ has at most $\sum_{i=1}^m h_i \abs{V(F_i)}$ vertices by the observation from $\cite{BokerHRSS24}$. An $\textsf{NP}^{\textsf{NP}}$ machine first non-deterministically guesses $G$ and $h_i$ different homomorphisms $h: F_i\to G$. We verify that there are no other homomorphisms by calling the $\NP$ oracle.
\end{proof}

\begin{remark}
    We remark that for constraint graphs from a class of bounded treewidth, the problem \UnHomR\ is in \NP\, since counting homomorphisms is in polynomial time. It remains open, whether there are classes for which this problem is \NP-complete.
\end{remark}

The \NonThreeColouringExtension problem is complete for $\Sigma_2^p$ (see Thm. 11.4 in \cite{AjtaiFS00}). Its containment in $\Sigma_2^p$ is obvious, since it is of the form ``$x\in L$ if and only if $y$ exists such that $f(x,y)\in\nonthreecolouring$'' and we recall that $\nonthreecolouring \in \coNP$.

\dproblem{\NonThreeColouringExtension}
{Graph $F$.}
{Is there a 3-colouring of the degree $1$ vertices of $F$ that cannot be extended to a 3-colouring of $F$?}

\begin{lemma}
    $\NonThreeColouringExtension \leq_p \textsc{\upshape UnHomRec}$.
\end{lemma}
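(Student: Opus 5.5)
Given an instance $F$ of \NonThreeColouringExtension\ with degree-$1$ vertices $u_1,\dots,u_p$, I want constraints whose solutions $G$ essentially consist of a triangle $K_3$ together with a pendant structure encoding a colouring $c$ of $u_1,\dots,u_p$. The key count is then
\[
\hom(F_c, G) = 0 \iff c \text{ does not extend to a $3$-colouring of } F,
\]
where $F_c$ is $F$ with its leaves pinned to the chosen colour vertices. So the plan is to build $G$ from a fixed rigid core, with the colouring guessed through $G$, and to add a single zero-count constraint expressing non-extendability. I will first work with labelled/coloured constraint graphs and then remove labels and colours with the gadgets of \cref{lemma:m-to-4} and \cref{lemma:4-to-1}. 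Both gadgets only blow up sizes polynomially, keep counts unchanged (\cref{claim:4-colour-encoding}), and therefore keep the counts in unary.

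The construction runs in three steps.

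First, the core. I force colour classes $\mathsf{K}_1,\mathsf{K}_2,\mathsf{K}_3$, each of size $1$, via constraints $(\smallKone\,\mathsf{K}_a,1)$. I require the edges between them to form a triangle, via $(\mathsf{K}_a\smallKtwo\mathsf{K}_b,1)$, and forbid edges inside a class.

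Second, the guessed colouring. For each $i\in[p]$ I introduce a colour $\mathsf{U}_i$ with $(\smallKone\,\mathsf{U}_i,1)$. I demand that its vertex has exactly one neighbour among the $\mathsf{K}$-vertices, using counts $\sum_a\hom(\mathsf{U}_i\smallKtwo\mathsf{K}_a)=1$. Since constraints fix individual counts rather than sums, I instead use a merged colour $\mathsf{K}$ for all three triangle vertices, distinguished by attached label paths, and impose $(\mathsf{U}_i\smallKtwo\mathsf{K},1)$. The only freedom in $G$ is then which triangle vertex each $u_i$ attaches to, and this choice encodes $c(u_i)$.

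Third, the universal part. Let $F'$ be $F$ with every vertex coloured by a colour $\mathsf{W}$ that can map onto the triangle. Each leaf $u_i$ is replaced by a vertex coloured $\mathsf{U}_i$, i.e.\ its neighbour $w$ becomes adjacent to $u_i$. I must then make $\mathsf{W}$-vertices map only into the triangle and make the edge $w u_i$ reproduce ``$c(w)\neq c(u_i)$''. For this I represent the leaf's colour by the triangle vertex adjacent to $\mathsf{U}_i$, and attach each $w$ to the $\mathsf{U}_i$-vertex through a path $w - x - u_i$ where $x$ must be the triangle vertex adjacent to $u_i$. Concretely, in $F'$ the leaf $u_i$ is replaced by a $\mathsf{K}$-coloured vertex $x_i$ adjacent both to its old neighbour $w$ and to a $\mathsf{U}_i$-vertex. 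All internal vertices of $F$ are coloured $\mathsf{K}$. A homomorphism $F'\to G$ is then exactly a proper $3$-colouring of $F$ extending $c$, because $x_i$ is forced to $c(u_i)$ by the unique $\mathsf{K}$-neighbour of $u_i$. I add the constraint $(F',0)$.

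For correctness, if $c$ is non-extendable, the graph built from $c$ satisfies all constraints. Conversely, every solution is forced, up to the choice of attachments, to look like this. The main obstacle is making this rigidity argument airtight with exact counts only. I must check that no extra vertices or edges can sneak in (colour-class sizes are fixed, and each additional adjacency changes some small count), that $\mathsf{K}$ induces exactly a triangle, and that the label-to-colour translation stays polynomial with unary counts (all counts here are $0$, $1$ or $3$).
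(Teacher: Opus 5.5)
Your combinatorial core is correct, and it is the paper's reduction written with colours instead of labels. The paper forces $K_3$ via $\hom(\smallKone)=3$ and $\hom(\smallKtwo)=6$, lets the position of label $\ell_i$ on the triangle encode the colour of the $i$-th leaf, and adds $\hom(F')=0$. Your unique $\mathsf{U}_i$-vertex with exactly one $\mathsf{K}$-neighbour plays exactly the role of $\ell_i$; you should state the constraints $(\smallKone\,\mathsf{K},3)$ and $(\mathsf{K}\smallKtwo\mathsf{K},6)$ explicitly. The ``label paths'' distinguishing the triangle vertices are unnecessary. So is the rigidity worry: $\hom(F',G)$ depends only on $G[\mathsf{K}^G]$ and on the $\mathsf{K}$-neighbour of each $\mathsf{U}_i$-vertex, so any additional structure in $G$ is harmless.

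The gap is in the final step, where you remove the colours while ``keeping the counts in unary''.
\begin{itemize}
\item \cref{claim:4-colour-encoding} only concerns \cref{lemma:m-to-4}. That step is indeed unary-safe, since its new counts are $O(m)$.
\item \cref{lemma:4-to-1} was designed for binary counts, and nothing in it guarantees that counts are preserved. It attaches Kneser-graph indicator gadgets to every vertex and replaces edges by bidirectional gadgets. Kneser graphs are vertex-transitive cores, so each attached copy typically multiplies the count by its number of root-fixing automorphisms, which is greater than $1$.
\item You have $p+1$ colours. After \cref{lemma:m-to-4}, every positive-count constraint graph, e.g.\ $(G_m,1)$ or $\big((\smallKone\,\mathsf{K})^*,3\big)$, contains the $(m+2)$-vertex path $\SSS\,\XXX\cdots\XXX\,\TTT$ with $m=p+1$.
\item Its count is therefore multiplied by a factor exponential in $p$, and it can no longer be written in unary in polynomial time.
\end{itemize}

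To close the gap you need one of two fixes. Either use pairwise homomorphically incomparable indicator gadgets that have exactly one root-preserving self-homomorphism, so that counts are literally unchanged. Or avoid the unbounded-colour detour, as the paper does by working with labels. There, the only positive-count constraints are $\smallKone$, $\smallKtwo$ and single labelled vertices, so gadget factors are not compounded over $\Theta(p)$ vertices.
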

\begin{proof}
    We will use labelled graphs as constraints. Encoding the labels as Kneser graph gadgets (see uncoloured hardness for $\NP^{\SP}$ in \cite{BokerHRSS24Arxiv}, Appendix C), this reduction can be adapted to produce unlabelled graphs instead.

    Let $F$ be an instance of \NonThreeColouringExtension, and fix an order $\{v_1,\dots,v_m\}$ on all degree $1$ vertices of $F$. We define the labelled graph $F'$ by labelling each $v_i$ by a distinct label $\ell_i$. The following constraints are created by the reduction:
    
    \begin{multicols}{2}
		\begin{enumerate}
        \item\label{hcfg} $\hom(F') = 0$,
        \item\label{hck1} $\hom(\smallKone) = 3$,
        \item\label{hck2} $\hom(\smallKtwo) = 6$, and
        \item\label{hcell} $\hom(\tikz[baseline=-3pt]{
            \node[smallvertex, label={right:\small$\ell_i$}] (ell_i) {}
        }\hspace{-3pt}) = 1$ for every $i\in[m]$.
        \end{enumerate}
    \end{multicols}

    The graph $\smallKthree$ is the unique graph that satisfies $\hom(\smallKone) = 3$ and $\hom(\smallKtwo) = 6$, and each label from $\{\ell_1,\dots,\ell_m\}$ has to appear exactly once in any graph $G$ satisfying the constraints. If $G$ satisfies $\hom(F') = 0$, then matching the position of labels in $\smallKthree$ to colours of labelled vertices in $F'$ yields a 3-colouring of the degree $1$ vertices of $F$ that cannot be extended.
    
    Conversely, assume that $\hom(F') > 0$ for all graphs $G$ satisfying these constraints. Then each homomorphism from $F'$ to $G$ can be translated into a $3$-colouring of $F$ extending a 3-colouring of its degree $1$ vertices, and since each 3-colouring of degree $1$ vertices is represented by some position of labels in $\smallKthree$, it follows that $F\not\in\NonThreeColouringExtension$.
\end{proof}

\section{Tractability of Homomorphism Reconstruction for Star Counts}
\label{sec:algorithms}

In this section we will derive an algorithm for \textsc{StarHomRec} to prove Theorem \ref{theo:star} and thereby answer an open question from \cite{BokerHRSS24}.

We will observe that the homomorphism counts $\hom(S,G)$ from stars $S$ into a graph $G$ only depend on the degree sequence of $G$.
Exploiting this, our algorithm first reconstructs a degree sequence consistent with the counts, if it exists.
Then it uses the Havel-Hakimi algorithm \cite{Hakimi1962, Havel1955} to construct a graph $G$ with this degree sequence.

We use dynamic programming to reconstruct the degree sequence.
For that we will derive a recursion of the following form:
Given an instance $(F_1,m_1),\dots,(F_k,m_k)$ of \textsc{StarHomRec}, there exists a graph $G$ with degree sequence $(d_1,\dots,d_n)$ and $\hom(F_i,G)=m_i$ for $i\in[k]$ if and only if there exists a graph $G'$ with degree sequence $(d'_1,\dots,d'_{n'})$ and $\hom(F_i,G)=m'_i$ for $i\in[k]$, where
\begin{itemize}
    \item the counts $m'_1,\dots,m'_k$ only depend on $m_1,\dots,m_k$ and $d_1$,
    \item the counts get smaller in each recursion step ($m'_i\leq m_i$ for $i\in[k]$),
    \item the graph gets smaller in each recursion step ($n<n'$),
    \item $(d_2,\dots,d_n)$ can be computed from $(d'_1,\dots,d'_{n'})$ and $d_1$ in polynomial time.
\end{itemize}

This allows us to recursively guess $d_1$ and compute the counts $m'_1,\dots,m'_k$ from $m_1,\dots,m_k$ and $d_1$.
Then we get up to $d_1$ different but smaller instances of \textsc{StarHomRec} that we have to compute.
Since we store all computed results for \textsc{StarHomRec} in a table, the number of recursive calls is still polynomial.

The exact recursion of our algorithm is slightly different (see Algorithm \ref{alg:dynamic_programming}).
This is, for example, because we must only consider graphic degree sequences.
We address this challenge in Section \ref{sec:graphic_degree_sequences}, after showing a recursive formulation of $\hom(S,G)$ for stars $S$ in Section \ref{sec:recursive_counts}.
In Section \ref{sec:algorithm} we present the final algorithm and prove Theorem \ref{theo:star}.

\subsection{Recursive Formulation of Star Counts}
\label{sec:recursive_counts}

By $S_j$, we denote the star with $j$ leaves. For any star $S_j$, graph $G$, and $v\in G$ we write $\hom(S_j,v)$ for the number of homomorphisms from $S_j$ into $G$ for which the star center node is mapped to $v$.
We use $\sub(S_j,v)$ analogously for subgraphs.

\begin{observation}
	\label{obs:graph_counts_to_vertex_counts}
	For any graph $G$ it holds that
	\begin{align*}
		\hom(S_j,G)&=\sum_{v\in G}\hom(S_j,v)\ \ \ \text{for }j\in\N,\\
		\sub(S_j,G)&=\sum_{v\in G}\sub(S_j,v)\ \ \ \ \text{for }j\in\N\setminus\{1\},\\
		2\cdot\sub(S_1,G)&=\sum_{v\in G}\sub(S_1,v).
	\end{align*}
\end{observation}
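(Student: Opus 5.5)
The plan is to prove all three identities by partitioning homomorphisms or subgraphs according to the image of a distinguished vertex, namely the star centre. Throughout, $S_j$ has centre $c$ and leaves $x_1,\dots,x_j$.

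\emph{Homomorphism identity.} Every homomorphism $h\colon S_j\to G$ maps $c$ to exactly one vertex $v\in V(G)$. Hence the set of all homomorphisms is the disjoint union, over $v\in V(G)$, of those with $h(c)=v$. By definition this gives $\hom(S_j,G)=\sum_{v}\hom(S_j,v)$. The identity holds for every $j\in\N$, including $j=0$, where both sides equal $|V(G)|$. It may also be worth recording that $\hom(S_j,v)=\deg(v)^j$, since each leaf can be sent independently to any neighbour of $v$. This is not needed here, but it shows that $\hom(S_j,\cdot)$ depends only on the degree sequence, as the section later uses.

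\emph{Subgraph identity for $j\neq 1$.} I need a canonical centre for each subgraph $G'\subseteq G$ with $G'\cong S_j$, so that $\sub(S_j,v)$ is the number of such subgraphs whose centre is $v$. For $j\ge2$ the centre is the unique vertex of degree $j\ge 2$ in $G'$, so it is well defined and invariant under isomorphism. Each copy is therefore counted exactly once, at its centre, and the sum formula follows. For $j=0$ the star is a single vertex, which is its own centre, and both sides equal $|V(G)|$.

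\emph{The case $j=1$.} Here $S_1\cong K_2$, and neither endpoint is distinguished. I would interpret $\sub(S_1,v)$ as the number of edge subgraphs that contain $v$, which equals $\deg(v)$. Each edge $uw$ is then counted once at $u$ and once at $w$, so $\sum_v\sub(S_1,v)=\sum_v\deg(v)=2|E(G)|=2\sub(S_1,G)$ by the handshake lemma.

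The only real obstacle is making precise what ``the centre is mapped to $v$'' means for subgraphs, since they have no designated centre. The case analysis above resolves this: the centre is unique for $j\ne1$, and there are two symmetric choices for $j=1$, which produces the factor $2$. It will also be convenient to note $\sub(S_j,v)=\binom{\deg(v)}{j}$ for $j\ge2$, which is consistent with the argument above.
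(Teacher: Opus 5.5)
Your proposal is correct and follows the same reasoning as the paper, which states this without proof as immediate from the definitions: homomorphisms and subgraph copies are grouped by the image of the star centre, which is unique for $j\neq 1$, while for $j=1$ each edge is counted at both endpoints, giving the factor $2$. Your reading $\sub(S_1,v)=\deg(v)$ is the one the paper intends, as its next observation $\sub(S_j,v)=\binom{d}{j}$ confirms.
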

Note that $\sub(S_0,G)=\hom(S_0,G)=|V(G)|$ and $2\cdot\sub(S_1,G)=\hom(S_1,G)=2|E(G)|$.

The star counts for one vertex only depend on the vertex' degree.
\begin{observation}\label{obs:vertex_counts}
	For any graph $G$ and vertex $v\in G$ with degree $d$ it holds that
	\begin{align*}
		\hom(S_j,v)=d^j,\ \ \ \ \sub(S_j,v)=\binom{d}{j}.
	\end{align*}
\end{observation}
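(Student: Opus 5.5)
The plan is a direct count of the maps that send the centre of $S_j$ to $v$. Recall that $S_j$ has one centre $c$ and $j$ leaves $x_1,\dots,x_j$, with edge set $\{cx_1,\dots,cx_j\}$.

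For $\hom(S_j,v)$, I fix $h(c)=v$. The only edges of $S_j$ are the $cx_i$, so $h$ is a homomorphism exactly when $h(x_i)\in N(v)$ for every $i$. The leaves are independent and may share images, so each of the $j$ leaves has $d=\abs{N(v)}$ choices, giving $d^j$. For $j=0$ this gives $1=d^0$, which is consistent.

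For $\sub(S_j,v)$, I interpret it as the number of subgraphs $G'\subseteq G$ isomorphic to $S_j$ whose centre is $v$. For $j\ge 2$ the centre is well defined, since it is the unique vertex of degree greater than $1$. Such a subgraph is determined by its leaf set, which must be a $j$-element subset $L\subseteq N(v)$; the subgraph is then $V(G')=\{v\}\cup L$ with $E(G')=\{vu\mid u\in L\}$. Conversely, every $j$-subset of $N(v)$ gives such a subgraph, and distinct subsets give distinct subgraphs. This bijection yields $\binom{d}{j}$.

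The only subtlety is the degenerate cases, where the centre is not unique. For $j=1$ either endpoint can serve as the centre, so each edge $vu$ is counted at both endpoints. The count at $v$ is still $d=\binom d1$, and this double counting is exactly why Observation~\ref{obs:graph_counts_to_vertex_counts} carries the factor $2$. For $j=0$ the count is $1=\binom d0$. I do not expect any real obstacle; the proof is just these two bijections together with this convention for the degenerate stars.
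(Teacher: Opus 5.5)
Your proof is correct and is the intended argument. The paper states this as an observation without proof, and it rests on exactly your direct count: $d$ independent choices in $N(v)$ for each leaf image, and $j$-subsets of $N(v)$ for subgraphs, with your $j\le 1$ conventions agreeing with the factor $2$ in Observation~\ref{obs:graph_counts_to_vertex_counts}.
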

Overall, a graph's structure beyond its degree sequence is irrelevant for its star counts.
To find the recursion for our algorithm we start with recursive formulations for the vertex based counts.
For subgraphs we use
\begin{align*}
	\s_j(d)\coloneqq
	\begin{cases}
		1 & \text{if }\ j=0,\\
		0 & \text{if }\ d=0,\ j>0,\\
		\s_{j}(d-1)+\s_{j-1}(d-1) & \text{otherwise}.
	\end{cases}
\end{align*}
And for homomorphisms we use
\begin{align*}
	\h_j(d)\coloneqq
	\begin{cases}
		1 & \text{if }\ j=0,\\
		0 & \text{if }\ d=0,\ j>0,\\
		\displaystyle \sum_{i=0}^{j}\binom{j}{i}\,\h_i(d-1) & \text{otherwise}.
	\end{cases}
\end{align*}

\begin{lemma}\label{lem:recursive_counts}
	For any $j\in\N$, graph $G$, and vertex $v\in G$ with degree $d$ it holds that
	\begin{align*}
		\hom(S_j,v)=\h_j(d),\ \ \ \ \sub(S_j,v)=\s_j(d).
	\end{align*}
\end{lemma}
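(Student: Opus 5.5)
By Observation~\ref{obs:vertex_counts} we already know $\hom(S_j,v)=d^j$ and $\sub(S_j,v)=\binom{d}{j}$, where $d=\deg(v)$. It therefore suffices to prove the purely arithmetic identities $\h_j(d)=d^j$ and $\s_j(d)=\binom{d}{j}$ for all $j,d\in\N$. I would prove both by induction on $d$, for all $j$ at once. The recursions for $\h_j(d)$ and $\s_j(d)$ at $d$ only refer to values at $d-1$, so induction on $d$ alone is well-founded.

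\emph{Base cases.} For $j=0$ both recursions return $1$, which equals $d^0=1$ and $\binom{d}{0}=1$ for every $d$; here we use the convention $0^0=1$, consistent with $\hom(S_0,v)=1$. For $d=0$ and $j>0$ both recursions return $0$, which equals $0^j$ and $\binom{0}{j}$.

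\emph{Inductive step.} Let $d\ge1$ and $j\ge1$, and assume $\h_i(d-1)=(d-1)^i$ and $\s_i(d-1)=\binom{d-1}{i}$ for all $i$. For homomorphisms, the binomial theorem gives
\[
\h_j(d)=\sum_{i=0}^{j}\binom{j}{i}(d-1)^i=\big((d-1)+1\big)^j=d^j .
\]
For subgraphs, Pascal's rule gives
\[
\s_j(d)=\binom{d-1}{j}+\binom{d-1}{j-1}=\binom{d}{j}.
\]
This needs no special treatment of $j>d-1$, since $\binom{d-1}{j}=0$ in that case.

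Combinatorially, both identities come from singling out one neighbour $u$ of $v$. For subgraphs, a $j$-leaf star centred at $v$ either avoids $u$ or uses $u$ as a leaf. For homomorphisms, we choose the set of the $j$ leaves of $S_j$ that are mapped to vertices other than $u$. If $i$ leaves avoid $u$, there are $\binom{j}{i}$ choices of which leaves these are, and $(d-1)^i$ ways to map them.

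I expect no real obstacle here. The only care needed is to check that the boundary conventions, namely $0^0=1$ and the $j=0$ case taking precedence over the $d=0$ case, agree with Observation~\ref{obs:vertex_counts}.
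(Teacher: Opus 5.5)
Your proof is correct and follows the paper's approach. Like the paper, you reduce to the arithmetic identities $\s_j(d)=\binom{d}{j}$ and $\h_j(d)=d^j$ (implicitly via Observation~\ref{obs:vertex_counts}) and prove them inductively using Pascal's identity and the binomial theorem, respectively. Your single induction on $d$ for all $j$ at once is a slightly tidier organisation of what the paper calls a double induction, and you spell out the boundary conventions ($0^0=1$, and the $j=0$ case taking precedence over $d=0$) that the paper leaves implicit.
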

The proof of Lemma \ref{lem:recursive_counts} can be found in the appendix.
In the next section, we will develop a recursion on graphic degree sequences for which we use above recursive formulation.

\subsection{Recursion on Graphic Degree Sequences}
\label{sec:graphic_degree_sequences}
Our reconstruction algorithm has to consider exactly all graphic degree sequences.
In this section we explore a way to recursively shrink graphic degree sequences and recompute the subgraph counts for the shrunken sequence.
The following theorem gives a criterion for a sequence of integers being graphic.

\begin{theorem}[Erdős-Gallai \cite{ErdosGallai1960}]
	\label{thm:erdos_gallai}
A non-increasing sequence $(d_1,\dots,d_n)$ of non-negative integers is graphic if and only if $\sum_{i=1}^n d_i$ is even and for every $k$:
\begin{align}
\sum_{i=1}^k d_i \leq k(k-1) + \sum_{i=k+1}^n \min(d_i,k).\label{eq:erdos_gallai}
\end{align}
\end{theorem}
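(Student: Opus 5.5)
The statement to prove is the Erdős--Gallai theorem: a non-increasing sequence $d_1\ge\dots\ge d_n\ge 0$ is graphic if and only if its sum is even and inequality~\eqref{eq:erdos_gallai} holds for every $k$.

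\textbf{Necessity.} Suppose $G$ realises $(d_1,\dots,d_n)$, with vertex $v_i$ of degree $d_i$. The degree sum equals $2|E(G)|$, so it is even. Fix $k$ and let $A=\{v_1,\dots,v_k\}$. Then $\sum_{i\le k} d_i$ counts edges inside $A$ twice, giving at most $k(k-1)$, and edges from $A$ to its complement once. A vertex $v_i\notin A$ sends at most $\min(d_i,k)$ edges into $A$. Summing these two contributions gives \eqref{eq:erdos_gallai}.

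\textbf{Sufficiency.} I would argue by induction on $\sum_i d_i$, or equivalently via an extremal realisation of a relaxed problem; I follow the approach of Choudum.

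\emph{Base case and setup.} If all $d_i=0$ the empty graph works. Otherwise let $t$ be the largest index with $d_t=d_1$. Define $d'$ from $d$ by decreasing $d_t$ by one and decreasing $d_n$ by one, where $n$ is taken to be the last index with $d_n>0$; after dropping zeros we may assume $d_n>0$.

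\emph{The reduced sequence.} The sequence $d'$ stays non-increasing because of the choice of $t$. Its sum is even and two smaller. If $d'$ is graphic with realisation $G'$, we want $G$. If $v_tv_n\notin E(G')$, add this edge. Otherwise, since $d'_t\ge d'_n$ counting the edge, a degree/neighbourhood argument finds a vertex $x$ adjacent to $v_t$ but not to $v_n$, and a neighbour $y$ of $x$ not adjacent to $v_n$. We then perform a standard 2-switch to free a slot.

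\emph{Cleaner alternative.} A more robust route is to take, among all graphs on $v_1,\dots,v_n$ with $\deg(v_i)\le d_i$ for all $i$, one maximising the number of edges. Then apply the usual switching argument: if some deficient vertex remains, the Erdős--Gallai inequality for a suitably chosen $k$ yields a contradiction.

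\textbf{Main obstacle.} The key step is showing that $d'$ again satisfies \eqref{eq:erdos_gallai} for every $k$. I would do this by case analysis on $k$ relative to $t$ and $n$:
\begin{itemize}
\item For $k<t$, the left-hand side is unchanged and the right-hand side drops by at most one. One must exploit the slack coming from $d_t=d_1$: if the inequality were tight, it would force $d_1\le k-1+\dots$ and a contradiction.
\item For $k\ge t$, the left-hand side drops by one, which compensates for the possible decrease of $\min(d_n,k)$ on the right.
\end{itemize}

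The delicate case is equality in \eqref{eq:erdos_gallai} at some $k<t$. There I would use that $d_1=\dots=d_{k+1}$ to bound $\sum_{i\le k}d_i=kd_1$ against the terms with $i>k$, and show that strict slack must exist. This is the step I expect to require the most care.

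Since the result is classical~\cite{ErdosGallai1960}, in the paper it is reasonable to cite it rather than reprove it.
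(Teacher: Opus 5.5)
The paper gives no proof of this statement; it cites Erdős and Gallai as a classical tool. Your closing sentence therefore matches what the paper does, and your necessity argument is correct. Read as a proof of sufficiency, however, your Choudum-style sketch has a genuine gap at the step you flag yourself.

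\textbf{The case $k<t$.} Your claim that here the right-hand side ``drops by at most one'' is false. If $d_1\le k$, then $d_t=d_1\le k$ and $d_n\le k$, so both $\min(d_t,k)$ and $\min(d_n,k)$ decrease and the right-hand side drops by two.
\begin{itemize}
\item For $d_1\le k-1$ this is harmless, since $kd_1\le k(k-1)$.
\item For $d_1=k$ you need $\sum_{i>k}d_i\ge k+2$, whereas $d_{k+1}=k$ and $d_n\ge 1$ give only $\sum_{i>k}d_i\ge k+1$. The missing unit comes from the parity hypothesis: $k^2+\sum_{i>k}d_i$ is even, so $\sum_{i>k}d_i\equiv k \pmod 2$. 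Your sketch never uses parity, and it cannot be avoided. The sequence $(2,2,2,1)$ satisfies \eqref{eq:erdos_gallai} for every $k$ (only its sum is odd), your rule gives $t=3$, and the reduced sequence $(2,2,1,0)$ violates \eqref{eq:erdos_gallai} at $k=2$, since $4>3$.
\item In the remaining subcase $k<t$, $d_1>k\ge d_n$, the drop is one. The slack comes from the inequality at $k+1$, not from $d_1=\dots=d_{k+1}$ alone. Let $f(j)$ be the right-hand side minus the left-hand side of \eqref{eq:erdos_gallai}, and let $c=|\{i>t: d_i\ge k+1\}|\le n-t-1$. Then $f(k)=k(t-1-d_1)+\sum_{i>t}\min(d_i,k)$ and $f(k+1)=f(k)+t-1-d_1+c$. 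So $f(k)=0$ would force $d_1\le t-1+c$, and then, using $d_i\ge 1$, $f(k)\ge n-t-c\ge 1$, a contradiction.
\end{itemize}

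\textbf{Two smaller repairs.} First, if $t=n$ (all entries equal), then $d_t$ and $d_n$ are the same entry and your reduction is undefined; Choudum uses $t=n-1$ in that case. Second, the switch as you describe it (with $x$ adjacent to $v_t$) cannot raise $\deg(v_t)$. Instead, take $x\notin N(v_t)\cup\{v_t\}$, which exists because $\deg_{G'}(v_t)=d_1-1\le n-2$ by \eqref{eq:erdos_gallai} at $k=1$. Then $x\ne v_n$ and $\deg_{G'}(x)=d_x\ge d_n>\deg_{G'}(v_n)$. Since $v_t\in N(v_n)\setminus N(x)$, $x$ has a neighbour $y\ne v_n$ with $y\notin N(v_n)$. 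Replacing $xy$ by $xv_t$ and $yv_n$ yields a realisation of $d$.

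Your ``cleaner alternative'' via a maximal subrealisation only names the approach. It leaves its one nontrivial step (the choice of $k$ and the switching argument) unspecified, so it does not close the gap either.
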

Intuitively, for every selection of the $k$ highest degree vertices, we need enough vertices to connect them to.
To achieve this, the Havel-Hakimi algorithm \cite{Hakimi1962, Havel1955} (also see \cite{Shahriari2021}) recursively sorts the sequence non-increasingly and then removes the first degree ($d_1$) while decrementing the next $d_1$ degrees.
We use a similar (but slightly different) recursive operation.

\begin{lemma}
    \label{lem:degree_sequence_step}
    A non-increasing sequence $(d_1,\dots,d_n)$ of positive integers with $n\geq2$ and $d_1\leq n-1$ is graphic if and only if the sequence
    $
    (d_2-1,\dots,d_n-1,\underbrace{1,\dots,1}_{\mathclap{n-1-d_1\text{ times}}})
    $, ordered decreasingly, is graphic.
\end{lemma}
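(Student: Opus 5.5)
The plan is to prove both directions by explicit graph surgery rather than through the inequalities of \cref{thm:erdos_gallai}. Call a vertex of degree $d_1$ the \emph{removed vertex} $v_1$; the new sequence has $n-1$ entries $d_i-1$ for $i\ge 2$ and $n-1-d_1$ additional entries equal to $1$. Its length is $(n-1)+(n-1-d_1)$, and its sum is $\sum_{i\ge2}d_i-(n-1)+(n-1-d_1)=\sum_i d_i-2d_1$, which has the same parity as $\sum_i d_i$.

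\emph{($\Rightarrow$)} Take a graph $G$ with degree sequence $\vec d$ and vertices $v_1,\dots,v_n$, where $\deg(v_1)=d_1$. Delete $v_1$. Its $d_1$ neighbours lose one degree each; the $n-1-d_1$ non-neighbours $w$ of $v_1$ (other than $v_1$) keep their degree $d_w\ge 1$. For each such non-neighbour $w$, introduce a fresh pendant vertex attached to $w$ only. Now every old vertex has degree $d_i$ again, so instead I will lower each old vertex by one. Concretely, the target graph should give every $v_i$ ($i\ge 2$) degree $d_i-1$ and have $n-1-d_1$ new vertices of degree $1$. For neighbours of $v_1$ the deletion of $v_1$ already achieves $d_i-1$. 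For non-neighbours $w$, I need to remove one incident edge $wx$ and attach a new leaf to $x$; this does not work, since it changes $\deg(x)$. Instead, I replace one edge $wx$ by the edge from $w$ to a new leaf $\ell_w$. Then $w$ keeps degree $d_w$, which is wrong. The correct choice is to detach one endpoint: pick an edge $wx$ at $w$ (it exists since $d_w\ge1$), remove it, and add a new leaf $\ell$ adjacent to $x$. This lowers $w$ by one and keeps $x$ fixed. Done simultaneously for all non-neighbours, conflicts arise when an edge $wx$ joins two non-neighbours, or when an edge is chosen twice. I will handle this by orienting: pick for each non-neighbour $w$ a distinct incident edge $e_w$. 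If $e_w=wx$ is chosen only by $w$, cut it at $w$ and hang a leaf at $x$. If $e_w=e_x=wx$, delete the edge entirely and add two isolated\ldots{} but that yields degree $0$, not $1$. In that case I delete $wx$ and add a new edge between two new leaves; this gives two degree-$1$ vertices and lowers both $w$ and $x$. Choosing such an assignment is a matching-type step in the subgraph induced on non-neighbours plus their edges. Since every vertex there has degree at least $1$, each component has at least as many edges as vertices minus one. Tree components need the doubly-used edge trick once, and components with a cycle admit an injective choice. I expect this to be the fiddliest step.

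\emph{($\Leftarrow$)} Given $G'$ realising the new sequence, the $n-1-d_1$ extra degree-$1$ vertices $\ell$ are identified by a choice of ordering; when several vertices have degree $1$, I pick any $n-1-d_1$ of them not among the designated old ones. I then undo the surgery: repeatedly take a new leaf $\ell$ with neighbour $x$. If $x$ is also a new leaf, delete the edge $\ell x$ and both vertices, and add $+1$ to two old vertices. The difficulty is that I must decide which old vertices get $+1$. Rather, I raise each old vertex by one: add $v_1$ adjacent to a set $A$ of old vertices with $|A|=d_1$, and for old vertices outside $A$ route through leaves. Each leaf $\ell$ with old neighbour $x$ can be contracted (remove $\ell$, add an edge from $x$ to some old vertex $y\notin A$ that needs $+1$, avoiding an existing edge $xy$). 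The main obstacle is avoiding multi-edges. If it fails, I will fall back on verifying \eqref{eq:erdos_gallai} for the original sequence directly from \eqref{eq:erdos_gallai} for the new one, splitting on whether $k<n-1-d_1$. The hypothesis $d_1\le n-1$ guarantees the count of padding ones is nonnegative.
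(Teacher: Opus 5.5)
Your forward direction is sound, and easier than you expect. For each non-neighbour $w$ of $v_1$, pick an \emph{arbitrary} edge $e_w$ of $G-v_1$ at $w$; it exists because $d_w\ge 1$ and no edge at $w$ goes to $v_1$. An edge can only be picked by its endpoints. So replace $e_w=wx$ by $\ell_w x$ when only $w$ picked it, and by $\ell_w\ell_x$ when both endpoints did. This yields a simple graph realising the new sequence, and no injectivity or matching argument is needed.

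The genuine gap is the converse. You identify the obstacle yourself: rerouting the leaf edges onto a set $W$ of $n-1-d_1$ old vertices (and leaf--leaf edges onto pairs in $W$) without creating loops or parallel edges. But you never show that an admissible $W$ and rerouting exist, and your sketch never uses that $d_1$ is the \emph{largest} entry. That hypothesis is indispensable for this direction. The unsorted sequence $(1,4,4,1,1,1)$ meets every other hypothesis. Its derived sequence $(3,3,0,0,0,1,1,1,1)$ is realised by an edge $ab$ with two pendant leaves at each of $a$ and $b$, plus three isolated vertices. Yet $\{4,4,1,1,1,1\}$ is not graphic. In that realisation each of the four leaves can only be rerouted to one of the three isolated vertices, so your multi-edge obstruction is unavoidable there. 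A surgery proof would therefore need an extra switching argument that exploits $d_1\ge d_i$ (as in the proof of the Havel--Hakimi theorem), and none is supplied.

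Your fallback is not carried out either, and the split you suggest ($k<n-1-d_1$ versus not) is not the decomposition that works. Write $d'$ for the new sequence. The paper compares the Erd\H{o}s--Gallai inequality for $\vec d$ at $k\ge 2$ with the one for $d'$ at $k'=k-1$, whose first $k-1$ entries (in the order written) are $d_2-1,\dots,d_k-1$. Using $\min(d_i-1,k-1)=\min(d_i,k)-1$, and the fact that each of the $n-1-d_1$ padding ones contributes $1$, the tail $\sum_{i\ge k}\min(d'_i,k-1)$ equals $\sum_{i>k}\min(d_i,k)+k-1-d_1$. After cancelling, the two inequalities are identical. For the direction you are missing this already suffices. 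The cases $k\in[2,n]$ come from $k'=k-1$, the case $k=1$ is exactly $d_1\le n-1$, and parity is preserved; the paper needs the extra range $k'\ge n$ only for the forward direction. Maximality of $d_1$ is used precisely in asserting that the length-$k$ prefix of the sorted $\vec d$ is $d_1$ followed by $d_2,\dots,d_k$. Until either that computation or a switching argument is supplied, the implication from the new sequence back to $\vec d$ is unproved.
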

The proof of Lemma \ref{lem:degree_sequence_step} can be found in the appendix.
We can relate the star counts of both sequences (before and after our operation) in Lemma \ref{lem:recursive_counts_for_entire_graph_with_slack} without knowing anything about the sequences besides $d_1$.
\begin{lemma}
\label{lem:recursive_counts_for_entire_graph_with_slack}
    Let $(d_1,\dots,d_n)$ be a non-increasing sequence of positive integers with $n\geq2$.
	Furthermore, let $G$ be a graph with the degree sequence $(d_1,\dots,d_n)$, and let $G'$ be a graph with the degree sequence
	$
	(d_2-1,\dots,d_n-1,\underbrace{1,\dots,1}_{x\text{ times}}),
	$
    ordered decreasingly, for any $x\in \N$.
    Then
	\begin{align*}
		\sub(S_0,G)
		&=\s_0(d_1)+\sub\left(S_0,G'\right)-x,\\
		2\cdot\sub(S_1,G)
		&=\s_1(d_1)+2\cdot\sub\left(S_1,G'\right)+\sub\left(S_0,G'\right)-2x,\\
		\sub(S_2,G)
		&=\s_2(d_1)+\sub\left(S_2,G'\right)+2\cdot\sub\left(S_1,G'\right)-x,\\
		\sub(S_j,G)
		&=\s_j(d_1)+\sub\left(S_j,G'\right)+\sub\left(S_{j-1},G'\right)\ \ \text{for $j\geq3$}.
	\end{align*}
\end{lemma}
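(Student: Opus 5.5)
By Observation~\ref{obs:graph_counts_to_vertex_counts} and Observation~\ref{obs:vertex_counts}, every count in the statement depends only on the degree sequences. So I will write everything as sums over degrees and never use the structure of $G$ or $G'$ beyond that. Concretely,
\begin{align*}
\sub(S_j,G)&=\sum_{i=1}^n \s_j(d_i) \quad (j\neq 1), & 2\sub(S_1,G)&=\sum_{i=1}^n \s_1(d_i),\\
\sub(S_j,G')&=\sum_{i=2}^n \s_j(d_i-1)+x\,\s_j(1) \quad (j\neq 1), & 2\sub(S_1,G')&=\sum_{i=2}^n \s_1(d_i-1)+x\,\s_1(1).
\end{align*}
By Lemma~\ref{lem:recursive_counts}, $\s_j(d)=\binom{d}{j}$. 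Since all $d_i\ge 1$, the recursion $\s_j(d_i)=\s_j(d_i-1)+\s_{j-1}(d_i-1)$ (Pascal's rule) applies to each $i\ge 2$. Splitting off the $i=1$ term as $\s_j(d_1)$ gives
\[
\sum_{i=1}^n \s_j(d_i)=\s_j(d_1)+\sum_{i=2}^n \s_j(d_i-1)+\sum_{i=2}^n \s_{j-1}(d_i-1).
\]

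Next I rewrite the two remaining sums in terms of the counts of $G'$. For this I subtract the contributions of the $x$ appended vertices of degree $1$. These satisfy $\s_0(1)=1$, $\s_1(1)=1$ and $\s_j(1)=0$ for $j\ge 2$. This gives $\sum_{i\ge2}\s_j(d_i-1)=\sub(S_j,G')-x\,\s_j(1)$ for $j\ne1$, and for $j=1$ the same with $2\sub(S_1,G')$.

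Finally I check the four cases one by one, minding the factor $2$ attached to $S_1$.
\begin{itemize}
\item For $j=0$: $\s_{-1}$ does not occur, since $\s_0\equiv 1$. Directly, $n=1+(n-1)$ and $\sub(S_0,G')=n-1+x$, which yields the first identity.
\item For $j=1$: the left side is $2\sub(S_1,G)$. The right side is $\s_1(d_1)+\bigl(2\sub(S_1,G')-x\bigr)+\bigl(\sub(S_0,G')-x\bigr)$.
\item For $j=2$: $\sub(S_2,G')$ gets no correction, and $\sum_{i\ge2}\s_1(d_i-1)=2\sub(S_1,G')-x$. This matches the third identity.
\item For $j\ge3$: both corrections vanish, since $\s_j(1)=\s_{j-1}(1)=0$.
\end{itemize}

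The only delicate point is bookkeeping: the convention that $S_1$ counts are doubled (Observation~\ref{obs:graph_counts_to_vertex_counts}), and the values $\s_j(1)$ for $j=0,1$. There, degree-$1$ vertices do contribute and produce the $-x$ and $-2x$ terms. I expect no other obstacle.
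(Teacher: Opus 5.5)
Your proof is correct and is essentially the paper's argument: write every count as a sum over the degree sequence (Observations~\ref{obs:graph_counts_to_vertex_counts} and~\ref{obs:vertex_counts} with Lemma~\ref{lem:recursive_counts}), split off $\s_j(d_1)$, and apply $\s_j(d_i)=\s_j(d_i-1)+\s_{j-1}(d_i-1)$ to the remaining vertices. As in the paper, you then correct for the $x$ appended degree-$1$ vertices via $\s_0(1)=\s_1(1)=1$ and $\s_j(1)=0$ for $j\ge 2$, treating $j=0$ directly and keeping track of the factor $2$ on $S_1$.
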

The proof of Lemma \ref{lem:recursive_counts_for_entire_graph_with_slack} can be found in the appendix. It is based on the recursion from Section \ref{sec:recursive_counts} and the factor $2$ for $\sub(S_1,G)$ comes from Observation \ref{obs:graph_counts_to_vertex_counts}.
An analogous version of Lemma \ref{lem:recursive_counts_for_entire_graph_with_slack} for homomorphisms could be made using the recursion for homomorphisms from Section \ref{sec:recursive_counts} and leaving out the factor 2.
However, for simplicity, we only consider subgraph counts in the remainder of Section \ref{sec:algorithms}. As observed in the introduction, for stars, subgraph counts determine homomorphism counts and vice versa, so this is sufficient.

It seems like this operation does not help us for our algorithm because the sequence gets longer.
However, since the last $x$ entries of the sequence all have value 1 (before reordering), we can separate this part from the (still unknown) degree sequence and carry the number $x$ from one recursion step to another, updating it each step.

Intuitively, if we decrement more degrees than necessary to ensure that the degree $d_1$ can be realized, then, when another degree would not be realisable using only the rest of the degree sequence, we can use up $x$.

\subsection{The Reconstruction Algorithm Using Dynamic Programming}
\label{sec:algorithm}
In this section we first lift our recursion on graphic degree sequences to a recursion on the reconstruction problem.
Then we present the reconstruction algorithm which is based on this recursion and discuss its runtime to finally prove Theorem \ref{theo:star}.

\begin{definition}
	\label{def:recursive_solve}
	We define the function $\dpf:\N^{\ell+3}\to\{0,1\}$ with $\dpf(s_0,\dots,s_\ell,m,x)=1$ if there exist integers $d_1\ge\dots\ge d_{s_0}\geq0$ and a graph $G$ with degree sequence
    \[
	(d_1,\dots,d_{s_0},\underbrace{1,\dots,1}_{x\text{ times}}),
	\]
	ordered decreasingly, which fulfils
	\begin{align*}
		&\sub(S_0,G)
		=\s_0+x,\\
		2\cdot&\sub(S_1,G)
		=\s_1+x,\\
		&\sub(S_j,G)
		=\s_j\ \ \ \ \ \text{for $j\in[2,\ell]$},\\
		&d_i\leq m\ \ \ \ \ \text{for $i\in[1,s_0]$},
	\end{align*}
    and $\dpf(s_0,\dots,s_\ell,m,x)=0$ otherwise.
    
\end{definition}
Note that $\dpf(s_0,s_1/2,s_3,\dots,s_\ell,s_0-1,0)=1$ if and only if there exists a graph $G$ with $\sub(S_j,G)=s_j$ for all $j\in[0,\ell]$.
Intuitively, the parameter $x$ corresponds to the $x$ that we discussed in Section \ref{sec:graphic_degree_sequences} and $m$ acts as a maximum allowed degree, ensuring the (degree) sequence to be non-increasing which is a requirement for Lemma \ref{lem:degree_sequence_step} and Lemma \ref{lem:recursive_counts_for_entire_graph_with_slack}.

Using this definition we can solve the problem recursively up to the ``base cases'' with $s_0<2$ or $s_1=0$.

\begin{lemma}
\label{lem:problem_recursion}
For $\ell\geq1$ and all non-negative integers $s_0,\dots,s_\ell,m,x$ with $s_0\geq2$ and $s_1\geq1$ it holds that $\dpf(s_0,\dots,s_\ell,m,x)$ is equal to
\begin{align*}
\max(\{\dpf(s'_0,\dots,s'_\ell,m',x')\ |\ 
&\exists d \in [1,m],\ 
\exists n_1 \in [1,s_0]:\\
&d\leq n_1-1+x,\\
&s'_0=n_1-1,\\
&s'_j=s_j-s_j(d)-s'_{j-1}\ \text{ for $j\in[1,\ell]$},\\
&m'=d-1,\\
&x'=x+n_1-1-d
\}).
\end{align*}

\end{lemma}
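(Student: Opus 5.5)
We prove both inequalities between $\dpf(s_0,\dots,s_\ell,m,x)$ and the maximum, treating the two directions as soundness and completeness of one recursion step. The tools are \cref{lem:degree_sequence_step} and \cref{lem:recursive_counts_for_entire_graph_with_slack}, together with the reading of $s_0$ as the number of ``free'' degrees and $x$ as the number of appended $1$s.

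\emph{($\le$) Completeness.} Assume $\dpf(s_0,\dots,s_\ell,m,x)=1$, witnessed by $d_1\ge\dots\ge d_{s_0}$ and a graph $G$. Since $s_1\ge1$, $G$ has an edge, so $d\coloneqq d_1\ge1$; if some $d_i=0$ we first argue they can be dropped, or else absorb them by choosing $n_1$ as the number of positive $d_i$. We set $n_1\in[1,s_0]$ to be the number of positive entries among $d_1,\dots,d_{s_0}$. We then have $d\le m$. Moreover $d\le n_1-1+x$, because the vertex of degree $d_1$ has neighbours only among the other $n_1-1+x$ vertices of positive degree.

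Next we apply the operation of \cref{lem:degree_sequence_step} to the full positive sequence, removing the vertex of degree $d_1$. Its length is $N=n_1+x$, and we have $d_1\le N-1$. The resulting sequence is
\[
(d_2-1,\dots,d_{n_1}-1,\ \underbrace{0,\dots,0}_{x},\ \underbrace{1,\dots,1}_{N-1-d_1}).
\]
The decremented original $1$s become $0$s, which are isolated vertices. Taken together with the appended $1$s, this has the shape of \cref{def:recursive_solve} with $s'_0=n_1-1$ free entries $d_2-1,\dots,d_{n_1}-1$, bounded by $m'=d-1$, and $x'=x+n_1-1-d$ appended $1$s. The $x$ zeros are a nuisance, because \cref{lem:degree_sequence_step} requires positive entries and the count of $S_0$ changes. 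I expect the cleanest fix is to not decrement the $x$ trailing $1$s and instead treat them as the slack. That is, view the new sequence as $(d_2-1,\dots,d_{n_1}-1)$ plus $x+(n_1-1)-d$ ones, keeping the original $x$ ones but ``using up'' $d-(n_1-1)$ of them as neighbours of $v_1$ when $d>n_1-1$. This is exactly the formula for $x'$.

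Graphicity of the new sequence follows from \cref{lem:degree_sequence_step}, applied to a suitable reordering. The hard bookkeeping step is to verify, via \cref{lem:recursive_counts_for_entire_graph_with_slack} with the shifted definitions $\sub(S_0)=s_0+x$ and $2\sub(S_1)=s_1+x$, that the new counts are exactly $s'_j=s_j-\s_j(d)-s'_{j-1}$. For $j\ge3$ this is immediate from the lemma, since the $s'$ equal the true counts. For $j=0,1,2$ the $-x$, $-2x$ and $-x$ correction terms of the lemma have to cancel against the $+x$ offsets in \cref{def:recursive_solve}. Checking this cancellation is where I expect the main obstacle, namely getting the offsets for $S_0$, $S_1$, $S_2$ to match $x'$ rather than $x$.

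\emph{($\ge$) Soundness.} Conversely, suppose $d$, $n_1$ and the primed values satisfy the stated constraints and $\dpf(s'_0,\dots,s'_\ell,m',x')=1$, witnessed by $d'_1\ge\dots\ge d'_{n_1-1}$ and a graph $G'$. We set $d_1=d$ and $d_i=d'_{i-1}+1$ for $i\in[2,n_1]$, and pad with zeros up to length $s_0$ if needed. This sequence is non-increasing because $d'_i\le d-1$, and it satisfies $d_i\le m$. By the converse direction of \cref{lem:degree_sequence_step}, with the constraint $d\le n_1-1+x$ guaranteeing $d_1\le N-1$, the sequence together with $x$ ones is graphic. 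Finally, \cref{lem:recursive_counts_for_entire_graph_with_slack}, run with the same algebra as in the first direction, gives the required counts $s_j$, so $\dpf(s_0,\dots,s_\ell,m,x)=1$.
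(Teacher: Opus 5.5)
You are following the paper's own proof. Drop the zero entries and apply \cref{lem:degree_sequence_step} to the positive sequence $(d_1,\dots,d_{n_1},1,\dots,1)$ of length $n_1+x$; here $d\le n_1-1+x$ is exactly its hypothesis $d_1\le n-1$. Then delete the $x$ isolated vertices created by decrementing the original ones, and read off $s'_0=n_1-1$, $m'=d-1$, $x'=n_1+x-1-d$; the converse reverses these steps. The only thing your write-up leaves open is the count check you call the main obstacle, and it is routine. The offsets in \cref{def:recursive_solve} are exactly the contributions of the appended $1$s: they add $x$ vertices, $x$ to the degree sum, and nothing to $\sub(S_j,\cdot)$ for $j\ge2$. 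So a witness simply satisfies $s_j=\sum_{i=1}^{s_0}\binom{d_i}{j}$ for all $j\in[0,\ell]$, and likewise $s'_j=\sum_{i=1}^{n_1-1}\binom{e_i}{j}$ with $e_i\coloneqq d_{i+1}-1$. Pascal's identity $\binom{e_i+1}{j}=\binom{e_i}{j}+\binom{e_i}{j-1}$, together with $\binom{0}{j}=0$ for the dropped zero entries, gives $s_j=\binom{d}{j}+s'_j+s'_{j-1}$ for $j\ge1$. This is the required relation, in both directions at once. If you prefer to go through \cref{lem:recursive_counts_for_entire_graph_with_slack}, note that its slack parameter is $x'$, not $x$, and the old offset $x$ is absorbed by the $x$ isolated vertices; for example $s_1+x=d+(s'_1+x')+(n_1-1+x+x')-2x'$.

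Two small corrections. First, $d_1\ge1$ follows from $s_1=\sum_{i\le s_0}d_i\ge1$, not merely from $G$ having an edge, since that edge could join two appended degree-$1$ vertices. Second, no ``fix'' is needed. Positivity is required only of the input of \cref{lem:degree_sequence_step}, and the zeros in its output are isolated vertices that can simply be deleted; this is the paper's passage from $G''$ to $G'''$.
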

In the proof of Lemma \ref{lem:problem_recursion}, which can be found in the appendix, we first prune entries of the degree sequence with value $0$ and then apply Lemma \ref{lem:degree_sequence_step} together with Lemma \ref{lem:recursive_counts_for_entire_graph_with_slack}.
This pruning step is implemented implicitly by guessing the number $n_1\in[1,s_0]$ of vertices with degree at least 1.

We now combine Lemma \ref{lem:problem_recursion} and the Havel-Hakimi algorithm \cite{Hakimi1962, Havel1955} into a reconstruction algorithm for $\textsc{StarSubRec}$ where we assume the constraint graphs $F_1,\dots,F_k$ to be exactly all stars of size up to $k$.

\begin{algorithm}[ht]
\caption{DP$(s_0,\dots,s_\ell, m, x)$}
\label{alg:dynamic_programming}
\DontPrintSemicolon

\If(\tcp*[f]{Base cases (part 1)}){$s_1 = 0$}{
  \Return{$(s_2=0 \land \dots \land s_\ell=0)$}\tcp*[f]{Empty graph or isolated nodes}
}

\If(\tcp*[f]{Base cases (part 2)}){$s_0 = 1$}{
  \Return{$(s_1=0 \land \dots \land s_\ell=0) \lor (s_1=1 \land s_2=0 \land \dots \land s_\ell=0 \land m\ge 1 \land x\ge 1)$}
}

\For(\tcp*[f]{Guess highest degree}){$d \gets 1$ \KwTo $m$}{
  \For(\tcp*[f]{Guess number of vertices with non-zero degree}){$n_1 \gets 1$ \KwTo $s_0$}{
    \If(\tcp*[f]{Check realizability of highest degree}){$d \le (n_1-1)+x$}{
      $s'_0 \gets n_1 - 1$\;
      $s'_1 \gets s_1 - \binom{d}{1} - s'_0$\;
      $\dots$\;
      $s'_\ell \gets s_\ell - \binom{d}{\ell} - s'_{\ell-1}$\;
      $m' \gets d-1$\;
      $x' \gets x + n_1 - 1 - d$\;
      \If(\tcp*[f]{Recursion}){$\textsc{\upshape DP}(s'_0,\dots,s'_\ell, m', x')$}{
        \Return{\textbf{True}}
      }
    }
  }
}

\Return{\textbf{False}}
\end{algorithm}

\begin{theorem}
    \label{theo:stars_all_specified}
    The problem $\textsc{\upshape StarSubRec}$ can be solved in time $n^{O(k^2)}$ if $F_i=S_{i-1}$ for all $i\in[1,k]$ where $n$ is the count from the constraint $(S_0,n)$.
    \begin{proof}
	   We give an algorithm $\mathcal{A}$ for $\textsc{StarSubRec}$.
       Let $(F_1,m_1),\dots,(F_k,m_k)$ be the input of $\textsc{StarSubRec}$.
       We denote this input as $(S_0,s_0),\dots,(S_\ell,s_\ell)$.
    Let $n\coloneqq s_0$.
	First, $\mathcal{A}$ computes $\dpf(s_0,s_1/2,s_2,\dots,s_\ell,n-1,0)$ using dynamic programming with the recursion from Lemma \ref{lem:problem_recursion} and $\ell+3$ table dimensions $s_0,\dots,s_\ell,m,x$.
	The dynamic program always runs into a base case ($s_0<2$ or $s_1<1$) which is trivial to solve (see Algorithm \ref{alg:dynamic_programming}) and therefore terminates because the parameters $s_0,\dots,s_\ell,m$ all get smaller at least by $1$ in each recursive call.
	This happens in runtime $n^{O(\ell^2)}$ because all parameters $s_0,\dots,s_\ell,m,x$ are bounded by $n^\ell$ so there are at most $n^{O(\ell^2)}$ entries in the dynamic programming table that $\mathcal{A}$ might compute and in each recursion step $\mathcal{A}$ tries at most $n^2$ combinations of the values $d$ and $n_1$.
	Algorithm \ref{alg:dynamic_programming} shows pseudo-code for the computation of $\dpf(s_0,\dots,s_\ell,m,x)$.

	If $\dpf(s_0,\dots,s_\ell,n-1,0)=0$ then $\mathcal{A}$ outputs False.
	Otherwise $\mathcal{A}$ reconstructs the degree sequence $(d_1,\dots,d_n)$ from the dynamic programming using backtracking.
	Specifically, the parameter $d$ in the first recursive call from Lemma \ref{lem:problem_recursion} corresponds to $d_1$.
	Similarly, the parameter $d$ from the recursive call number $i$ is $d_i$.
	Finally, $\mathcal{A}$ runs the Havel-Hakimi algorithm \cite{Hakimi1962,Havel1955} (also see \cite{Shahriari2021}) to reconstruct a graph $G$ with the degree sequence $(d_1,\dots,d_n)$ in $O(n^2)$ time and then outputs $G$.
    \end{proof}
\end{theorem}

As already mentioned we get the same result for homomorphisms if we use the recursion $h_j(d)$ from Lemma \ref{lem:recursive_counts}.

To prove Theorem \ref{theo:star} we have to deal with unspecified counts for stars $S_j$ with $j<\ell$.
Since they are all bounded by the largest homomorphism count, we can simply guess the ``missing'' counts and call $\mathcal{A}$ for each guess.

\begin{proof}[Proof of Theorem \ref{theo:star}]

	  We give an algorithm $\mathcal{B}$ for \textsc{StarHomRec}.
    Let $(F_1,m_1),\dots,(F_k,m_k)$ be an instance of \textsc{StarHomRec}.
    Let $\ell\coloneqq\max_{i\in[k]}|F_i|-1$ and $m\coloneqq\{m_i\mid i\in[k]\}$.
    Each $F_i$ is a star $S_j$ with $j\leq\ell$ but there might be stars $S_j$ with $j<\ell$ for which there is no count specified.
    $\mathcal{B}$ guesses all ``missing'' star counts up to the star $S_\ell$ so that we get constraints $(S_0,s_0),\dots,(S_\ell,s_\ell)$ where $s_j=m_i$ for all $i\in[0,k],j\in[0,\ell]$ with $S_j=F_i$ .
    For each guess, $\mathcal{B}$ calls the homomorphism variant of algorithm $\mathcal{A}$ from the proof of Theorem \ref{theo:stars_all_specified} with input $(S_0,s_0),\dots,(S_\ell,s_\ell)$.

    $\mathcal{B}$ calls $\mathcal{A}$ at most $m^\ell$ times because it has to guess at most $\ell$ counts which are all bounded by $s_1\leq s_2\leq\dots\leq s_\ell\leq m$ and we can also bound $s_0\leq s_1$ because $s_0>s_1$ would just correspond to isolated vertices which are trivial to add.
    Overall, $\mathcal{B}$ has a runtime of $m^\ell m^{O(k^2)}=m^{O(k^2)}$.
    
\end{proof}

\section{Conclusions}
We determine the exact computational complexity of the homomorphism
reconstruction problem both in the case that the constraints are
specified in binary and unary: the (binary) \textsc{HomRec} problem is
\textsf{NEXPTIME}-complete, and the (unary) \textsc{UnHomRec} is
$\Sigma_2^p$-complete.

Consider the following parameterized
version of the unary reconstruction problem. It was proved in \cite{BokerHRSS24} that the binary version of this parameterized problem is hard.

\pproblem{$p\textsc{-UnHomRec}$}
  {Pairs $(F_1, m_1), \dots, (F_k,m_k)$, where $F_1,\ldots,F_k$
    are graphs and $m_1,\ldots,m_\ell\in\mathbb N$ (in unary encoding).}{$\sum_{i=1}^k|F_i|$}
  {Is there a graph $G$ such that
    $\hom(F_i, G) = m_i$ for every $i \in [k]$?}

  It is open if this problem is fixed-parameter tractable, or at least
  the complexity class XP, that is, solvable in polynomial
  time for each fixed parameter value. 
We prove that the problem is in
XP if the $F_i$ are stars.

\bibliography{literature}

@article{AjtaiFS00,
  author       = {Miklos Ajtai and
                  Ronald Fagin and
                  Larry J. Stockmeyer},
  title        = {The Closure of Monadic {NP}},
  journal      = {J. Comput. Syst. Sci.},
  volume       = {60},
  number       = {3},
  pages        = {660--716},
  year         = {2000},
  doi          = {10.1006/jcss.1999.1691}
}

@article{GalperinW83,
  author       = {Hana Galperin and
                  Avi Wigderson},
  title        = {Succinct Representations of Graphs},
  journal      = {Inf. Control.},
  volume       = {56},
  number       = {3},
  pages        = {183--198},
  year         = {1983},
  doi          = {10.1016/S0019-9958(83)80004-7}
}

@article{PapadimitriouY86,
	author = {Christos H. Papadimitriou and Mihalis Yannakakis},
	journal = {Information and Control},
	number = {3},
	pages = {181-185},
	title = {A note on succinct representations of graphs},
	volume = {71},
	year = {1986},
    doi = {10.1016/S0019-9958(86)80009-2}
}

@inproceedings{JinBCL24,
	author = {Emily Jin and Michael M. Bronstein and {\.I}smail {\.I}lkan Ceylan and Matthias Lanzinger},
	bibsource = {dblp computer science bibliography, https://dblp.org},
	booktitle = {Forty-first International Conference on Machine Learning, {ICML} 2024, Vienna, Austria, July 21-27, 2024},
	publisher = {OpenReview.net},
	title = {Homomorphism Counts for Graph Neural Networks: All About That Basis},
	url = {https://openreview.net/forum?id=zRrzSLwNHQ},
	year = {2024}}

@article{WolfOPG23a,
	author = {Hinrikus Wolf and Luca Oeljeklaus and Pascal K{\"u}hner and Martin Grohe},
	doi = {10.48550/arXiv.2308.15283},
	journal = {ArXiv},
	title = {Structural Node Embeddings with Homomorphism Counts},
	volume = {arXiv:2308.15283},
	year = {2023}}

@article{GroheRS25,
	author = {Martin Grohe and Gaurav Rattan and Tim Seppelt},
	doi = {10.19086/aic.2025.4},
	journal = {Advances in Combinatorics},
    volume = {2025},
	number = {4},
	status = {JOU},
	title = {Homomorphism Tensors and Linear Equations},
	year = {2025}}

@inproceedings{BokerHRSS24,
	author = {B\"{o}ker, Jan and H\"{a}rtel, Louis and Runde, Nina and Seppelt, Tim and Standke, Christoph},
	booktitle = {Proceedings of the 41st International Symposium on Theoretical Aspects of Computer Science},
	doi = {10.4230/LIPIcs.STACS.2024.19},
	editor = {Beyersdorff, Olaf and Kant\'{e}, Mamadou Moustapha and Kupferman, Orna and Lokshtanov, Daniel},
	pages = {19:1--19:20},
	publisher = {Schloss Dagstuhl -- Leibniz-Zentrum f{\"u}r Informatik},
	series = {LIPIcs},
	title = {{The Complexity of Homomorphism Reconstructibility}},
	volume = {289},
	year = {2024},
}

@misc{BokerHRSS24Arxiv,
      title={The Complexity of Homomorphism Reconstructibility}, 
      author={Jan Böker and Louis Härtel and Nina Runde and Tim Seppelt and Christoph Standke},
      year={2023},
      eprint={2310.09009},
      archivePrefix={arXiv},
      primaryClass={cs.DS},
      url={https://arxiv.org/abs/2310.09009}, 
}

@article{razborov_minimal_2008,
	author = {Razborov, Alexander A.},
	doi = {10.1017/S0963548308009085},
	issn = {0963-5483, 1469-2163},
	journal = {Combinatorics, Probability and Computing},
	language = {en},
	month = jul,
	number = {4},
	pages = {603--618},
	title = {On the {Minimal} {Density} of {Triangles} in {Graphs}},
	urldate = {2020-04-28},
	volume = {17},
	year = {2008}}

@article{oneil_ulam_1970,
	author = {P. V. {O'Neil}},
	doi = {10.2307/2316851},
	issn = {00029890, 19300972},
	journal = {The American Mathematical Monthly},
	number = {1},
	pages = {35--43},
	publisher = {Mathematical Association of America},
	title = {Ulam's Conjecture and Graph Reconstructions},
	urldate = {2023-08-17},
	volume = {77},
	year = {1970}}

@article{seppelt_logical_2023,
	author = {Tim Seppelt},
	doi = {10.1016/j.ic.2024.105224},
	journal = {Information and Computation},
	pages = {105224},
	title = {Logical equivalences, homomorphism indistinguishability, and forbidden minors},
	volume = {301},
	year = {2024}}

@article{DellGR18,
	author = {Dell, Holger and Grohe, Martin and Rattan, Gaurav},
	collaborator = {Wagner, Michael},
	copyright = {Creative Commons Attribution 3.0 Unported license (CC-BY 3.0)},
	doi = {10.4230/LIPICS.ICALP.2018.40},
	journal = {45th International Colloquium on Automata, Languages, and Programming (ICALP 2018)},
	language = {en},
	pages = {40:1--40:14},
	title = {Lov{\'a}sz {Meets} {Weisfeiler} and {Leman}},
	urldate = {2020-04-29},
	year = {2018}}

@inproceedings{chaudhuri_optimization_1993,
	author = {Chaudhuri, Surajit and Vardi, Moshe Y.},
	booktitle = {Proceedings of the {Twelfth} {ACM} {SIGACT}-{SIGMOD}-{SIGART} {Symposium} on {Principles} of {Database} {Systems}, {May} 25-28, 1993, {Washington}, {DC}, {USA}},
	doi = {10.1145/153850.153856},
	editor = {Beeri, Catriel},
	pages = {59--70},
	publisher = {ACM Press},
	title = {Optimization of {Real} {Conjunctive} {Queries}},
	url = {https://doi.org/10.1145/153850.153856},
	year = {1993}}

@inproceedings{MancinskaR20,
	author = {Man{\v c}inska, Laura and Roberson, David E.},
	booktitle = {2020 {IEEE} 61st {Annual} {Symposium} on {Foundations} of {Computer} {Science} ({FOCS})},
	doi = {10.1109/FOCS46700.2020.00067},
	pages = {661--672},
	title = {Quantum isomorphism is equivalent to equality of homomorphism counts from planar graphs},
	year = {2020}}

@article{ko_three_1991,
	author = {Ko, Ker-I and Tzeng, Wen-Guey},
	doi = {10.1007/BF01200064},
	issn = {1420-8954},
	journal = {computational complexity},
	month = sep,
	number = {3},
	pages = {269--310},
	title = {Three {$\Sigma_2^P$}-complete problems in computational learning theory},
	url = {https://doi.org/10.1007/BF01200064},
	volume = {1},
	year = {1991}}

@inproceedings{CurticapeanDM17,
	author = {Curticapean, Radu and Dell, Holger and Marx, D{\'a}niel},
	booktitle = {Proceedings of the 49th Annual {ACM} {SIGACT} Symposium on Theory of Computing},
	doi = {10.1145/3055399.3055502},
	isbn = {978-1-4503-4528-6},
	location = {New York, {NY}, {USA}},
	note = {event-place: Montreal, Canada},
	pages = {210--223},
	publisher = {Association for Computing Machinery},
	series = {{STOC} 2017},
	title = {{Homomorphisms Are a Good Basis for Counting Small Subgraphs}},
	url = {https://doi.org/10.1145/3055399.3055502},
	year = {2017}}

@article{kopparty_homomorphism_2010,
	author = {Swastik Kopparty and Benjamin Rossman},
	doi = {10.1016/j.ejc.2011.03.009},
	issn = {0195-6698},
	journal = {European Journal of Combinatorics},
	note = {Homomorphisms and Limits},
	number = {7},
	pages = {1097-1114},
	title = {The homomorphism domination exponent},
	volume = {32},
	year = {2011}}

@article{hatami_undecidability_2011,
	author = {Hatami, Hamed and Norine, Serguei},
	doi = {10.1090/S0894-0347-2010-00687-X},
	issn = {0894-0347},
	journal = {Journal of the American Mathematical Society},
	language = {en},
	month = may,
	number = {2},
	pages = {547--547},
	title = {Undecidability of linear inequalities in graph homomorphism densities},
	urldate = {2020-04-28},
	volume = {24},
	year = {2011}}

@book{Lovasz12,
	address = {Providence, Rhode Island},
	author = {Lov{\'a}sz, L{\'a}szl{\'o}},
	doi = {10.1090/coll/060},
	isbn = {978-0-8218-9085-1},
	number = {volume 60},
	publisher = {American Mathematical Society},
	series = {American {Mathematical} {Society} colloquium publications},
	title = {Large networks and graph limits},
	year = {2012}}

@article{Lovasz67,
	author = {Lov{\'a}sz, L{\'a}zl{\'o}},
	doi = {10.1007/BF02280291},
	issn = {1588-2632},
	journal = {Acta Mathematica Academiae Scientiarum Hungarica},
	month = sep,
	number = {3},
	pages = {321--328},
	title = {Operations with structures},
	url = {https://doi.org/10.1007/BF02280291},
	volume = {18},
	year = {1967}}

@inproceedings{Grohe20c,
	author = {Martin Grohe},
	bibsource = {dblp computer science bibliography, https://dblp.org},
	booktitle = {Proceedings of the 39th {ACM} {SIGMOD-SIGACT-SIGAI} Symposium on Principles of Database Systems, {PODS} 2020, Portland, OR, USA, June 14-19, 2020},
	doi = {10.1145/3375395.3387641},
	editor = {Dan Suciu and Yufei Tao and Zhewei Wei},
	pages = {1--16},
	publisher = {{ACM}},
	title = {word2vec, node2vec, graph2vec, X2vec: Towards a Theory of Vector Embeddings of Structured Data},
	url = {https://doi.org/10.1145/3375395.3387641},
	year = {2020}}

@article{Dvorak10,
	author = {Dvo{\v r}{\'a}k, Zden{\v e}k},
	doi = {10.1002/jgt.20461},
	issn = {03649024},
	journal = {Journal of Graph Theory},
	language = {en},
	month = aug,
	number = {4},
	pages = {330--342},
	shorttitle = {On recognizing graphs by numbers of homomorphisms},
	title = {On recognizing graphs by numbers of homomorphisms},
	urldate = {2020-11-27},
	volume = {64},
	year = {2010}}

@inproceedings{boker_complexity_2019,
	address = {{Dagstuhl, Germany}},
	author = {B{\"o}ker, Jan and Chen, Yijia and Grohe, Martin and Rattan, Gaurav},
	booktitle = {44th {{International Symposium}} on {{Mathematical Foundations}} of {{Computer Science}} ({{MFCS}} 2019)},
	doi = {10.4230/LIPIcs.MFCS.2019.54},
	editor = {Rossmanith, Peter and Heggernes, Pinar and Katoen, Joost-Pieter},
	isbn = {978-3-95977-117-7},
	issn = {1868-8969},
	pages = {54:1--54:13},
	publisher = {{Schloss Dagstuhl\textendash Leibniz-Zentrum fuer Informatik}},
	series = {Leibniz {{International Proceedings}} in {{Informatics}} ({{LIPIcs}})},
	title = {The {{Complexity}} of {{Homomorphism Indistinguishability}}},
	volume = {138},
	year = {2019}}

@inproceedings{Grohe20b,
	address = {New York, NY, USA},
	author = {Grohe, Martin},
	booktitle = {Proceedings of the 35th {Annual} {ACM}/{IEEE} {Symposium} on {Logic} in {Computer} {Science}},
	doi = {10.1145/3373718.3394739},
	isbn = {978-1-4503-7104-9},
	pages = {507--520},
	publisher = {Association for Computing Machinery},
	series = {{LICS} '20},
	title = {Counting {Bounded} {Tree} {Depth} {Homomorphisms}},
	year = {2020}}

@book{graham1994concrete,
	address = {Reading, MA, USA},
	author = {Ronald L. Graham and Donald E. Knuth and Oren Patashnik},
	edition = {2nd},
	isbn = {978-0201558029},
	publisher = {Addison-Wesley},
	title = {Concrete Mathematics: A Foundation for Computer Science},
	year = {1994}}

@article{ErdosGallai1960,
	author = {Paul Erd{\H{o}}s and Tibor Gallai},
	journal = {Matematikai Lapok},
	pages = {264--274},
	title = {Gr{\'a}fok el{\H o}{\'\i}rt foksz{\'a}m{\'u} pontokkal (Graphs with prescribed degrees of vertices)},
	volume = {11},
	year = {1960}}

@article{Havel1955,
	author = {V{\'a}clav Havel},
	journal = {\v{C}asopis pro p\v{e}stov{\'a}n{\'\i} matematiky},
	number = {4},
	pages = {477--480},
	title = {Pozn{\'a}mka o existenci kone\v{c}n{\'y}ch graf{\r{u}} (A remark on the existence of finite graphs)},
	url = {https://eudml.org/doc/19050},
	volume = {80},
	year = {1955}}

@article{Hakimi1962,
	author = {S. L. Hakimi},
	doi = {10.1137/0110037},
	journal = {Journal of the Society for Industrial and Applied Mathematics},
	number = {3},
	pages = {496--506},
	title = {On the Realizability of a Set of Integers as Degrees of the Vertices of a Linear Graph},
	volume = {10},
	year = {1962}}

@book{Shahriari2021,
  title={An invitation to combinatorics},
  author={Shahriari, Shahriar},
  year={2021},
  publisher={Cambridge University Press}
}

@article{Tinhofer91,
	author = {Gottfried Tinhofer},
	journal = {Discrete Applied Mathematics},
	number = {2-3},
	pages = {253--264},
	title = {A note on compact graphs},
	volume = {30},
	year = {1991},
    doi = {10.1016/0166-218X(91)90049-3}
}

@inproceedings{Seppelt24,
  author       = {Tim Seppelt},
  editor       = {Rastislav Kr{\'{a}}lovic and
                  Anton{\'{\i}}n Kucera},
  title        = {An Algorithmic Meta Theorem for Homomorphism Indistinguishability},
  booktitle    = {49th International Symposium on Mathematical Foundations of Computer
                  Science, {MFCS} 2024, Bratislava, Slovakia, August 26-30, 2024},
  series       = {LIPIcs},
  volume       = {306},
  pages        = {82:1--82:19},
  publisher    = {Schloss Dagstuhl - Leibniz-Zentrum f{\"{u}}r Informatik},
  year         = {2024},
  url          = {https://doi.org/10.4230/LIPIcs.MFCS.2024.82},
  doi          = {10.4230/LIPICS.MFCS.2024.82},
}

@inproceedings{KarRS025,
	author = {Prem Nigam Kar and David E. Roberson and Tim Seppelt and Peter Zeman},
	booktitle = {52nd International Colloquium on Automata, Languages, and Programming, {ICALP} 2025, Aarhus, Denmark, July 8-11, 2025},
	doi = {10.4230/LIPICS.ICALP.2025.105},
	editor = {Keren Censor{-}Hillel and Fabrizio Grandoni and Jo{\"{e}}l Ouaknine and Gabriele Puppis},
	pages = {105:1--105:19},
	publisher = {Schloss Dagstuhl - Leibniz-Zentrum f{\"{u}}r Informatik},
	series = {LIPIcs},
	title = {{NPA} Hierarchy for Quantum Isomorphism and Homomorphism Indistinguishability},
	volume = {334},
	year = {2025},
}

@incollection{CernyS26,
	author = {Marek {\v C}ern{\'y} and Tim Seppelt},
	booktitle = {Proceedings of the 43rd International Symposium on Theoretical Aspects of Computer Science (this conference)},
	publisher = {Schloss Dagstuhl - Leibniz-Zentrum f{\"{u}}r Informatik},
	series = {LIPIcs},
	title = {Homomorphism Indistinguishability, Multiplicity Automata Equivalence, and Polynomial Identity Testing},
	year = {2026}
}

@article{MarcinkowskiO25,
	author = {Jerzy Marcinkowski and Piotr Ostropolski-Nalewaja},
	journal = {ArXiv},
	title = {Bag Semantics Query Containment: The CQ vs. UCQ Case and Other Stories},
	volume = {2503.07219},
	year = {2025}}

@article{MarcinkowskiO24,
	author = {Marcinkowski, Jerzy and Orda, Mateusz},
	journal = {Proceedings of the ACM on Management of Data},
	number = {2},
	pages = {1--24},
	title = {Bag Semantics Conjunctive Query Containment. Four Small Steps Towards Undecidability.},
	volume = {2},
	year = {2024}}

@inproceedings{ChandraM77,
	author = {A.K. Chandra and P.M. Merlin},
	booktitle = {Proceedings of the 9th ACM Symposium on Theory of Computing},
	pages = {77-90},
	title = {Optimal Implementation of Conjunctive Queries in Relational Data Bases},
	year = {1977}
}

@article{Kelly57,
	author = {Paul Joseph Kelly},
	journal = {Pacific Journal of Mathematics},
	pages = {961-968},
	title = {A congruence theorem for trees},
	volume = {7},
	year = {1957}
}

@book{Ulam60,
	author = {Stanis{\l}aw Ulam},
	publisher = {Wiley},
	title = {A collection of mathematical problems},
	year = {1960}}

@inproceedings{Babai95,
	author = {L{\'{a}}szl{\'{o}} Babai},
	booktitle = {Handbook of Combinatorics},
	editor = {R. Graham and M. Gr{\"o}tschel and L. Lov{\'a}sz},
	pages = {1447--1540},
	publisher = {MIT Press},
	title = {Automorphism groups, isomorphism, reconstruction},
	volume = {2},
	year = {1995}
}

\newpage
\appendix

\section{Missing Proofs from Section \ref{sec:binary}}
    \begin{proof}[Proof of \cref{claim:3.1-correct}.]
    Recall that
    \[ \cons{F} \coloneqq \cons{F_\alpha} \cupp \big ((\smallPtwoOut, 1)\big ) \cupp \bigcup_{v \in V(C)} \cons{F_{\val \cdot}}(\Cir_v, 1) \cupp \bigcup_{v \in V(C)\setminus \Inp^C} \cons{F}_C(v) .\]
        \begin{itemize}
            \item By our construction, any graph $G$ satisfying $\cons{F_\alpha}$ contains exactly one copy of the value gadget $\smallAlpha$ as subgraph. Furthermore, no other vertices in $G$ are coloured by either $\alpha$, $\bot$, or $\top$ (because their total counts in $G$ are also 2, 1, and 1).
\item If $G$ satisfies $\cons{F_{\val \cdot}}(\Cir_v, 1)$ for each node $v\in V(C)$, we observe that $G$ contains exactly one node $v'$ coloured $\Cir_v$ per node $v\in V(C)$. There is a single edge connecting $v'$ with either one of the two vertices coloured $\alpha$. With our observation from earlier, it follows that $\val{v'}$ is defined in $G$, and the string $\val{v'_1}\dots\val{v'_n}\in\{0,1\}^n$ of values of input nodes $v'_{1},\dots,v'_{n}$ in $G$ defines an input $x$ of $C$.
            \item Should $G$ also satisfy $\cons{F}_C(v)$ for each gate $v \in V(C)\setminus \Inp^C$, then we claim that $G$ encodes an evaluation of $C(x)$. 
            We will proof the inductive case of $v$ being an $\land$-gate $\land(u,w)$, for incoming edges from nodes $u,w\in V(C)$. The cases for $\lor$-gate and $\neg$-gates are similar. Let $v',u',w' \in V(G)$ be the vertices coloured $\Cir_v,\Cir_u,\Cir_w$, respectively. Fix some input $x$, as described in the previous bullet. By our inductive hypothesis $\val{u'}$ and $\val{w'}$ are already computed correctly in the represented evaluation of $C(x)$ in $G$ and, say, it holds that $\val{u'}=0$ and $\val{w'}=1$ (other cases for $\val{u'},\val{w'}\in\{0,1\}^2$ correspond to another constraint in $\cons{F}_C(v)$ each). Because $G$ satisfies $\cons{F_{\val \cdot}}(\Cir_v, 1)$, $\val{v'}$ is defined and either 0 or 1. Towards a contradiction, assume that $\val{v'} = 1 \neq 0 = \val{u'} \cdot \val{w'}$. Then by definition of $\val{\cdot}$, the vertex $v'$ of $G$ is adjacent to the $\alpha$-coloured next to the $\top$-coloured vertex. Taking into account the values of $u'$ and $w'$, $G$ must contain $\smallAlphaAnd{0}{1}{1}$ as subgraph. This is a direct contradiction to the assumption that $G$ satisfies constraint $(\smallAlphaAnd{0}{1}{1},0)$ in $\cons{F}_C(v)$.
            \item Finally, we observe that $G$ satisfying $(\smallPtwoOut, 1)$ is equivalent to $\val{o'}=1$ for the $\Cir_o$ vertex $o'$ corresponding to output node $o$, and thus to $C(x)=1$, for input $x = \val{v'_{1}}\dots\val{v'_{n}}$. 
        \end{itemize}
    
        We have computed $O(\abs{C})$ constraints that consist of graphs of at most 7 vertices, with homomorphism counts either 0, 1, or 2. In total, the graphs contain $\abs{C}+3$ different colours, which can be encoded in binary strings of length $O(\log\abs{C})$ each. This brings the instance produced by our reduction to $O(\abs{C}\log\abs{C})$ size (since we got around using anything but $O(1)$ homomorphism counts, this bound remains even if encoded in unary). Obviously, each individual constraint can be computed in polynomial time from the input circuit $C$.    
    \end{proof}

    \begin{proof}[Proof of \cref{claim:3.2-correct}.]
    Recall that
    \[ \cons{F} \coloneqq \cons{F_\alpha} \cupp \big ((\smallPtwoOut, k(k-1)/2)\big ) \cupp \bigcup_{c \in C_G} \big ( \Fnm(c, \alpha, k(k-1)/2, 1) \cupp \cons{F}_c \big ) \cupp F^+. \]
        \begin{itemize}
        \item Analogously to the proof of \cref{claim:3.1-correct}, we begin by observing that any graph $G$ satisfying $\cons{F_\alpha}$ contains exactly one copy of the value gadget $\smallAlpha$ as subgraph and no other vertices in $G$ are coloured by either $\alpha$, $\bot$, or $\top$.
        \item Since $G$ satisfies $\Fnm(\Cir_v, \alpha, k(k-1)/2, 1)$, for each node $v\in V(C)$, $G$ contains $k(k-1)/2$ vertices in colour class $\Cir_v$. Each vertex $v\in\Cir_v$ has a single neighbour in the value gadget, and thus, $\val{v}$ is defined.
        \item From $\cons{F}_\PPP$ then follows that $G$ contains as subgraphs $k(k-1)/2$ stars with a central $\PPP$-coloured vertex and leafs coloured by each $\Cir_v$, for $v\in V(C)$. We refer to them as $\PPP$-stars.
        \item If $G$ also satisfies $\cons{F}_{C_G}(v)$ for each gate $v$, then we claim that $G$ encodes $k(k-1)/2$ distinct evaluations of $C_G$. The proof is almost analogous to the $\CircuitSAT$ case, but, recall that we changed the constraints $\cons{F}_{C_G}(v)$ slightly. By connecting the $\PPP$-coloured vertex to all gate-coloured vertices inside of the constraint graphs, we made the constraints less restrictive. Now, $\cons{F}_{C_G}(v)$ guarantees correct evaluation at gate $v = \land(u,w)$, if $v\in\Cir_v,u\in\Cir_u,w\in\Cir_w$ are all connected to the same $\PPP$-coloured vertex, and thus, in the same copy of $C_G$.
        \item From $\cons{F_{\QQQ}}$ follows that $G$ contains as subgraphs $k$ different $n$-stars with a central $\QQQ$-coloured vertex and leafs coloured by each $\Bit_i$, for $i\in [0,n-1]$. Analogously, we refer to them as $\QQQ$-stars. Because the leafs are connected with the value gadget, we treat them as encodings of a binary string $\bin{x}\in\{0,1\}^n$.
        \item From $\cons{F_{\text{in}}}$ follows that copies of $C_G$ receive at most one bit value per input: for $v\in\{v_0,\dots,v_{2n-1}\}$, satisfying $\cons{F_{\text{in}}}$ requires each input vertex coloured $\Cir_{v}$ to be adjacent to a single vertex coloured $\Bit_{\text{idx}(v)}$ or $\Bit_{\text{idx}(v)-n}$. Because of the constraint $(\smallAlphaCfour{\text{idx}(v)}, 0)$, the values of input vertices have to be identical to the values of their $\Bit_i$-coloured neighbour.
        \item Together with the previous two bullets, satisfying $\cons{F_{\text{str}}}$ suffices to establish that the first (and the second) $n$ input vertices of the same $C_G$ copy must be connected to leafs of the same $\QQQ$-star. Otherwise, we would not reach the counts of $N$ subgraphs $F_{\text{str}_1}$ and $N$ $F_{\text{str}_2}$ with the $N$ different $\PPP$-stars we have available. Due to $\cons{F_{\text{in}}}$, a single $\PPP$-star contributes at most to one each of $F_{\text{str}_1}$ and $F_{\text{str}_2}$.
        \item We observe that $G$ satisfying $(\smallPtwoOut, k(k-1)/2)$ is equivalent to $\val{v_o}=1$ for each vertex $v_o$ corresponding to output node $o$ of a $C_G$ copy, and thus to $C_G(\bin{i},\bin{j})=1$, for their respective inputs $\bin{i},\bin{j}\in\{0,1\}^{n}$.
        \item The previous bullets established that the input of each $C_G$ copy is one of the $k$ inputs encoded by $\QQQ$-stars. Finally, we observe that if each $\QQQ$-star input does not get passed to $k-1$ distinct copies of $C_G$, as either the first or second input, then at least one circuit $C_G(\bin{i},\bin{i})$ receives the same index $\bin{i}$ of vertex $v_i$ as input and output. Since $C_G$ is the succinct representation of a coloured graph without self loops, $C_G(\bin{i},\bin{i}) = 0$. This is in contradiction to $(\smallPtwoOut, k(k-1)/2)$, as less than $k(k-1)/2$ copies of $C_G$ would evaluate to $1$. From the existence of $k$ $n$-bit strings $\bin{i_1},\dots,\bin{i_n}$ that pairwise evaluate to $C_G(\bin{i},\bin{j})=1$, we conclude that the graph encoded by SCR $C_G$ must contain $k$ vertices that are mutually adjacent by $k(k-1)/2)$ edges in total.
    \end{itemize}

    We have computed $O(\abs{C_G})$ constraints that mostly consist of graphs of at most 8 vertices -- with two exceptions of $O(\abs{C_G})$ vertices each -- and used $O(k^2)$ numbers as homomorphism counts. In total, the graphs contain $O(\abs{C_G})$ different colours, which can be encoded in binary strings of length $O(\log\abs{C_G})$ each. This brings the instance produced by our reduction to $O(\abs{C_G}\log\abs{C_G}+\abs{C_G}\log k)$ size. It is not too difficult to compute the constraints in polynomial time from the input circuit $C_G$.
    \end{proof}

    \begin{proof}[Proof of \cref{claim:m:4-correct}.]
        Let $G=(V(G),E({G}),\SSS^{G},\TTT^{G},\XXX^{G},\AAA^{G})$ be some $4$-coloured graph satisfying the constraints.  We will show that $G=(G')^*$ (Remark: rather, $H=(G')^*$, for some subgraph $H\subseteq G$.), i.e that we can undo the previous construction in a way to define an $m$-coloured graph $G'$, which satisfies the constraints $\cons{F}$.
    \begin{itemize}
        \item First, we observe $G$ contains an $m$-colour gadget as induced subgraph, and all other vertices must be $\AAA$-coloured.
        \item Crucially, if $G$ contains an $\AAA$-coloured vertex $v$ that is not connected to any $\XXX$-coloured vertex, then we can simply ignore this vertex for satisfying homomorphism constraints in $(F_i^*,n_i)_{i\in \ell}$ -- any homomorphism from $F_i^*$ to $G$ requires that all $\AAA$-coloured vertices in the image of $F_i^*$ are attached to at least one $\XXX$-coloured vertex.
        \item We define the graph $H$ as the subgraph of $G$ induced by all vertices in $S^{G}$, $T^{G}$, and $C^{G}$ (the colour gadget), as well as any $v\in \AAA^{G}$ with $\abs{N(v)\cap \XXX^G}>0$.
        \item Next, we will define $G'=(V(G'),E(G'),\AAA_1^{G'},\dots,\AAA_m^{G'})$ such that $H=(G')^*$. Let
        \begin{itemize}
            \item $V(G') \coloneqq A^H$,
            \item $E(G') \coloneqq E(H)\cap (\AAA^G)^2$,
            \item $\AAA_i^{G'} \coloneqq \{v \in \AAA^H \mid \text{ $vs_i\in E(H)$} \}$.
        \end{itemize}
        Here, we require that $\abs{N(v)\cap \XXX^G}=1$ for each $v\in V(H)$. This is guaranteed, because $H$ satisfies the constraints $(C_{m,i,j},0)_{i<j\in[m]}$, and any vertex with $\abs{N(v)\cap \XXX^G}>1$ would create a cycle, thus making some of these homomorphism counts from modified $m$-colour gadgets greater than $0$.
        \item It remains to show that $G'$ satisfies any constraint $(F_i,m_i)$ in $\cons{F}$. Because $G$ satisfies the constraint $(F_i^*,m_i)$ by our assumption, and $H$ is the subgraph of $G$ that contains the colour classes $\SSS^{G}$, $\TTT^{G}$, $\XXX^{G}$, and all vertices of $\AAA^G$ that have at least one neighbour in $\XXX^{G}$, any image of $F_i^*$ must be contained in $H$, and thus, $H$ satisfies $(F_i^*,m_i)$ as well. Since $H=(G')^*$, by \cref{claim:4-colour-encoding}, the graph $G'$ satisfies the constraint $(F_i,m_i)$.
    \end{itemize}
    We can construct this list of $4$-coloured constraints in polynomial time, since we require $O(m^2)$ additional constraints, and the construction of $F^*$ from $F$ adds $O(m)$ vertices and $O(m+\abs{V(F)})$ edges to $F$.
    \end{proof}
    
    \begin{proof}[Proof sketch of \cref{lemma:4-to-1}.]
        The proof goes along the lines of uncoloured hardness for $\mathsf{NP}^\mathsf{\#P}$ from \cite{BokerHRSS24}, but differs in details. The main idea is still to introduce a finite set of Kneser graphs as gadgets, which are homomorphism incomparable with each other. Encoding constraint graphs from $F$ with the construction involving these gadgets, essentially, we end up encoding colours as Kneser graphs attached to vertices in the graph $G$.
        
        We make a slight modification to the set $\CI_2$ (from \cite{BokerHRSS24Arxiv}, Appendix C.4) during this proof, because in our current case we can make less assumptions about the underlying graph structure, which was guaranteed to be a $\smallKthree$ in \cite{BokerHRSS24} instead. On the other hand, with only $4$ colours, the number of indicator/(non-)edge/colour$(v)$/colour$(w)$ gadget combinations is no longer dependant on the size of the input. By considering all non-injective images of every pair of these gadgets in our constraints, we can ensure that the graph $G$ must contain the desired gadgets for colours as disjoint subgraphs.
    \end{proof}

\section{Missing Proofs from Section \ref{sec:algorithms}}
\label{sec:appendix_algorithms}

For the proof of Lemma \ref{lem:recursive_counts} we need Pascal's Identity and the Binomial Theorem.
\begin{theorem}[Pascal's Identity {\cite[p.~158]{graham1994concrete}}]\label{thm:pascal}
	For all positive integers $n,k$ it holds that
	\[
	\binom{n}{k}=\binom{n-1}{k}+\binom{n-1}{k-1}.
	\]
\end{theorem}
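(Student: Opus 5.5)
We prove Pascal's Identity $\binom{n}{k}=\binom{n-1}{k}+\binom{n-1}{k-1}$ for positive integers $n,k$ by a double-counting argument on subsets. Let $X=[n]$ and count the $k$-element subsets of $X$; there are $\binom{n}{k}$ of them, with the usual convention that $\binom{a}{b}=0$ when $b>a$.

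Partition these subsets according to whether they contain the distinguished element $n$.

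\begin{enumerate}
\item The $k$-subsets avoiding $n$ are exactly the $k$-subsets of $[n-1]$. There are $\binom{n-1}{k}$ of them, and this count is $0$ when $k=n$, consistent with the convention.
\item The $k$-subsets containing $n$ are in bijection with the $(k-1)$-subsets of $[n-1]$, via $A\mapsto A\setminus\{n\}$ with inverse $B\mapsto B\cup\{n\}$. There are $\binom{n-1}{k-1}$ of them; since $k\ge1$, the quantity $k-1$ is a nonnegative integer, so this binomial coefficient is well defined.
\end{enumerate}

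The two classes are disjoint and together cover all $k$-subsets of $[n]$, so adding the two counts gives the identity.

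The only point needing care is the boundary case $k>n$. There both sides vanish except possibly $\binom{n-1}{k-1}$, which is also $0$ because $k-1>n-1$. The combinatorial argument already covers this case automatically, since every class involved is empty.

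As an alternative route, one could verify the identity algebraically from $\binom{n}{k}=\frac{n!}{k!(n-k)!}$ by putting the two terms over the common denominator $k!(n-k)!$. This needs a separate treatment of $k\ge n$, so the counting argument is preferable. I expect no real obstacle; the main thing is to be consistent about the convention $\binom{a}{b}=0$ for $b>a$, which is what Lemma~\ref{lem:recursive_counts} needs when it matches $\s_j(d)$ with $\binom{d}{j}$ for $j>d$.
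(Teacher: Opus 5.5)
Your proof is correct. The paper gives no proof of Pascal's Identity; it only cites it from Graham--Knuth--Patashnik, and your split of the $k$-subsets of $[n]$ by whether they contain $n$ is a standard, complete justification. Your insistence on the convention $\binom{a}{b}=0$ for $b>a$ is well placed, because the induction in Lemma~\ref{lem:recursive_counts} applies the identity also when $j\ge d$, and there $\binom{d-1}{j}=0$ is exactly what is needed.
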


\begin{theorem}[Binomial Theorem {\cite[p.~162]{graham1994concrete}}]
	For all integers $n \ge 0$ and any real (or complex) numbers $x,y$ it holds that
	\[
	(x+y)^n = \sum_{k=0}^{n} \binom{n}{k} \, x^{\,k} y^{\,n-k}.
	\]
\end{theorem}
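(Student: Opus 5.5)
The statement immediately above is the Binomial Theorem, quoted from \cite{graham1994concrete}. I would prove it by induction on $n$, using Pascal's Identity (\cref{thm:pascal}) as the only combinatorial ingredient. Throughout I use the conventions $\binom{n}{k}=0$ for $k<0$ or $k>n$, and $\binom{n}{0}=\binom{n}{n}=1$.

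\emph{Base case.} For $n=0$ both sides equal $1$, since $(x+y)^0=1$ and the sum has the single term $\binom{0}{0}x^0y^0=1$.

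\emph{Inductive step.} Assume the identity holds for some $n\ge 0$ and write $(x+y)^{n+1}=(x+y)(x+y)^n$. Expanding with the induction hypothesis gives
\[
\sum_{k=0}^{n}\binom{n}{k}x^{k+1}y^{n-k}+\sum_{k=0}^{n}\binom{n}{k}x^{k}y^{n+1-k}.
\]
In the first sum I shift the index $k\mapsto k-1$, so that both sums run over monomials $x^ky^{n+1-k}$ with $k\in[0,n+1]$. The coefficient of $x^ky^{n+1-k}$ is then $\binom{n}{k-1}+\binom{n}{k}$. The terms with an out-of-range binomial coefficient vanish by the convention above.

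\emph{Collecting coefficients.} For $1\le k\le n$, Pascal's Identity (applied with $n+1$ in place of $n$, which is allowed since $n+1$ and $k$ are positive) gives $\binom{n}{k-1}+\binom{n}{k}=\binom{n+1}{k}$. For the boundary indices $k=0$ and $k=n+1$, the coefficient is $1=\binom{n+1}{0}=\binom{n+1}{n+1}$. This yields $\sum_{k=0}^{n+1}\binom{n+1}{k}x^ky^{n+1-k}$ and completes the induction.

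Only commutativity and distributivity of the field operations are used, so the argument works for both real and complex $x,y$. There is no substantial obstacle; the only delicate point is the index shift together with the separate handling of the boundary terms $k=0$ and $k=n+1$.
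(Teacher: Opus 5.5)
Your induction is correct, including the index shift and the separate handling of the boundary coefficients $k=0$ and $k=n+1$, and it relies on exactly the Pascal Identity that the paper states alongside this theorem. The paper itself gives no proof here and simply cites the result from Graham, Knuth and Patashnik; it only needs the special case $x=d-1$, $y=1$ in the proof of Lemma~\ref{lem:recursive_counts}, so your standard argument fully covers what is used.
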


\begin{proof}[Proof of Lemma \ref{lem:recursive_counts}]
	We have to show $\s_j(d)=\binom{d}{j}$ for the subgraph version.
	This can be seen via double induction.
In the induction step
	\begin{align}
		\s_j(d)&=\s_j(d-1)+\s_{j-1}(d-1)\label{eq:e1}\\
		&=\binom{d-1}{j}+\binom{d-1}{j-1}\label{eq:e2}\\
		&=\binom{d}{j}\label{eq:e3}
	\end{align}
	we use the definition of $\s_j$ in Equation (\ref{eq:e1}), the induction hypothesis in Equation (\ref{eq:e2}), and Pascal's Identity in Equation (\ref{eq:e3}).
    
The proof for the homomorphism version is done analogously but in the induction step
	\begin{align}
		\h_j(d)&=\sum_{i=0}^{j}\binom{j}{i}\,\h_i(d-1)\\
		&=\sum_{i=0}^{j}\binom{j}{i}\,(d-1)^i\\
		&=\sum_{i=0}^{j}\binom{j}{i}\,(d-1)^i\cdot1^{j-i}\\
		&=(d-1+1)^j\label{eq:e4}\\
		&=d^j
	\end{align}
	we use the Binomial Theorem in Equation (\ref{eq:e4}) instead of Pascal's Identity.
\end{proof}

\begin{proof}[Proof of Lemma \ref{lem:degree_sequence_step}]
Let $(d_1,\dots,d_n)$ be a non-increasing sequence of positive integers with $n\geq2$ and $d_1\leq n-1$.
Let
\[
(d'_1,\dots,d'_{n'})\coloneqq(d_2-1,\dots,d_n-1,\underbrace{1,\dots,1}_{\mathclap{n-1-d_1\text{ times}}}).
\]
We have to show that $(d_1,\dots,d_n)$ is graphic if and only if $(d'_1,\dots,d'_{n'})$, ordered decreasingly, is graphic.

First, observe $\sum_{i=1}^n d_i$ is even if and only if $\sum_{i=1}^{n'} d'_i$ is even since
\begin{align*}
\sum_{i=1}^{n'} d'_i
&=n-1-d_1+\sum_{i=2}^n (d_i-1)\\
&=n-2d_1+\sum_{i=1}^n (d_i-1)\\
&=\underbrace{2(n-d_1)}_{\text{even}}+\sum_{i=1}^n d_i.
\end{align*}

Next, we show the equivalence of
\begin{align}
\sum_{i=1}^k d_i \leq k(k-1) + \sum_{i=k+1}^n \min(d_i,k) \text{ \ \ for $1\leq k\leq n$}
\end{align}
and
\begin{align}
\sum_{i=1}^{k'} d'_i \leq k'(k'-1) + \sum_{i=k'+1}^{n'} \min(d'_i,k') \text{ \ \ for $1\leq k'\leq n'$}.
\end{align}
These inequalities have a direct correspondence with $k'=k-1$ for $k\geq2$ and $k'\leq n-1$ which we show first.
We get for the left-hand side
\begin{align*}
\sum_{i=1}^{k'} d'_i
=\sum_{i=1}^{k-1}d'_i
=\sum_{i=2}^k (d_i-1)
=-k+2-\sum_{i=2}^k d_i
=-k+2-d_1+\sum_{i=1}^k d_i,
\end{align*}
for the first summand of the right-hand side
\begin{align*}
k'(k'-1)
=(k-1)(k-2)
=-2k+2+k(k-1),
\end{align*}
and for the second summand on the right-hand side
\begin{align*}
\sum_{i={k'}+1}^{n'} \min(d'_i,k')
&=\sum_{i=k}^{n'} \min(d'_i,k-1)\\
&=n+1-d_1\sum_{i=k+1}^{n} \min(d_i-1,k-1)\\
&=n-1-d_1 -(n-(k+1))+ \sum_{i=k+1}^{n} \min(d_i,k)\\
&=k-d_1 +\sum_{i=k+1}^{n} \min(d_i,k).
\end{align*}
Using these three equalities we get
\begin{align*}
&&\sum_{i=1}^{k'} d'_i\ \ \ \leq&\ \ \ k'(k'-1) + \sum_{i={k'}+1}^{n'} \min(d'_i,k')\\
\Leftrightarrow&&-k+2-d_1+\sum_{i=1}^k d_i\ \ \ \leq&\ \ \ -2k+2+k(k-1) + k-d_1 +\sum_{i=k+1}^{n} \min(d_i,k)\\
\Leftrightarrow&&\sum_{i=1}^{k} d_i\ \ \ \leq&\ \ \ k(k-1) + \sum_{i={k}+1}^{n} \min(d_i,k).
\end{align*}

Now, for $k=1$ we observe that
\begin{align*}
\underbrace{\sum_{i=1}^{k} d_i}_{\leq d_1} \leq \underbrace{k(k-1)}_{0} + \underbrace{\sum_{i={k}+1}^{n} \min(d_i,k)}_{n-1}
\end{align*}
is always true because we we already assumed $d_1\leq n-1$ as specified by Lemma \ref{lem:degree_sequence_step}.

Finally, for $k'\geq n$ we show that
\begin{align*}
&\sum_{i=1}^{k'} d'_i \leq k'(k'-1) + \sum_{i={k'}+1}^{n'} \min(d'_i,k')\\
\end{align*}
follows from the inequality with $k=n$, which is
\begin{align}
\label{eq:e5}
\sum_{i=1}^{n} d_i \leq n(n-1) +\sum_{i=n+1}^{n} \min(d_i,n).
\end{align}
Indeed, assuming inequality (\ref{eq:e5}) holds we have
\begin{align*}
\sum_{i=1}^{k'} d'_i
=&\sum_{i=1}^{n-1} d'_i + \sum_{i=n}^{k'} d'_i\\\displaybreak[3]
=&\sum_{i=1}^{n-1} d'_i + k'-n+1\\\displaybreak[3]
=&\sum_{i=2}^{n} (d_i-1) + k'-n+1\\\displaybreak[3]
=&\sum_{i=2}^{n} d_i + k'-2n+2\\\displaybreak[3]
=&\sum_{i=1}^{n} d_i + k' - 2n+2-d_1\\\displaybreak[3]
\leq& n(n-1) +\sum_{i=n+1}^{n} \min(d_i,n)+k' - 2n+2-d_1\\\displaybreak[3]
=& n(n-1)+k' - 2n+2-d_1\\\displaybreak[3]
=& n(n-2)+k' - n+2-d_1\\\displaybreak[3]
=& \underbrace{n(n-2)}_{\leq k'(k'-2)} + \underbrace{k'}_{\leq n'}  \underbrace{- n+2}_{\leq 0}\underbrace{-d_1}_{\leq 0}\\\displaybreak[3]
\leq& k'(k'-2) + n'\\\displaybreak[3]
=& k'(k'-1) + n'-k'\\\displaybreak[3]
=& k'(k'-1) + \sum_{i={k'}+1}^{n'} \min(d'_i,k').
\end{align*}
\end{proof}

\begin{proof}[Proof of Lemma \ref{lem:recursive_counts_for_entire_graph_with_slack}]
        Let $(d_1',\dots,d_{n'}')\coloneqq(d_2-1,\dots,d_n-1,\underbrace{1,\dots,1}_{x\text{ times}})$.
        Note that $n'=n-1+x$.
        We have
    \begin{align*}
        \sub(S_0,G)&=n\\
        &=\underbrace{\s_0(d_1)}_{1}+\underbrace{\sub(S_0,G')}_{n-1+x}-x.
    \end{align*}
    And with
    \begin{align*}
        \sum_{i=1}^{n}\s_j(d_i)&=\s_j(d_1)+\sum_{i=2}^{n}\s_j(d_i)\\\displaybreak[2]
        &=\s_j(d_1)+\sum_{i=1}^{n-1}\s_j(d'_i+1)\\\displaybreak[2]
        &=\s_j(d_1)+\sum_{i=1}^{n-1}\left(\s_j(d'_i)+\s_{j-1}(d'_i)\right)\\\displaybreak[2]
        &=\s_j(d_1)+\sum_{i=1}^{n-1}\s_j(d'_i)+\sum_{i=1}^{n-1}\s_{j-1}(d'_i)\ \ \ \text{for $j\geq1$}
    \end{align*}
    we have
    \begin{align*}
        \sub(S_j,G)&=\sum_{i=1}^{n}\s_j(d_i)\\\displaybreak[2]
        &=\s_j(d_1)+\sum_{i=1}^{n-1}\s_j(d'_i)+\sum_{i=n}^{n'}\underbrace{\s_j(1)}_{0}+\sum_{i=1}^{n-1}\s_{j-1}(d'_i)+\sum_{i=n}^{n'}\underbrace{\s_{j-1}(1)}_{0}\\\displaybreak[2]
        &=\s_j(d_1)+\sum_{i=1}^{n'}\s_j(d'_i)+\sum_{i=1}^{n'}\s_{j-1}(d'_i)\\\displaybreak[2]
        &=\s_j(d_1)+\sub\left(S_j,G'\right)+\sub\left(S_{j-1},G'\right)\ \ \ \text{for $j\geq3$},
\end{align*}
    \begin{align*}
        \sub(S_2,G)&=\sum_{i=1}^{n}\s_2(d_i)\\\displaybreak[2]
        &=\s_2(d_1)+\sum_{i=1}^{n-1}\s_2(d'_i)+\sum_{i=1}^{n-1}\s_{1}(d'_i)\\\displaybreak[2]
        &=\s_2(d_1)+\sum_{i=1}^{n-1}\s_2(d'_i)+\underbrace{\sum_{i=n}^{n'}\underbrace{\s_2(1)}_{0}}_{0}+\sum_{i=1}^{n-1}\s_{1}(d'_i)+\underbrace{\sum_{i=n}^{n'}\underbrace{\s_{1}(1)}_{1}}_{x}-x\\\displaybreak[2]
        &=\s_2(d_1)+\sum_{i=1}^{n'}\s_2(d'_i)+\sum_{i=1}^{n'}\s_{1}(d'_i)-x\\\displaybreak[2]
        &=\s_2(d_1)+\sub\left(S_2,G'\right)+2\cdot\sub\left(S_1,G'\right)-x.
    \end{align*}
    Analogously, with the factor 2 coming from Observation \ref{obs:graph_counts_to_vertex_counts}, we have
    \begin{align*}
        2\cdot\sub(S_1,G)&=\sum_{i=1}^{n}\s_1(d_i)\\\displaybreak[2]
        &=\s_1(d_1)+\sum_{i=1}^{n-1}\s_1(d'_i)+\sum_{i=1}^{n-1}\s_{0}(d'_i)\\\displaybreak[2]
        &=\s_1(d_1)+\sum_{i=1}^{n-1}\s_1(d'_i)+\underbrace{\sum_{i=n}^{n'}\underbrace{\s_1(1)}_{1}}_{x}+\sum_{i=1}^{n-1}\s_{0}(d'_i)+\underbrace{\sum_{i=n}^{n'}\underbrace{\s_{0}(1)}_{1}}_{x}-2x\\\displaybreak[2]
        &=\s_1(d_1)+\sum_{i=1}^{n'}\s_1(d'_i)+\sum_{i=1}^{n'}\s_{0}(d'_i)-2x\\\displaybreak[2]
        &=\s_1(d_1)+2\cdot\sub\left(S_1,G'\right)+\sub\left(S_0,G'\right)-2x.
    \end{align*}
\end{proof}

To prove Lemma \ref{lem:problem_recursion} we will combine Lemma \ref{lem:degree_sequence_step} with Lemma \ref{lem:recursive_counts_for_entire_graph_with_slack} as follows.
\begin{lemma}
	\label{lem:recursive_equivalence_for_entire_graph_with_slack}
	Let $(d_1,\dots,d_n)$ be a non-increasing sequence of positive integers with $n\geq2$.
	There is a graph $G$ with the degree sequence $(d_1,\dots,d_n)$ if and only if there is a graph $G'$ with the degree sequence
	\[
	(d_2-1,\dots,d_n-1,\underbrace{1,\dots,1}_{\mathclap{x\coloneqq n-1-d_1\text{ times}}}),
	\]
	ordered decreasingly, and
	\begin{align*}
		\sub(S_0,G)
		=&\s_0(d_1)+\sub\left(S_0,G'\right)-x,\\
		2\cdot\sub(S_1,G)
		=&\s_1(d_1)+2\cdot\sub\left(S_1,G'\right)+\sub\left(S_{0},G'\right)-2x,\\
		\sub(S_2,G)
		=&\s_2(d_1)+\sub\left(S_2,G'\right)+2\cdot\sub\left(S_{1},G'\right)-x,\\
		\sub(S_j,G)
		=&\s_j(d_1)+\sub\left(S_j,G'\right)+\sub\left(S_{j-1},G'\right)\ \ \ \ \ \text{for $j\geq3$}.
	\end{align*}
    \begin{proof}
        This follows immediately from Lemma \ref{lem:degree_sequence_step} and Lemma \ref{lem:recursive_counts_for_entire_graph_with_slack}.
    \end{proof}
\end{lemma}

\begin{proof}[Proof of Lemma \ref{lem:problem_recursion}]
	First, we assume $\dpf(s_0,\dots,s_\ell,m,x)=1$, so by Definition \ref{def:recursive_solve} there exists integers $d_1\ge\dots\ge d_{s_0}\geq0$ and a graph $G$ with degree sequence
	\[
	(d_1,\dots,d_n)=(d_1,\dots,d_{s_0},\underbrace{1,\dots,1}_{x\text{ times}}),
	\]
	ordered decreasingly, which fulfils
	\begin{align*}
		&\sub(S_0,G)
		=\s_0+x,\\
		2\cdot&\sub(S_1,G)
		=\s_1+x,\\
		&\sub(S_j,G)
		=\s_j\ \ \ \ \ \text{for $j\in[2,\ell]$},\\
		&d_i\leq m\ \ \ \ \ \text{for $i\in[1,s_0]$}.
	\end{align*}
	Let $(d'_1,\dots,d'_{n'})$ be the sequence $(d_1,\dots,d_n)$ after removing all entries of value $0$.
	Obviously, there is a graph $G'$  with the degree sequence $(d'_1,\dots,d'_{n'})$ and
	\begin{align*}
		&\sub(S_0,G')
		=\s_0+x-n+n',\\
		2\cdot&\sub(S_1,G')
		=\s_1+x,\\
		&\sub(S_j,G')
		=\s_j\ \ \ \ \ \text{for $j\in[2,\ell]$},\\
		&d'_i\leq d'_1\ \ \ \ \ \text{for $i\in[1,n']$}.
	\end{align*}
	We also know $n'\geq2$ because otherwise $x=0$ but then $(d_1,...,d_n)=(d_1,0,\dots,0)$ with $d_1\geq1$ (due to $s_1\geq1$) would not be graphic which is a contradiction.
	So, according to Lemma \ref{lem:recursive_equivalence_for_entire_graph_with_slack} there is a graph $G''$ with degree sequence
	\begin{align*}
    (d'_2-1,\dots,d'_{n'}-1,\underbrace{1,\dots,1}_{\mathclap{n'-1-d'_1\text{ times}}}),
	\end{align*}
	ordered decreasingly, for which we have
	\begin{align*}
		&\sub(S_0,G'')
		=\s_0-s_0(d'_1)-n+n'+x+(n'-1-d'_1),\\
		2\cdot&\sub(S_1,G'')
		=\s_1-s_1(d'_1)-\sub(S_0,G'')+x+2(n'-1-d'_1),\\
		&\sub(S_2,G'')
		=\s_2-s_2(d'_1)-2\cdot\sub(S_1,G'')+(n'-1-d'_1),\\
		&\sub(S_j,G'')
		=\s_j-s_j(d'_1)-\sub(S_{j-1},G'')\ \ \ \ \ \text{for $j\in[3,\ell]$}.
	\end{align*}
Note that $G''$ contains at least $x$ vertices of degree 0 which we will prune next.
    Formally, let
	\begin{align*}
    (d''_1,\dots,d''_{n''})\coloneqq(d'_2-1,\dots,d'_{n'-x}-1,\underbrace{1,\dots,1}_{\mathclap{n'-1-d'_1\text{ times}}}).
	\end{align*}
    Obviously, there is a graph $G'''$ with the degree sequence $(d''_1,\dots,d''_{n''})$, ordered decreasingly, for which we have
	\begin{align*}
		&\sub(S_0,G''')
		=\s_0-s_0(d'_1)-n+n'+(n'-1-d'_1),\\
		2\cdot&\sub(S_1,G''')
		=\s_1-s_1(d'_1)-\sub(S_0,G''')+2(n'-1-d'_1),\\
		&\sub(S_2,G''')
		=\s_2-s_2(d'_1)-2\cdot\sub(S_1,G''')+(n'-1-d'_1),\\
		&\sub(S_j,G''')
		=\s_j-s_j(d'_1)-\sub(S_{j-1},G''')\ \ \ \ \ \text{for $j\in[3,\ell]$},\\
		&d''_i\leq d'_1-1\ \ \ \ \ \text{for $i\in[1,n'']$}.
	\end{align*}
	Now, for 
	\begin{align*}
		&s'_0
		\coloneqq s_0-s_0(d'_1)-n+n',\\
		&s'_k
		\coloneqq s_k-s_k(d'_1)-s'_{k-1}\ \ \ \ \ \text{for $k\in[1,\ell]$},\\
		&m'\coloneqq d'_1-1,\\
		&x'\coloneqq n'-1-d'_1
	\end{align*}
	we have $\dpf(s'_0,\dots,s'_\ell,m',x')=1$ because there exists the graph $G'''$ with degree sequence
	\[
	(d''_1,\dots,d''_{n''})=(d''_1,\dots,d''_{s'_0},\underbrace{1,\dots,1}_{x'\text{ times}}),
	\]
	ordered decreasingly, which fulfils
	\begin{align*}
		&\sub(S_0,G''')
		=\s'_0+x',\\
		2\cdot&\sub(S_1,G''')
		=\s'_1+x',\\
		&\sub(S_j,G''')
		=\s'_j\ \ \ \ \ \text{for $j\in[2,\ell]$},\\
		&d''_i\leq m'\ \ \ \ \ \text{for $i\in[1,n'']$},
	\end{align*}
	and with $d=d'_1$ and $n_1=s_0-n+n'$ we have
	\begin{align*}
		&d\in[1,m],\\\displaybreak[2]
		&n_1\in [1,s_0],\\\displaybreak[2]
		&d\leq n_1-1+x,\\\displaybreak[2]
		&s'_0=n_1-1,\\\displaybreak[2]
		&s'_j=s_j-s_j(d)-s'_{j-1} \text{ for $j\in[1,\ell]$},\\\displaybreak[2]
		&m'=d-1,\\\displaybreak[2]
		&x'=x+n_1-1-d,
	\end{align*}
	which is exactly what we had to show for the first direction of the proof.
	
	Since Lemma \ref{lem:recursive_equivalence_for_entire_graph_with_slack} is an equivalence as are the pruning steps in this proof, the other direction follows similarly.
\end{proof}

\end{document}